%% file: Main.tex
\documentclass[11pt,draftcls,onecolumn]{IEEEtran}
\IEEEoverridecommandlockouts

\makeatletter
\NewDocumentCommand{\seteqnum}{o}{%
  \IfValueTF{#1}
    {\textup{\tagform@{#1}}}% <label> supplied
    {\incr@eqnum \print@eqnum}% <label> not supplied
}
\NewCommandCopy{\ltxlabel}{\ltx@label}
\makeatother
\usepackage[cmex10]{amsmath} 
\usepackage{amsfonts,amsthm}
\usepackage{bm}
\usepackage{stmaryrd}
\usepackage{acronym}
\usepackage{amssymb}
\usepackage{dsfont}
\usepackage{subcaption}
\usepackage{graphicx}
\usepackage{paralist}
\usepackage{mdframed}
\usepackage{url}
\usepackage{cite}
\usepackage{hyperref}
\usepackage{flushend}
\usepackage{enumerate}
\usepackage{pdflscape}
\usepackage[dvipsnames]{xcolor}
\usepackage{mathrsfs}
\usepackage{lipsum}
\usepackage{diagbox}
\usepackage{soul}
\usepackage{kantlipsum}
\usepackage{braket}
\usepackage{physics}
\usepackage{stackengine}

\usepackage{color}
\usepackage{framed}
\usepackage{float}

\makeatletter
\def\blfootnote{\xdef\@thefnmark{}\@footnotetext}
\makeatother

\newcommand{\indi}[1]{\ensuremath{\mathds{1}}}

\allowdisplaybreaks
\allowbreak

\input{CommandsAndMacros.tex}
\input{Acronyms.tex}

\begin{document}

\title{From Classical to Quantum Channels: Achieving Positive Covert Rates}

\author{
\IEEEauthorblockN{Hassan ZivariFard,  Xiaodong Wang, and Alexei Ashikhmin}\\
\thanks{H. ZivariFard and X. Wang are with the Department of Electrical Engineering, Columbia University, New York, NY 10027. A. Ashikhmin is with Nokia Bell Labs, Murray Hill, NJ 07974 USA. The work of H. ZivariFard and X. Wang is supported in part by the U.S. Office of Naval Research (ONR) under grant N000142412212. E-mails: \{hz2863, xw2008\}@columbia.edu and alexei.ashikhmin@nokia-bell-labs.com. Part of this work will be presented at the 2026 IEEE Information Theory Workshop~(ITW).
}
}
\maketitle
\date{}

\begin{abstract}
\label{sec:Abstract}
In this paper, we study the conditions under which covert communication at positive rates is feasible over discrete memoryless classical, classical-quantum, and quantum channels. 
For classical point-to-point channels, we show that if the dimension of the channel input probability simplex exceeds that of the channel output probability simplex, equivalently, if the input alphabet has larger cardinality than the output alphabet, then positive covert rates are achievable for certain classes of \acp{DMC}. We further show that allowing the innocent symbol (i.e., the symbol transmitted in the no-communication mode) to be chosen appropriately can enlarge the class of \acp{DMC} for which positive covert rates are achievable. 
We also study covert communication over classical \acp{MAC}, where the additional transmitter effectively enlarges the set of available channel input pairs, and we show that the conditions required to satisfy the covertness constraint are less restrictive for \acp{MAC} than for point-to-point channels. 
We extend these results to classical-quantum \acp{DMC} by showing that if the dimension of the input probability simplex exceeds the affine dimension of the set of output states that can be induced at the channel output, then positive covert rates are achievable for certain classes of classical-quantum \acp{DMC}. 
Finally, for quantum channels, we show that if the innocent state (i.e., the state transmitted in the no-communication mode) is mixed, then the covertness constraint can always be satisfied by a non-trivial input ensemble. Consequently, positive covert rates are achievable whenever the legitimate receiver can distinguish at least two states in a suitable such ensemble.
\end{abstract}

\section{Introduction}
\label{sec:Intro}
Covert communication is an emerging research area with applications such as watermarking, steganography, and data embedding. However, it is well known that for classical point-to-point \acp{DMC}, the number of bits that can be transmitted covertly scales only on the order of the square root of the total number of channel uses, a phenomenon known as the square-root law \cite{Bash13,Bloch16,Wang16}. It has also been shown that the square-root law extends to classical-quantum channels \cite{WangCQ,Bullock25} and quantum channels \cite{Bash_15}. This limitation can be restrictive in many applications, as such scaling may not provide sufficient throughput. 
Positive-rate covert communication has been shown to be achievable in several settings through mechanisms that either modify the effective no-communication distribution or enlarge the set of distributions that can be induced during communication, as summarized below:
\begin{itemize}
    \item The warden has uncertainty about the channel statistics \cite{Deniable_ITW14,Goeckel16,He17,Lee15}: The warden's uncertainty about the channel statistics causes the no-communication output distribution to become a mixture over the possible channel realizations. This averaged no-communication distribution may lie in the convex hull of the output distributions induced during communication, thereby enabling positive covert communication rates. We note that \cite{Goeckel16,He17,Lee15} assume the availability of an unlimited-rate secret key shared between the transmitter and the receiver, whereas \cite{Deniable_ITW14} assumes that the warden's channel is degraded \ac{wrt} the legitimate receiver's channel.

    \item A friendly jammer is present \cite{Sobers17,Shahzad18,Shmuel19,ISIT21,ISIT22,MyDissertation}: the randomness introduced by the friendly jammer creates uncertainty about the statistics of the warden's channel output in both the no-communication and communication modes. In particular, while the no-communication output distribution is subject to the same type of uncertainty described above, the output distribution induced during communication is also randomized by the jammer. This additional uncertainty enables the transmitter to surpass the square-root law and achieve positive covert communication rates. We note that \cite{Sobers17,Shahzad18,Shmuel19} assume the availability of an unlimited-rate secret key shared between the transmitter and the receiver.

   \item Channel state information is available at the transmitter \cite{LeeWang18,Keyless22,Action_Covert,Quantum_Covert_CSI}: The availability of channel state information enlarges the set of output distributions that can be induced at the warden's channel output, thereby enabling positive covert communication rates.

    \item A cooperative user assists the covert communication \cite{ISIT22,ExtendedPaperMAC}: Similarly, the presence of a cooperative user enlarges the set of output distributions that can be induced at the warden's channel output. Consequently, positive covert communication rates may become achievable.

\end{itemize}

\subsection{Our Contributions}
For point-to-point \acp{DMC} without access to any auxiliary resources, such as those listed above, achieving covert communication at positive rates is generally considered impossible due to the square-root law. An exception arises when the innocent symbol, denoted by $x_0$ and transmitted in the no-communication mode, is \emph{redundant}, in the sense that the output distribution it induces at the warden can be replicated by a mixture of the distributions induced by the remaining input symbols \cite{Wang16}. However, to the best of our knowledge, existing work has largely treated this as an exceptional condition and provided almost no systematic characterization of channels that satisfy it. 
In this paper, we investigate covert communication over point-to-point and multiple-access classical, classical-quantum, and quantum channels. A central theme of this work is that covert communication can be understood through the geometry of the channel. Rather than viewing the covertness constraint purely as an information-theoretic condition, we interpret it as a question about whether the innocent output distribution belongs to the convex hull generated by the remaining channel input symbols. This geometric perspective unifies the classical, classical-quantum, and quantum settings and provides simple sufficient conditions for positive-rate covert communication. This paper shows that such channels naturally arise whenever the input alphabet is sufficiently rich relative to the output alphabet. 

For classical point-to-point DMCs, we establish a geometric characterization of the conditions under which the covertness constraint can be satisfied with a non-trivial input distribution. In particular, we show that when the channel input alphabet has larger cardinality than the channel output alphabet, equivalently, when the dimension of the input probability simplex exceeds that of the output probability simplex, positive covert rates are achievable for certain classes of DMCs. Moreover, increasing the cardinality of the input alphabet relative to that of the output alphabet progressively relaxes the conditions required to satisfy the covertness constraint. We further show that appropriately choosing the innocent symbol can enlarge the class of DMCs for which positive covert rates are achievable. In particular, for a binary-output DMC with more than two input symbols, an appropriate choice of the innocent symbol always permits a non-trivial covert input distribution.
As a natural mechanism for enlarging the set of available channel input pairs, we next study covert communication over classical \acp{MAC}. We extend the geometric characterization of the covertness constraint to the multi-user setting and derive conditions for the existence of non-trivial covert input distributions under both independent and joint channel inputs. We show that allowing joint channel inputs can relax the conditions required to satisfy the covertness constraint and can create covert communication opportunities that are unavailable under independent inputs. Thus, the presence of an additional transmitter can itself serve as a resource for achieving positive-rate covert communication.

We then extend the geometric characterization to classical-quantum channels. In particular, for classical-quantum DMCs, we replace the output probability simplex by the affine space generated by the channel output states. We show that if the dimension of the input probability simplex exceeds the affine dimension of the set of output states that can be induced at the channel output, then positive covert rates are achievable for certain classes of classical-quantum DMCs. Therefore, the analysis of the classical point-to-point and \ac{MAC} settings extends naturally to their classical-quantum counterparts. In each case, the condition under which the covertness constraint $\rho_Z=\sigma_0$ can be satisfied coincides with the corresponding classical condition $p_Z=q_0$.

Finally, the fully quantum setting exhibits fundamentally different behavior. We show that whenever the innocent input state is mixed, the covertness constraint can always be satisfied by a non-trivial input ensemble. This reveals a qualitative distinction between classical and quantum covert communication. We also study several known quantum channels for which positive covert rates are achievable even when the innocent input state is pure. 

For classical channels, we characterize positive-rate covert capacity, and for quantum and classical-quantum channels, we derive a general achievable covert rate. Our achievability scheme is based on random coding, pinching methods \cite{QIT_Hayashi}, and channel resolvability. Our results show that allowing stochastic encoding can enlarge the covert capacity relative to deterministic encoding.

\subsection{State of the Art}
Of particular relevance to this paper, covert communication over point-to-point Gaussian channels is studied in \cite{Bash13}, where it is shown that the covert capacity obeys the square-root law when the legitimate terminals share a secret key of unbounded rate. Covert communication over classical point-to-point \acp{DMC} is investigated in \cite{Bloch16,Wang16}, where it is likewise shown that the covert capacity follows the square-root law. 
Moreover, \cite{Wang16} characterizes the positive-rate covert capacity of classical \acp{DMC} in the special case where the legitimate receiver and the warden observe the same channel output and the legitimate terminals share a secret key of unbounded rate. In addition, covert communication over a classical $K$-user \ac{MAC} is investigated in \cite{Cover_K_User_MAC}, where the authors show that the covert capacity obeys the square-root law. Finally, the trade-off between the achievable covert rate and the secret-key rate is studied in \cite{Bounhar26}. Covert communication with positive rates over channels with action-dependent states is studied in \cite{Action_Covert}, where the authors show that the availability of channel state information can enable positive covert rates by enlarging the set of output distributions that can be induced at the channel output.

\subsection{Notation} 
We denote the set of real numbers by $\bbR$. Random variables are denoted by uppercase letters and their realizations by lowercase letters. Sets are denoted by calligraphic letters, the cardinality of a set is denoted by $\card{\cdot}$, and the convex hull of a set is denoted by $\mathrm{conv}\br{\cdot}$. Superscripts denote the dimension of a vector, e.g., $x^n\triangleq(x_1,x_2,\cdots,x_n)$. 
For a countable set $\calX$, the relative entropy between two distributions $p_X$ and $q_X$ is defined as $\bbD(p_X \lVert q_X) \triangleq \sum_{x \in \calX} p_X(x) \log \frac{p_X(x)}{q_X(x)}$ and the total variation between $p_X$ and $q_X$ is defined as $\lVert p_X-q_X\lVert_1=\frac{1}{2}\sum_x |p_X(x)-q_X(x)|$. 
All logarithms are taken to base $2$. The $n$-fold product distribution is denoted by $p_X^\on(x^n) \triangleq \prod_{i=1}^n p_X(x_i)$. 
We denote the set of positive semidefinite operators on a finite-dimensional Hilbert space $\calH$ by $\calP(\calH)$, and the set of quantum states by $\calD(\calH)\triangleq\br{\sigma\in\calP(\calH):\tra[\sigma]=1}$. 
The identity operator on $\calH$ is denoted by $\dsI$. For $\rho, \sigma \in \calD(\calH)$, the trace norm of the operator $\rho$ is defined as $\lVert\rho\rVert_1\triangleq\tra\sbr{\sqrt{\rho^\dagger\rho}}$, and the trace distance between $\rho$ and $\sigma$ is defined as $\lVert \rho - \sigma \rVert_1$.

\subsection{Paper Organization}
The remainder of the paper is organized as follows. Section~\ref{sec:Problem_Defini} introduces the system model. Section~\ref{sec:Classical} studies classical \acp{DMC} and develops the geometric characterization of positive covert rates. 
Section~\ref{sec:MAC} considers classical discrete memoryless \acp{MAC} and \acp{MAC} with degraded message sets.
Section~\ref{sec:Quantum} extends these results to classical-quantum and quantum channels. Section~\ref{sec:Conclusions} concludes the paper.

\section{Problem Statement}
\label{sec:Problem_Defini}
\begin{figure}
\centering
\vspace{-0.2cm}
\includegraphics[height=1.5in]{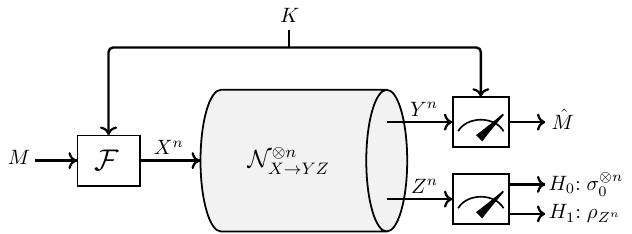}
\caption{Covert Communication over a point-to-point channel}
\label{fig:System_Model}
\end{figure}
\subsection{Quantum Channels}
\label{sec:Defi_Quantum}
Consider a quantum channel $\calN_{X\to YZ}$, which is a completely positive trace-preserving linear map that takes an operator $\rho_X\in\calD(\calH_X)$ to an operator $\rho_{YZ}\in\calD(\calH_Y\otimes\calH_Z)$. The code is formally defined as follows.
\begin{definition}
    \label{defi:Code}
    A $(2^{nR},2^{nR_K},n)$ code for the quantum channel $\calN_{X\to YZ}$ consists of the following:
    \begin{itemize}
        \item a message set $\calM\triangleq\sbr{1\!:\!2^{nR}}$ and a secret key set $\calK\triangleq\sbr{1\!:\!2^{nR_K}}$;
        \item a stochastic encoding map $\calF: \calM \times \calK \to \calD(\calH_{X^n})$ that maps each message $m \in \calM$ and secret key $k \in \calK$ to a channel input state $\rho^{(m,k)}_{X^n} \in \calD(\calH_{X^n})$;
        \item for each $k\in\calK$ a collection of \acp{POVM} $\br{\Lambda_{m}^{(k)}}_{m \in \calM}$, where $\sum_m\Lambda_{m}^{(k)}=\dsI$, acting on $\calH_{Y^n}$, which, given the secret key $k$, induces a decoding rule that maps the channel output state $\rho_{Y^n}^{(m,k)} \in \calD(\calH_{Y^n})$ to an estimate $\hat{m} \in \calM$.
    \end{itemize}
\end{definition}The code is public knowledge, and the objective is to design a code that is both reliable and covert. From Definition~\ref{defi:Code}, for each $(m,k)\in\calM\times\calK$, the channel input and the output of the legitimate receiver~are
\begin{align}
    \rho_{X^n}^{(m,k)}=\calF(m,k),\quad \rho_{Y^n}^{(m,k)}=\tra_{Z^n}\calN_{X\to YZ}^\on\pr{\rho_{X^n}^{(m,k)}}.
    \label{eq:Input_Output}
\end{align}Therefore, from Definition~\ref{defi:Code} and \eqref{eq:Input_Output} we have
\begin{align}
    p_e^{(n)}\triangleq\bbP\br{\hat{M}\ne M}=\frac{1}{\card{\calM}\card{\calK}}\sum_{(m,k)\in\calM\times\calK}\tra\sbr{\pr{\dsI-\Lambda_{m}^{(k)}}\rho_{Y^n}^{(m,k)}}.\label{eq:P_Error}
\end{align}A sequence of codes is reliable if
\begin{align}
    \lim_{n\to\infty}p_e^{(n)}=0.\label{eq:limit_Perror}
\end{align}
When communication is happening, the warden observes the state
\begin{align}
    \rho_{Z^n}\triangleq\frac{1}{\card{\calM}\card{\calK}}\sum_{m\in\calM}\sum_{k\in\calK}\tra_{Y^n}\calN_{X\to YZ}^\on\pr{\rho_{X^n}^{(m,k)}}.\label{eq:comm_Warden}
\end{align}
When communication does not take place, the transmitter always transmits the innocent state $\rho_0\in\calD\pr{\calH_X}$ over the channel; as a result, the warden's output in the no-communication mode~is
\begin{align}
    \sigma_0^\on\triangleq\tra_{Y^n}\calN_{X\to YZ}^\on\pr{\rho_0^\on}.\label{eq:nocomm_Warden}
\end{align}The covertness constraint is defined as
\begin{align}
    \lim_{n\to\infty}\lVert\rho_{Z^n}-\sigma_0^\on\rVert_1=0.\label{eq:Covertness_Const}
\end{align}
\begin{definition}
\label{defi:Code_Capa}
    A rate pair $(R, R_K)$ is said to be achievable for the quantum channel $\calN_{X \to YZ}$ if there exists a sequence of $(2^{nR}, 2^{nR_K}, n)$ codes such that \eqref{eq:limit_Perror} and \eqref{eq:Covertness_Const} are satisfied. The covert capacity region, denoted by $\calC_{\mathrm{C\text{-}Q}}$, is defined as the closure of the set of all achievable rate pairs.
\end{definition}
\subsection{Classical-Quantum and Classical Channels}
An important special case of the problem defined in Section~\ref{sec:Defi_Quantum} corresponds to covert communication over classical-quantum channels. In this case, the channel is denoted by $\pr{\calX,\br{\rho_{YZ}^{(x)}}_{x\in\calX}}$, where $\calX$ is the channel input alphabet and $\rho_{YZ}^{(x)}$ is the density operator associated with input symbol $x \in \calX$. 
The encoder $\calF$ takes as input a message $m \in \calM$ and a secret key $k \in \calK$, and outputs a channel input sequence $X^n(m,k) \in \calX^n$. The decoder is defined analogously to Definition~\ref{defi:Code}.

For classical-quantum channels, the innocent input state $\rho_0$ is replaced by an innocent input symbol $x_0$, but the covertness constraint remains the same as in \eqref{eq:Covertness_Const}. Under the above problem setup, Definitions~\ref{defi:Code} and~\ref{defi:Code_Capa} extend naturally to the setting of classical-quantum channels. In this context, the classical-quantum covert capacity is denoted by $\calC_{\mathrm{C\text{-}CQ}}$.

Similarly, the problem of covert communication over classical channels can be formulated by drawing on the definitions in Section~\ref{sec:Defi_Quantum} for quantum channels and those given above for classical-quantum channels. In this setting, the channel is denoted by $\pr{\calX, W_{YZ|X}, \calY, \calZ}$, where $\calX$ is the input alphabet, $W_{YZ|X}$ is the channel law, and $\calY$ and $\calZ$ are the output alphabets at the legitimate receiver and the warden, respectively. The encoder is defined analogously to that of the classical-quantum case, while the decoder $\Lambda_m^{(k)}$ maps the channel output $Y^n$ to an estimate of the message, denoted by $\hat{m}$.

For classical channels, the innocent input state $\rho_0$ is replaced by an innocent input symbol $x_0$, and $\sigma_0^{\otimes n}$ in~\eqref{eq:nocomm_Warden} corresponds to the distribution induced at the warden's output in the absence of communication. Moreover, in the covertness constraint~\eqref{eq:Covertness_Const}, the trace distance is replaced by the total variation distance. Under this formulation, Definitions~\ref{defi:Code} and~\ref{defi:Code_Capa} extend naturally to classical channels. In this context, the classical covert capacity is denoted by $\calC_{\mathrm{C\text{-}C}}$.

\section{Classical Point-to-Point Channels}
\label{sec:Classical}
In this section, we study covert communication at positive rates over classical point-to-point \acp{DMC}. We begin by establishing the covert capacity for general DMCs, then focus on binary‑output channels, which exhibit several useful structural properties. Finally, we extend these results to channels with arbitrary output alphabets.
\begin{theorem}[Covert Capacity of Classical Channels]
    \label{thm:Capacity_Classical_Sto}
\begin{subequations}\label{eq:Capacity_Classical_All_Sto}
The covert capacity of the classical \ac{DMC} $W_{YZ\lvert X}$,~is
\begin{align}
\calC_{\mathrm{C\text{-}C}} =\bigcup_{p_{UXYZ}\in\calB}\left.\begin{cases}(R,R_K):\\
  R\le\bbI(U;Y),\\
  R_K\ge\sbr{\bbI(U;Z)-\bbI(U;Y)}^+,
\end{cases}\hspace{-3mm}\right\},
\label{eq:Capacity_Classical_Sto}
\end{align}
where
\begin{align}
  \calB \triangleq \left.\begin{cases}p_{UXYZ}:\\
p_{UXYZ}=p_Up_{X|U}W_{YZ\lvert X},\\
p_Z=q_0\triangleq W_{Z|X=x_0},\\
\end{cases}\right\}.\label{eq:Capacity_Classical_S_Sto}
\end{align}
\end{subequations}
\end{theorem}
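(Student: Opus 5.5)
We prove achievability and the converse separately.

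\emph{Achievability.} Fix $p_{UXYZ}\in\calB$ and let $R<\bbI(U;Y)$ and $R_K>\sbr{\bbI(U;Z)-\bbI(U;Y)}^+$. We use the standard secret-key-assisted resolvability code. Generate $2^{n(R+R_K+R')}$ codewords $u^n(m,k,\ell)$ i.i.d.\ according to $p_U^\on$, where $\ell\in\sbr{1:2^{nR'}}$ is a local randomization index. The encoder picks $\ell$ uniformly and sends $x^n$ drawn from $p_{X|U}^\on(\cdot|u^n(m,k,\ell))$. Since the total rate is $R+R_K+R'$, we can choose $R'$ so that
\begin{align}
R+R'&<\bbI(U;Y),\\
R+R_K+R'&>\bbI(U;Z).
\end{align}
This is possible exactly under the stated inequalities, after taking $R'\ge 0$ small.

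Reliability follows from joint typicality decoding with the key $k$ known at the receiver, which requires $R+R'<\bbI(U;Y)$. By the channel resolvability (soft-covering) lemma, averaging over $(m,k,\ell)$ gives an output $p_{Z^n}$ with $\mathbb{E}\lVert p_{Z^n}-p_Z^\on\rVert_1\to0$ whenever $R+R_K+R'>\bbI(U;Z)$. Since $p_Z=q_0$, this is exactly the covertness constraint \eqref{eq:Covertness_Const} with $\sigma_0^\on=q_0^\on$. Derandomizing the codebook, for instance by Markov's inequality applied to both criteria, gives a deterministic codebook with a stochastic encoder.

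\emph{Converse.} Take a sequence of codes with $p_e^{(n)}\to0$ and $\lVert p_{Z^n}-q_0^\on\rVert_1=\epsilon_n\to0$. Define $U_i=(M,K,Y^{i-1})$ or a similar auxiliary, and let $Q$ be uniform on $\sbr{1:n}$ as the time-sharing variable.

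Fano's inequality gives $nR\le \bbI(M;Y^nK)+n\delta_n=\bbI(M;Y^n|K)+n\delta_n$, using the independence of $M$ and $K$. Then $\bbI(M;Y^n|K)\le \bbI(MK;Y^n)\le\sum_i \bbI(U_i;Y_i)$, which yields $R\le\bbI(U;Y)+\delta_n$.

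For the key rate,
\begin{align}
nR_K\ge H(K)\ge \bbI(MK;Z^n)-\bbI(M;Y^n|K)-n\delta_n.
\end{align}
To single-letterize the right-hand side, we use the Csisz\'ar sum identity with $U_i=(M,K,Y^{i-1},Z_{i+1}^n)$, giving $\bbI(MK;Z^n)-\bbI(MK;Y^n)=\sum_i[\bbI(U_i;Z_i)-\bbI(U_i;Y_i)]$ up to terms involving $Z^n$ dependence. This dependence is controlled because $p_{Z^n}$ is close to the i.i.d.\ $q_0^\on$. We therefore expect to lower bound $\bbI(MK;Z^n)\ge\sum_i \bbI(U_i;Z_i)-n\,o(1)$, using $H(Z^n)\ge nH(q_0)-n\,o(1)$ via continuity of entropy in total variation. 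Note that $\epsilon_n\log|\calZ|^n$ is $n\epsilon_n$, which is $o(n)$.

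This entropy-continuity step is the main technical obstacle.

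The covertness constraint also gives the single-letter condition: since the averaged marginal $\frac1n\sum_i p_{Z_i}$ is within $\epsilon_n$ of $q_0$ by monotonicity of total variation, the limiting single-letter distribution satisfies $p_Z=q_0$. The Markov chain $U-X-YZ$ holds per letter. Finally, a cardinality bound on $\calU$ via the support lemma and a compactness/continuity argument let us pass to the limit, so that $p_Z=q_0$ holds exactly and $(R,R_K)$ lies in the closure of the region.
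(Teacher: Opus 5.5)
Your achievability is sound, and in one respect more complete than the paper's. The paper indexes codewords by $(k,m)$ only, which gives $R<\bbI(U;Y)$ and $R+R_K>\bbI(U;Z)$. Your private index $\ell$ yields $R_K>\sbr{\bbI(U;Z)-\bbI(U;Y)}^+$ for every $R<\bbI(U;Y)$ directly. Note that $R'$ must lie in $\pr{\max\{0,\bbI(U;Z)-R-R_K\},\,\bbI(U;Y)-R}$, so it need not be small. Your converse follows the paper's route: the same auxiliary $U_i=(M,K,Y^{i-1},Z_{i+1}^n)$, Fano, the Csisz\'ar sum identity, time sharing, and near-independence of $Z^n$. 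The paper gets that last ingredient from Cuff's Lemma~VI.3, which is an entropy-continuity bound of the kind you propose. Your cardinality-plus-compactness limit is also a legitimate way to supply the continuity step the paper omits. The problem is how you single-letterize the key bound.

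The inequality $\bbI(MK;Z^n)\ge\sum_i\bbI(U_i;Z_i)-n\,o(1)$ is false when $U_i$ contains $Y^{i-1}$, even under covertness. Take negligible message and key rates, and let the encoder's private randomness have rate strictly between $\bbI(X;Z)>0$ and $\bbI(X;Y)$ for some $p_X$ with $p_Z=q_0$. Then $Z^n$ is nearly i.i.d.\ $q_0$ and $\bbI(MK;Z^n)\le n(R+R_K)$ is negligible. Yet for a linear fraction of indices, $Y^{i-1}$ already decodes the private index and hence $X_i$, so $\sum_i\bbI(U_i;Z_i)$ grows linearly in $n$. The correct accounting is
\begin{align*}
\bbI(MK;Z^n)-\bbI(MK;Y^n)=\sum_{i=1}^n\big[\bbI(U_i;Z_i)-\bbI(U_i;Y_i)\big]-\sum_{i=1}^n\bbI(Z_{i+1}^n;Z_i)+\sum_{i=1}^n\bbI(Y^{i-1};Y_i).
\end{align*}
Here the Csisz\'ar sum identity cancels the cross terms $\bbI(Y^{i-1};Z_i|M,K,Z_{i+1}^n)$ and $\bbI(Z_{i+1}^n;Y_i|M,K,Y^{i-1})$, and the last sum is nonnegative. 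So the $Y^{i-1}$ contribution to $\bbI(U_i;Z_i)$ is paid for on the $Y$ side, not by $\bbI(MK;Z^n)$.

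Only $\sum_i\bbI(Z_{i+1}^n;Z_i)=\sum_i\bbH(Z_i)-\bbH(Z^n)$ needs covertness. Your bound $\bbH(Z^n)\ge n\bbH(q_0)-n\,o(1)$ is only half of what is required. You also need $\sum_i\bbH(Z_i)\le n\bbH(q_0)+n\,o(1)$, which holds because each $p_{Z_i}$ is within $\epsilon_n$ of $q_0$ by monotonicity. The same per-letter continuity is needed for a term you omitted. With $U=(U_Q,Q)$, you have $\frac1n\sum_i\bbI(U_i;Z_i)=\bbI(U;Z)-\bbI(Q;Z_Q)$, so you must show $\bbI(Q;Z_Q)=\bbH(Z_Q)-\frac1n\sum_i\bbH(Z_i)\to0$. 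The analogous $\bbI(Q;Y_Q)$ enters with the favorable sign, so it needs no control. With these two repairs, your argument is the paper's.
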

The achievability proof of Theorem~\ref{thm:Capacity_Classical_Sto} follows from that of Theorem~\ref{thm:Achievable_Quantum} in Appendix~\ref{proof:thm:Achievable_Quantum} and the converse proof is provided in Appendix~\ref{proof:thm:capacity}.

\begin{remark}[Deterministic Encoder vs.\ Stochastic Encoder]
\label{remark:Sto_Deter}
The covert capacity in Theorem~\ref{thm:Capacity_Classical_Sto} is based on stochastic encoding, which is implemented via channel prefixing~\cite{BCC:IT78}. Deterministic encoding is recovered as the special case $U=X$. Consequently, the covert capacity achieved with stochastic encoding is always at least as large as that achieved with deterministic encoding, and can be strictly larger. This gain is due to the additional design freedom provided by the auxiliary random variable $U$, which can reduce the term $\bbI(X;Z)-\bbI(X;Y)$, in a manner analogous to the classical broadcast channel with confidential messages~\cite{BCC:IT78}.
\end{remark}
\begin{remark}
    Setting $Y=Z$, the capacity region in Theorem~\ref{thm:Capacity_Classical_Sto} recovers the capacity region in \cite[Proposition~1]{Wang16}.
\end{remark}
In the sequel, for simplicity, we restrict our attention to the deterministic encoding scheme, which is achieved by setting $U=X$ in Theorem~\ref{thm:Capacity_Classical_Sto}. Nevertheless, the results can be extended to the stochastic encoding scheme presented in Theorem~\ref{thm:Capacity_Classical_Sto}.

It is well known that the covert capacity of point-to-point channels obeys the square-root law \cite{Bash13,Bloch16,Wang16}. This behavior arises because, for many channels, the covertness constraint $p_Z = q_0$ in \eqref{eq:Capacity_Classical_S_Sto} of Theorem~\ref{thm:Capacity_Classical_Sto} cannot be satisfied; a notable example is the \ac{AWGN} channel \cite{Bash13}. Consider a classical communication channel $\pr{\calX, W_{YZ|X}, \calY, \calZ}$, where $\calX$ denotes the channel input alphabet and $\calY$ and $\calZ$ are the output alphabets observed by the legitimate receiver and the warden, respectively. The covertness constraint $p_Z = q_0$ can be satisfied, and hence a positive covert communication rate is achievable, if the output distribution observed by the warden under the innocent input $X=x_0$, denoted by $W_{Z|X=x_0}$, can be expressed as a convex combination of the output distributions corresponding to the remaining channel inputs \cite{Wang16}, i.e.,
\begin{align}
    W_{Z|X=x_0}\in\mathrm{conv}\br{W_{Z|X=x}:x\in\calX\backslash\br{x_0}}.\nonumber
\end{align} In what follows, we show that for point-to-point \acp{DMC}, the constraint $p_Z = q_0$ can be satisfied under crtain conditions, and consequently, a positive covert communication rate becomes achievable. In particular, when the channel input alphabet is larger than the output alphabet, the constraint $p_Z = q_0$ can be satisfied for certain classes of \acp{DMC}, thereby enabling strictly positive covert communication rates. Moreover, increasing the input alphabet relative to the output alphabet progressively enlarges the class of \acp{DMC} for which the covertness constraint can be satisfied.

We begin by studying \acp{DMC} with a binary-output alphabet, which exhibits interesting properties, and then extend the analysis to channels with arbitrary output-alphabet sizes. 
\subsection{Discrete-Input and Binary-Output Warden Channels}
\label{sec:DIS_Inp_BO}
Consider an $m$-input binary-output \ac{DMC} with parameters $\epsilon_i\in\sbr{0,\!1}$, for $i\in\sbr{0\!:\!m-1}$, as illustrated in Fig.~\ref{fig:GIBO}, and assume that the innocent symbol is $x_0=j$, for some $j\in\sbr{0\!:\!m-1}$. 
\begin{theorem}
\label{thm:General_DMC}
    For the binary-output \ac{DMC} depicted in Fig.~\ref{fig:GIBO}, assuming that $x_0=j$, for some $j\in\sbr{0\!:\!m-1}$, there exists a non-degenerate input distribution satisfying the covertness constraint $p_Z=q_0$, if and only if
\begin{align}
    \min\limits_{\substack{i\in[0:m-1]\\i\ne j}}\epsilon_i\le\epsilon_j\le\max\limits_{\substack{i\in[0:m-1]\\i\ne j}}\epsilon_i.\label{eq:Coverness_GIBO_Final}
\end{align}
Furthermore, when $m\geq 3$ and the innocent symbol $x_0$ can be freely chosen, the covertness constraint $p_Z=q_0$ can always be satisfied.
\end{theorem}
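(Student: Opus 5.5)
We take Fig.~\ref{fig:GIBO} to mean that input $i$ produces $Z=1$ with probability $\epsilon_i$, i.e., $W_{Z|X=i}=(1-\epsilon_i,\epsilon_i)$. Since $\calZ$ is binary, a distribution on $\calZ$ is fixed by the mass it puts on $Z=1$. Hence for any input distribution $p_X$ the constraint $p_Z=q_0$ is a single scalar equation,
\begin{align}
\sum_{i=0}^{m-1}p_X(i)\,\epsilon_i=\epsilon_j .\nonumber
\end{align}
We call $p_X$ non-degenerate if $p_X\neq\delta_{j}$, i.e., $p_X(j)<1$. Write $p_X=\lambda\,\delta_j+(1-\lambda)\,\tilde p$ with $\lambda\in[0,1)$ and $\tilde p$ a distribution supported on $[0\!:\!m-1]\setminus\{j\}$. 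The constraint then becomes $(1-\lambda)\big(\sum_{i\ne j}\tilde p(i)\epsilon_i-\epsilon_j\big)=0$. Because $1-\lambda>0$, this is equivalent to $\sum_{i\ne j}\tilde p(i)\epsilon_i=\epsilon_j$.

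The plan is therefore to reduce the theorem to a one-dimensional convex hull statement: a non-degenerate covert distribution exists if and only if $\epsilon_j\in\mathrm{conv}\{\epsilon_i:i\ne j\}$. This is the geometric criterion $W_{Z|X=x_0}\in\mathrm{conv}\{W_{Z|X=x}:x\ne x_0\}$ recalled before the theorem, specialized to the one-dimensional output simplex. The convex hull of finitely many reals is the closed interval $[\min_{i\ne j}\epsilon_i,\max_{i\ne j}\epsilon_i]$, which gives \eqref{eq:Coverness_GIBO_Final}.

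\emph{Sufficiency.} Let $a\in\arg\min_{i\ne j}\epsilon_i$ and $b\in\arg\max_{i\ne j}\epsilon_i$.
\begin{itemize}
\item If $\epsilon_a=\epsilon_b$, then $\epsilon_j=\epsilon_a$, so $\tilde p=\delta_a$ works.
\item Otherwise set $\tilde p(b)=\theta$ and $\tilde p(a)=1-\theta$ with $\theta=(\epsilon_j-\epsilon_a)/(\epsilon_b-\epsilon_a)\in[0,1]$.
\end{itemize}
In either case any $\lambda<1$ gives a non-degenerate $p_X$. The choice $\lambda=0$ already suffices, and any other $\lambda\in[0,1)$ also works.

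\emph{Necessity.} If $\epsilon_j<\min_{i\ne j}\epsilon_i$, then every $\tilde p$ gives $\sum_{i\ne j}\tilde p(i)\epsilon_i\ge\min_{i\ne j}\epsilon_i>\epsilon_j$, so the equation has no solution. The case $\epsilon_j>\max_{i\ne j}\epsilon_i$ is symmetric.

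\emph{Free choice of the innocent symbol.} For the second claim, with $m\ge3$, sort the parameters as $\epsilon_{(0)}\le\epsilon_{(1)}\le\dots\le\epsilon_{(m-1)}$ and take $x_0=j$ to be the input attaining a middle value $\epsilon_{(1)}$. Because $m\ge3$, the remaining inputs include both the one attaining $\epsilon_{(0)}\le\epsilon_j$ and the one attaining $\epsilon_{(m-1)}\ge\epsilon_j$. Hence \eqref{eq:Coverness_GIBO_Final} holds and the first part applies. Ties cause no trouble, since they only require the non-strict inequalities. The same choice fails for $m=2$, because the single remaining input cannot bracket $\epsilon_j$ unless $\epsilon_0=\epsilon_1$.

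The only delicate point is the definition of ``non-degenerate.'' If it were meant to require full support or positive mutual information, the statement could fail, for example when all $\epsilon_i$ are equal. I will adopt the interpretation $p_X\neq\delta_{x_0}$, which matches the convex-hull criterion. Everything else is elementary.
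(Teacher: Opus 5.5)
Your proof is correct and follows the paper's argument. Like the paper, you reduce $p_Z=q_0$ to the scalar equation $\sum_i p_i\epsilon_i=\epsilon_j$ (labeling $\epsilon_i$ by $Z=1$ rather than $Z=0$ is immaterial), divide out $1-p_j$ to get a convex combination of $\{\epsilon_i\}_{i\ne j}$, identify that hull with $[\min_{i\ne j}\epsilon_i,\max_{i\ne j}\epsilon_i]$, and for $m\ge3$ pick a median-type symbol as $x_0$.

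One nuance: the paper defines non-degenerate as $\max_i p_i<1$, not $p_X(j)<1$. In your sufficiency step you should therefore take $\lambda\in(0,1)$ rather than $\lambda=0$, since $\lambda=0$ can yield a point mass such as $\delta_a$. Your necessity argument already covers the paper's notion, because $\max_i p_i<1$ implies $p_j<1$.
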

\begin{proof}
Let $p_X(i)=p_i$ for $i\in\sbr{0\!:\!m-1}$, where $p_i\in[0,\!1]$ and $\sum_{i=0}^{m-1}p_i=1$. Then,
\begin{align}
p_Z(0)&=\sum_{i=0}^{m-1}p_i\epsilon_i.\label{eq:Possible_Set_Dist_Output_GIBO}
\end{align}Therefore, by varying the parameter $p_i$, $i\in\sbr{0\!:\!m-1}$, we can characterize the set of all distributions that can be induced at the channel output, as illustrated in Fig.~\ref{fig:GIBO_Dist}. Equivalently, the set of all probability distributions that can be induced at the channel 
output is given by
\begin{equation}
    \Delta_{\mathrm{GIBO}}\triangleq\mathrm{conv}\left\{(1-\epsilon_i,\epsilon_i)\!:i\in[0:m-1]\right\}.\label{eq:Conv_GIBO}
\end{equation}We also have, $q_0(0)=\epsilon_j$. For a non-degenerate input distribution, i.e., $\max_i p_i<1$, the covertness constraint $p_Z=q_0$ is feasible if and only if
\begin{figure*}[t!]
    \centering
    \hspace{-5mm}\begin{subfigure}[t]{0.45\textwidth}
        \centering
        \includegraphics[height=2.0in]{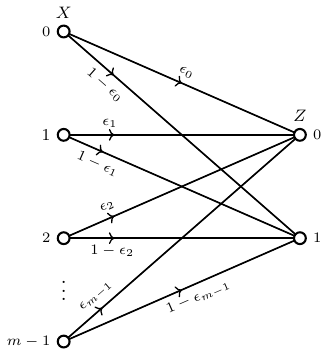}
        \caption{General Binary-Output \ac{DMC}}
        \label{fig:GIBO}
    \end{subfigure}%
    ~ 
    \hspace{-5mm}\begin{subfigure}[t]{0.55\textwidth}
        \centering
        \includegraphics[height=2.2in]{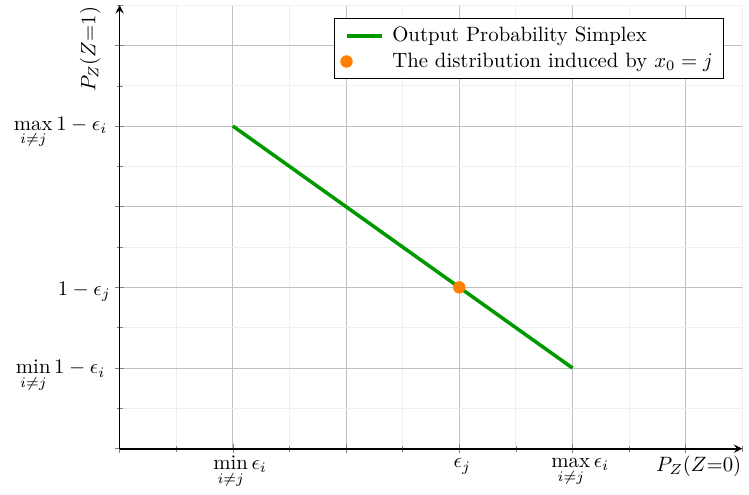}
        \caption{the set of all possible induced output distributions}
        \label{fig:GIBO_Dist}
    \end{subfigure}
    \caption{A general binary-output \ac{DMC} with $W_{Z|X}(0|i)=\epsilon_i$ and $W_{Z|X}(1|i)=1-\epsilon_i$, for $i\in[0\!:\!m-1]$, and the set of all output distributions induced by varying the channel input distribution}
    \label{fig:GIBO_Example}
    \vspace{-0.7cm}
\end{figure*}
\begin{align}
     p_Z(0)=q_0(0)&\Leftrightarrow\sum_{i=0}^{m-1}p_i\epsilon_i=\epsilon_j,\nonumber\\
    &\Leftrightarrow\sum_{\substack{i=0\\ i\ne j}}^{m-1}p_i\epsilon_i = (1-p_j)\epsilon_j\nonumber\\
    &\mathop{\Leftrightarrow}\limits^{(a)}\sum_{\substack{i=0\\ i\ne j}}^{m-1}\frac{p_i\epsilon_i}{1-p_j} = \epsilon_j\nonumber\\
    &\mathop{\Leftrightarrow}\limits^{(b)}\sum_{\substack{i=0\\ i\ne j}}^{m-1}\lambda_i\epsilon_i = \epsilon_j,\label{eq:Coverness_GIBO}
\end{align}where $(a)$ follows since $p_j<1$, and $(b)$ follows by defining $\lambda_i\triangleq\frac{p_i}{1-p_j}$, for $i\in[0\!:\!m-1]$ and $i\ne j$, 
which satisfy $\lambda_i\ge0$, and $\sum\limits_{\substack{i=0\\ i\ne j}}^{m-1}\lambda_i=1$. Note that the \ac{LHS} of \eqref{eq:Coverness_GIBO} is by definition a convex combination of $\epsilon_i$, for $i\in[0\!:\!m-1]$ and $i\ne j$, which is the interval
\begin{align}
    \sbr{\min\limits_{\substack{i\in[0:m-1]\\i\ne j}}\epsilon_i,\max\limits_{\substack{i\in[0:m-1]\\i\ne j}}\epsilon_i}.\nonumber
\end{align}
Therefore, the necessary and sufficient condition to satisfy the covertness constraint $p_Z=q_0$ is~\eqref{eq:Coverness_GIBO_Final}.

To see that, for any $m\geq 3$, there always exists an innocent symbol for which the covertness constraint in \eqref{eq:Coverness_GIBO_Final} is satisfied, consider the following. Let $\epsilon_j$ be a median of the $m$ values $\{\epsilon_i\}_{i=0}^{m-1}$. Since $m\geq3$, after excluding $\epsilon_j$, there exists at least one $\epsilon_i$ satisfying $\epsilon_i\leq\epsilon_j$ and at least one $\epsilon_i$ satisfying $\epsilon_i\geq\epsilon_j$. Consequently, \eqref{eq:Coverness_GIBO_Final} and hence the covertness constraint can be satisfied by choosing $x_0=j$.
\end{proof}
\begin{remark}[Intuition]
\label{rem:innocent_symbol_bin}
    Intuitively, when the channel output is binary and the input alphabet is strictly larger than the output alphabet, the output distributions induced by the input symbols cannot all be extreme points of the output probability simplex. Consequently, at least one induced output distribution lies in the convex hull of the others, which enables non-trivial covert communication.

    As an illustrative example, let $m=3$, with $\epsilon_0 = 0.1$, $\epsilon_1 = 0.2$, and $\epsilon_2 = 0.3$. In this case, the constraint in~\eqref{eq:Coverness_GIBO_Final} is not satisfied when the innocent symbol $x_0=0$. However, if we instead select $x_0 = 1$ as the innocent symbol, the constraint in~\eqref{eq:Coverness_GIBO_Final} will be satisfied.
\end{remark}
When the innocent symbol $x_0$ is fixed, the condition in \eqref{eq:Coverness_GIBO_Final} can become less restrictive as the cardinality $m$ of the channel input alphabet increases, i.e., as additional input symbols are added. To illustrate this, we first consider binary-input, binary-output \acp{DMC} and then successively increase the cardinality of the input alphabet by considering ternary-input, binary-output and quaternary-input, binary-output \acp{DMC}. These examples demonstrate that enlarging the channel input alphabet can only enlarge the set of output distributions that can be induced at the warden and therefore can only relax the covertness constraint. Finally, we consider the extreme case in which the channel input alphabet is continuous while the channel output alphabet remains binary, and show that the covertness constraint can always be satisfied in this case.
\begin{figure*}[t!]
    \centering
    \hspace{-5mm}\begin{subfigure}[t]{0.45\textwidth}
        \centering
        \includegraphics[height=2.0in]{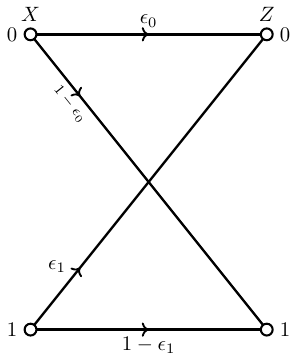}
        \caption{Binary-input, binary-output \ac{DMC}}
        \label{fig:BIBO}
    \end{subfigure}
    ~ 
    \hspace{-5mm}\begin{subfigure}[t]{0.55\textwidth}
        \centering
        \includegraphics[height=2.2in]{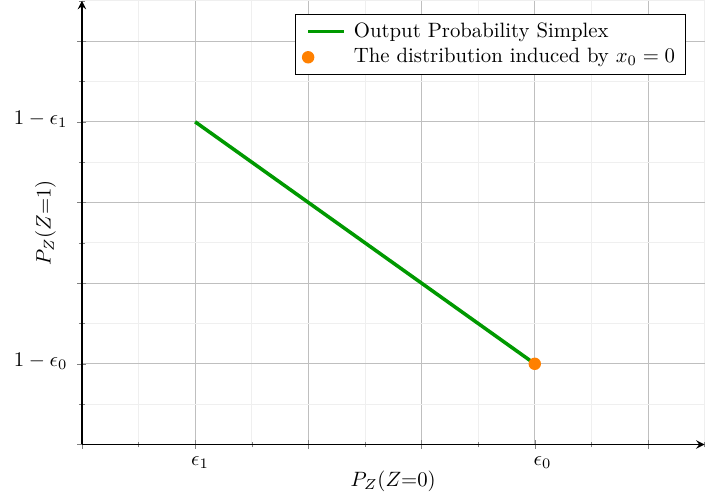}
        \caption{the set of all output distributions induced by the channel}
        \label{fig:BIBO_Dist}
    \end{subfigure}
    \caption{A general binary-input, binary-output \ac{DMC} and the set of all output distributions induced by varying the channel input distribution}
    \label{fig:BIBO_Example}
    \vspace{-0.7cm}
\end{figure*}

\subsubsection{Binary-Input, Binary-Output Warden \texorpdfstring{\acp{DMC}}{DMCs}}
When the channel input alphabet of the \ac{DMC} illustrated in Fig.~\ref{fig:GIBO} is $\calX=\br{0,1}$ and the innocent input symbol is $x_0=0$, as illustrated in Fig.~\ref{fig:BIBO}, the covertness constraint in \eqref{eq:Coverness_GIBO_Final} reduces to $\epsilon_1\le\epsilon_0\le\epsilon_1$, which can be satisfied only if
\begin{align}
\epsilon_0=\epsilon_1.\label{eq:Cov_BIBO_Final}
\end{align}Thus, the covertness constraint imposes a highly restrictive condition on the channel. Under this condition, the two input symbols $0$ and $1$ induce the same output distribution, $\bigl(p_Z(0), p_Z(1)\bigr) = (\epsilon_0,1-\epsilon_0)$, and hence the warden's output distribution is identical in the communication and no-communication modes. Consequently, communication is always covert, regardless of the input distribution, and the covert capacity coincides with the Shannon capacity of the legitimate receiver's channel. As illustrated in Fig.~\ref{fig:BIBO_Dist}, the set of all probability distributions that can be induced at the channel output in \eqref{eq:Conv_GIBO} therefore reduces to
\begin{equation}
    \Delta_{\mathrm{BIBO}}
    \triangleq
    \mathrm{conv}\left\{
        (\epsilon_0,1-\epsilon_0),
        (\epsilon_1,1-\epsilon_1)
    \right\}.
    \label{eq:Conv_BIBO}
\end{equation}

\begin{figure*}[t!]
    \centering
    \hspace{-5mm}\begin{subfigure}[t]{0.45\textwidth}
        \centering
        \includegraphics[height=2.0in]{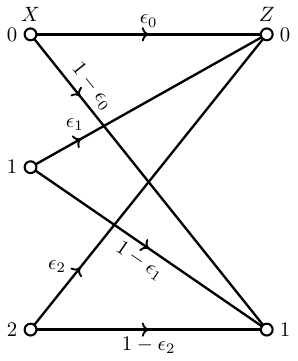}
        \caption{A ternary-input, binary-output}
        \label{fig:TIBO}
    \end{subfigure}
    ~ 
    \hspace{-5mm}\begin{subfigure}[t]{0.55\textwidth}
        \centering
        \includegraphics[height=2.2in]{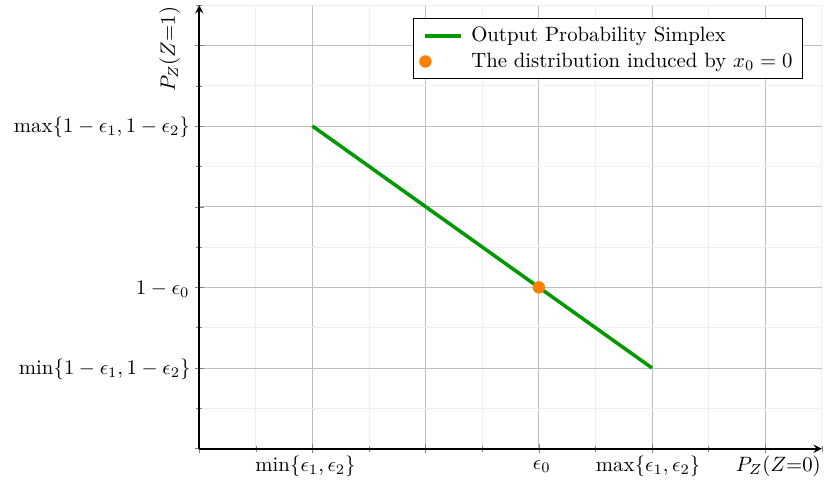}
        \caption{the set of all output distributions induced by the channel}
        \label{fig:TIBO_Dist}
    \end{subfigure}
    \caption{A general ternary-input, binary-output \ac{DMC} and the set of all output distributions induced by varying the channel input distribution}
    \label{fig:TIBO_Example}
    \vspace{-0.7cm}
\end{figure*}
\subsubsection{Ternary-Input, Binary-Output Warden \texorpdfstring{\acp{DMC}}{DMCs}}
We now enlarge the input alphabet of the binary-input, binary-output \ac{DMC} considered in the previous section to a ternary input alphabet $\calX=\br{0,1,2}$, as illustrated in Fig.~\ref{fig:TIBO}, while still 
assuming that the innocent symbol is $x_0=0$. 
In this case, the covertness constraint in \eqref{eq:Coverness_GIBO_Final} reduces to 
\begin{align}
\min\br{\epsilon_1,\epsilon_2}\le\epsilon_0\le\max\br{\epsilon_1,\epsilon_2},\label{eq:Cov_TIBO_Final}
\end{align}
Note that the constraint in \eqref{eq:Cov_TIBO_Final} is less restrictive than the constraint in \eqref{eq:Cov_BIBO_Final}. In this case, the set of all probability distributions that can be induced at the channel output in \eqref{eq:Conv_GIBO} reduces to
\begin{equation}
    \Delta_{\mathrm{TIBO}}
    \triangleq
    \mathrm{conv}\left\{
        (\epsilon_0,1-\epsilon_0),
        (\epsilon_1,1-\epsilon_1),
        (\epsilon_2,1-\epsilon_2)
    \right\}.
    \label{eq:Conv_TIBO}
\end{equation}
For fixed $\epsilon_0$ and $\epsilon_1$, the set $\Delta_{\mathrm{TIBO}}$ contains the binary-input counterpart $\Delta_{\mathrm{BIBO}}$ in \eqref{eq:Conv_BIBO}, that is, $\Delta_{\mathrm{BIBO}}\subseteq\Delta_{\mathrm{TIBO}}$. Consequently, the output distribution in the no-communication mode, namely $(\epsilon_0,1-\epsilon_0)$, may lie in the interior of $\Delta_{\mathrm{TIBO}}$, thereby enabling the achievability of a positive covert communication rate.
\begin{corollary}
For a ternary-input binary-output warden channel, 
Theorem~\ref{thm:Capacity_Classical_Sto} yields the following covert capacity region for deterministic encoding
\begin{align}
\calC_{\mathrm{C\text{-}C}} =\bigcup_{\substack{p_0,p_1\ge0:\\ p_0+p_1\le1,\\ p_1\epsilon_1+(1-p_0-p_1)\epsilon_2
    =(1-p_0)\epsilon_0}}\left.\begin{cases}(R,R_K):\\
  R\le\bbI(X;Y),\\
  R_K\ge\sbr{\bbI(X;Z)-\bbI(X;Y)}^+,
\end{cases}\hspace{-3mm}\right\},\nonumber
\end{align}where $\calX=\{0,1,2\}, p_X(0)=p_0, p_X(1)=p_1,p_X(2)=1-p_0-p_1$, and the condition $p_1\epsilon_1+(1-p_0-p_1)\epsilon_2=(1-p_0)\epsilon_0$ is the covertness constraint $p_Z=q_0$. This capacity region can be computed once the legitimate receiver's channel $W_{Y|X}$ is specified.
\end{corollary}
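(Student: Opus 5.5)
This corollary is a direct specialization of Theorem~\ref{thm:Capacity_Classical_Sto}, so the plan is to instantiate that theorem with $U=X$, as in Remark~\ref{remark:Sto_Deter}, and make the set $\calB$ explicit for the ternary-input, binary-output warden channel. With $U=X$, the Markov structure $p_{UXYZ}=p_Up_{X|U}W_{YZ|X}$ collapses to $p_{XYZ}=p_XW_{YZ|X}$. The rate constraints in \eqref{eq:Capacity_Classical_Sto} then become $R\le\bbI(X;Y)$ and $R_K\ge[\bbI(X;Z)-\bbI(X;Y)]^+$. The union over $\calB$ therefore becomes a union over input distributions $p_X$ on $\calX=\{0,1,2\}$ that satisfy $p_Z=q_0$.

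First, parametrize $p_X$ by $(p_0,p_1)$ with $p_0,p_1\ge0$ and $p_0+p_1\le1$, setting $p_X(2)=1-p_0-p_1$; this is exactly the probability simplex on $\calX$.

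Second, rewrite the constraint $p_Z=q_0$. Because $\calZ$ is binary, matching $p_Z(0)$ suffices, since $p_Z(1)$ is then fixed by normalization. Using \eqref{eq:Possible_Set_Dist_Output_GIBO}, $p_Z(0)=p_0\epsilon_0+p_1\epsilon_1+(1-p_0-p_1)\epsilon_2$, and with $x_0=0$ we have $q_0(0)=\epsilon_0$. Equating the two and moving $p_0\epsilon_0$ to the right-hand side gives $p_1\epsilon_1+(1-p_0-p_1)\epsilon_2=(1-p_0)\epsilon_0$, which is precisely the constraint in the statement.

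There is no real obstacle. The only points needing care are two equivalences. The first is that equality of one coordinate of binary distributions implies equality of the full distributions. The second is that the degenerate choice $p_0=1$ is allowed in the union: it satisfies the constraint trivially and contributes only the rate $R=0$, so including it is harmless.

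Finally, note that the region depends on $W_{Y|X}$ only through $\bbI(X;Y)$ and $\bbI(X;Z)$, which is why it can be evaluated once the legitimate channel is specified.
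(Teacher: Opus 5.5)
Your proposal is correct and follows the paper's own (implicit) derivation. You specialize Theorem~\ref{thm:Capacity_Classical_Sto} to $U=X$ as in Remark~\ref{remark:Sto_Deter} and parametrize $p_X$ by $(p_0,p_1)$. Because $\calZ$ is binary, you then reduce $p_Z=q_0$ to $p_Z(0)=\epsilon_0$ via \eqref{eq:Possible_Set_Dist_Output_GIBO} with $x_0=0$, which rearranges to the stated constraint. Like the paper, you take from Remark~\ref{remark:Sto_Deter} that the $U=X$ region is the deterministic-encoding capacity rather than proving a separate converse for deterministic encoders, which matches how the corollary is framed.
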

\subsubsection{Quaternary-Input, Binary-Output Warden \texorpdfstring{\acp{DMC}}{DMCs}}Further enlarging the channel input alphabet continues to relax the conditions under which a positive covert rate can be achieved. For example, we consider expanding the input alphabet of the ternary-input, binary-output \ac{DMC} channel depicted in  Fig.~\ref{fig:TIBO_Example} and consider a binary-output \ac{DMC} with an input alphabet of size four, as depicted in Fig.~\ref{fig:QIBO_Example} and assume that the innocent symbol is still $x_0=0$. 
For this channel, the covertness constraint in \eqref{eq:Coverness_GIBO_Final} reduces~to
\begin{align}
    \min\br{\epsilon_1,\epsilon_2,\epsilon_3}\le\epsilon_0\le\max\br{\epsilon_1,\epsilon_2,\epsilon_3}.\label{eq:Coverness_QIBO_Final}
\end{align}Note that, for fixed $\epsilon_0,\epsilon_1$, and $\epsilon_2$, the constraint in \eqref{eq:Coverness_QIBO_Final} is less restrictive than the constraints in \eqref{eq:Cov_BIBO_Final} and \eqref{eq:Cov_TIBO_Final}. In this case, the set of all probability distributions that can be induced at the channel output in \eqref{eq:Conv_GIBO} reduces to
\begin{equation}
    \Delta_{\mathrm{QIBO}}
    \triangleq
    \mathrm{conv}\left\{
        (\epsilon_0,1-\epsilon_0),
        (\epsilon_1,1-\epsilon_1),
        (\epsilon_2,1-\epsilon_2),
        (\epsilon_3,1-\epsilon_3)
    \right\}.
    \label{eq:Conv_QIBO}
\end{equation}
\begin{figure*}[t!]
    \centering
    \hspace{-5mm}\begin{subfigure}[t]{0.45\textwidth}
        \centering
        \includegraphics[height=2.0in]{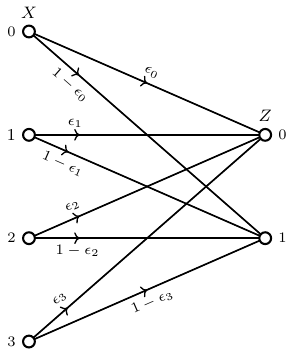}
        \caption{A quaternary-input, binary-output}
        \label{fig:QIBO}
    \end{subfigure}%
    ~ 
    \hspace{-5mm}\begin{subfigure}[t]{0.55\textwidth}
        \centering
        \includegraphics[height=2.2in]{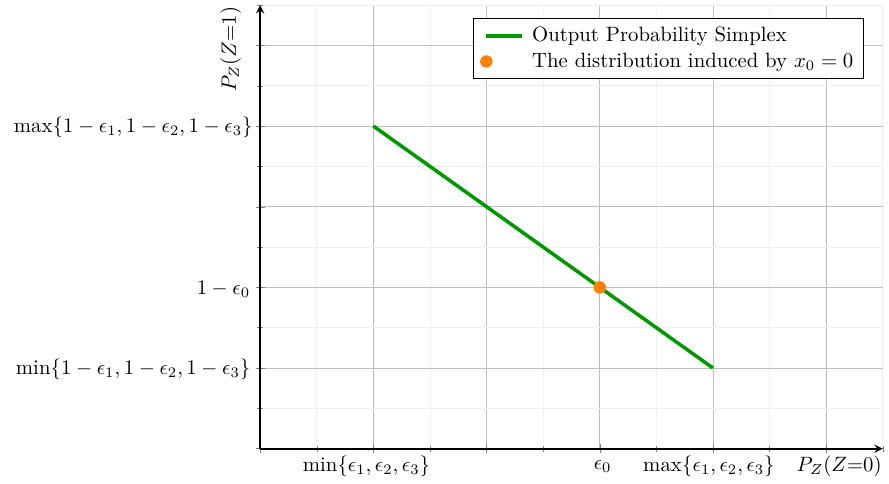}
        \caption{the set of all output distributions induced by the channel.}
        \label{fig:QIBO_Dist}
    \end{subfigure}
    \caption{A general quaternary-input, binary-output \ac{DMC} and the set of all output distributions induced by varying the channel input distribution}
    \label{fig:QIBO_Example}
    \vspace{-0.7cm}
\end{figure*}
As before, for fixed $\epsilon_0$, $\epsilon_1$, and $\epsilon_2$, the set $\Delta_{\mathrm{QIBO}}$ contains the ternary-input counterpart $\Delta_{\mathrm{TIBO}}$ in \eqref{eq:Conv_TIBO}, that is, $\Delta_{\mathrm{BIBO}}\subseteq\Delta_{\mathrm{TIBO}}\subseteq\Delta_{\mathrm{QIBO}}$. Consequently, the output distribution in the no-communication mode, namely $(\epsilon_0,1-\epsilon_0)$, may lie in the interior of $\Delta_{\mathrm{QIBO}}$, thereby enabling the achievability of a positive covert communication rate.
\subsection{Continuous-Input Binary-Output Warden Channels}
\label{sec:Continupus_Binary}
We now consider an extreme case in which the channel input alphabet is continuous and the channel output alphabet is binary, and show that the covertness constraint $p_Z=q_0$ in \eqref{eq:Capacity_Classical_S_Sto} of Theorem~\ref{thm:Capacity_Classical_Sto} can always be satisfied.

In particular, suppose that the warden observes
\begin{subequations}\label{eq:threshold_BO_Channel}
\begin{equation}
Z=\indi{1}_{\br{X+N\ge\tau}},
\label{eq:threshold_channel}
\end{equation}
where $N\sim\mathcal{N}(0,1)$ is independent of $X$ and $\tau\in\mathbb{R}$ is a fixed threshold. Since the no-communication mode corresponds to no transmission in this channel, we take the innocent symbol to be $x_0=0$ \cite{Bash13}. The input is continuous and subject to the average-power constraint
\begin{equation}
\mathbb{E}[X^2]\leq T,
\label{eq:power_constraint}
\end{equation}
where $T>0$. 
\end{subequations}
\begin{theorem}
    For the channel described in \eqref{eq:threshold_BO_Channel}, the covertness constraint $p_Z=q_0$ can always be satisfied.
\end{theorem}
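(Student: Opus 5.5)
The proof mirrors the binary-output argument of Theorem~\ref{thm:General_DMC}, now with a continuum of input symbols. For an input $X=x$, the warden sees $Z=1$ with probability
\begin{align}
\epsilon(x)\triangleq\bbP\{x+N\ge\tau\}=Q(\tau-x),\nonumber
\end{align}
where $Q$ is the Gaussian tail function. The innocent symbol $x_0=0$ therefore induces $q_0(1)=Q(\tau)\in(0,1)$. Under any input distribution $p_X$, the warden's output satisfies $p_Z(1)=\mathbb{E}[Q(\tau-X)]$. The covertness constraint $p_Z=q_0$ thus reduces to the scalar equation
\begin{align}
\mathbb{E}[Q(\tau-X)]=Q(\tau).\nonumber
\end{align}
As in \eqref{eq:Coverness_GIBO}, the map $x\mapsto Q(\tau-x)$ is strictly increasing and continuous. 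Hence any $x_1>0$ gives $\epsilon(x_1)>\epsilon(0)$, and any $x_2<0$ gives $\epsilon(x_2)<\epsilon(0)$. Condition \eqref{eq:Coverness_GIBO_Final} therefore holds automatically, whatever the value of $\tau$.

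The construction is explicit. Take the two-point input $X\in\{x_2,x_1\}$ with $x_2<0<x_1$, and choose
\begin{align}
p_X(x_1)=\lambda\triangleq\frac{Q(\tau)-Q(\tau-x_2)}{Q(\tau-x_1)-Q(\tau-x_2)}\in(0,1).\nonumber
\end{align}
This choice gives $\lambda\,\epsilon(x_1)+(1-\lambda)\,\epsilon(x_2)=\epsilon(0)$, so $p_Z=q_0$. The distribution is non-degenerate, since it has two distinct support points, neither equal to $x_0$. One could also put some mass on $0$ itself, which only rescales $\lambda$.

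The only remaining issue, and the main (mild) obstacle, is the power constraint \eqref{eq:power_constraint}. We need
\begin{align}
\lambda x_1^2+(1-\lambda)x_2^2\le T\nonumber
\end{align}
for an arbitrary $T>0$. The plan is to take a symmetric small pair $x_1=\delta$, $x_2=-\delta$. Then $\mathbb{E}[X^2]=\delta^2$ exactly, independent of $\lambda$, so choosing $\delta\le\sqrt{T}$ suffices. A strict inequality $Q(\tau+\delta)<Q(\tau)<Q(\tau-\delta)$ holds for every $\delta>0$, because $Q$ is strictly decreasing with a positive Gaussian density everywhere. Hence $\lambda\in(0,1)$ for every $\delta>0$.

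Finally, this gives a strictly positive covert rate as long as the legitimate channel can distinguish $\pm\delta$, which for instance holds if $Y$ is a noisy version of $X$. Therefore the set $\calB$ in Theorem~\ref{thm:Capacity_Classical_Sto} contains a non-trivial element for every $\tau\in\bbR$ and $T>0$, which proves the claim.
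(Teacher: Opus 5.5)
Your proposal is correct and follows essentially the same route as the paper. The paper also writes $\bbP(Z=1\mid X=x)=\Phi(x-\tau)$ (your $Q(\tau-x)$), uses its strict monotonicity to mix the symmetric pair $\pm\epsilon$ with a weight in $(0,1)$, and observes that $\mathbb{E}[X^2]=\epsilon^2$ regardless of that weight, so any $\epsilon<\sqrt{T}$ satisfies the power constraint.
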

\begin{proof}
For a given input $x$, the probability that the warden observes $Z=1$ is
\begin{align}
\Psi(x)
&\triangleq \bbP(Z=1\mid X=x) \nonumber\\
&=\bbP(X+N>\tau\mid X=x) \nonumber\\
&=\bbP(N>\tau-x) \nonumber\\
&=\Phi(x-\tau),
\label{eq:psi_x}
\end{align}
where $\Phi(t)\triangleq\int_{-\infty}^t\frac{1}{\sqrt{2\pi}}\exp\pr{-\frac{u^2}{2}}du$ denotes the cumulative distribution function of a standard Gaussian random variable. Hence, $W_{Z|X=x}
=
\bigl(1-\Psi(x),\Psi(x)\bigr)$ and $p_Z
=
\bigl(1-\bbE\sbr{\Psi(X)},\bbE\sbr{\Psi(X)}\bigr)$, where $\Psi(x)=\Phi(x-\tau)$. 
In particular, when the innocent symbol $x_0=0$ is transmitted, $q_0=W_{Z|X=0}
=
\bigl(1-\Psi(0),\Psi(0)\bigr)$. Therefore, since $Z$ is binary, the covertness constraint $p_Z=q_0$ is equivalent to $\bbE\sbr{\Psi(X)}=\Psi(0)$. 
Since $\Phi(\cdot)$ is strictly increasing, $\Psi(x)=\Phi(x-\tau)$ is also strictly increasing. Therefore, for any $\epsilon>0$, $\Psi(-\epsilon)<\Psi(0)<\Psi(\epsilon)$. 
It follows that there exists a unique $p_\epsilon\in(0,1)$ such that $p_\epsilon \Psi(\epsilon)
+(1-p_\epsilon)\Psi(-\epsilon)
=\Psi(0)$, where
\begin{equation}
p_\epsilon
=
\frac{\Psi(0)-\Psi(-\epsilon)}
{\Psi(\epsilon)-\Psi(-\epsilon)}.
\nonumber
\end{equation}
Thus, $\pr{1-\Psi(0),\Psi(0)}$ lies in the convex hull of
$\br{\pr{1-\Psi(-\epsilon),\Psi(-\epsilon)},\pr{1-\Psi(\epsilon),\Psi(\epsilon)}}$. Consider the input distribution
\begin{equation}
X=
\begin{cases}
\epsilon, & \text{with probability }p_\epsilon,\\
-\epsilon, & \text{with probability }1-p_\epsilon.
\end{cases}
\label{eq:two_point_distribution}
\end{equation}
The resulting output distribution at the warden satisfies
\begin{align}
p_Z(1)
&=p_\epsilon \bbP(Z=1\mid X=\epsilon)
+(1-p_\epsilon)\bbP(Z=1\mid X=-\epsilon) \nonumber\\
&=p_\epsilon \Psi(\epsilon)
+(1-p_\epsilon)\Psi(-\epsilon) \nonumber\\
&=\Psi(0).
\label{eq:covertness_exact}
\end{align}
Since $Z$ is binary, \eqref{eq:covertness_exact} implies $p_Z=W_{Z|X=0}$. 
Thus, the input distribution in \eqref{eq:two_point_distribution} induces the same distribution at the warden as the innocent symbol. 
Moreover, because both possible input symbols have magnitude $\epsilon$,
\begin{align}
\mathbb{E}[X^2]
&=p_\epsilon\epsilon^2
+(1-p_\epsilon)\epsilon^2 \nonumber\\
&=\epsilon^2.\nonumber
\end{align}
For any $T>0$, we can choose $0<\epsilon<\sqrt{T}$, which gives $\mathbb{E}[X^2]=\epsilon^2<T$. Consequently, for every positive power constraint $T>0$, there exists a non-degenerate input distribution satisfying the power constraint and inducing the same output distribution at the warden as the innocent symbol. 
\end{proof}
\begin{corollary}
    Assuming $Y=Z$, the covert capacity of the channel described in \eqref{eq:threshold_BO_Channel}, is
    \begin{align}
\calC_{\mathrm{C\text{-}C}}
&=
\bbH_b\bigl(\Phi(-\tau)\bigr)
-
\inf_{\substack{
p_X:\,\mathbb{E}[\Phi(X-\tau)]=\Phi(-\tau)\\
\mathbb{E}[X^2]\leq T
}}
\mathbb{E}\left[
\bbH_b\bigl(\Phi(X-\tau)\bigr)
\right],
\label{eq:CC_capacity_1}
\end{align}
where $\bbH_b(u)
\triangleq
-u\log_2u-(1-u)\log_2(1-u)$
denotes the binary entropy function.
\end{corollary}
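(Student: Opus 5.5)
The plan is to specialize Theorem~\ref{thm:Capacity_Classical_Sto} to $Y=Z$ and then evaluate the resulting expression. With $Y=Z$ we have $\bbI(U;Z)-\bbI(U;Y)=0$, so no secret key is needed, and the constraint on $R_K$ becomes vacuous. The covert capacity is therefore $\sup \bbI(U;Z)$ over $p_Up_{X|U}$ satisfying $p_Z=q_0$ together with the power constraint $\bbE[X^2]\le T$.

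Theorem~\ref{thm:Capacity_Classical_Sto} is stated for a discrete input and has no cost constraint. I would extend it to the continuous input with the cost constraint by the standard route:
\begin{itemize}
\item[(i)] In the achievability part, apply the random-coding and resolvability argument to finitely supported input distributions, and add a typicality check on the cost.
\item[(ii)] In the converse of Appendix~\ref{proof:thm:capacity}, use the time-sharing variable together with concavity, so that the single-letter distribution inherits $\bbE[X^2]\le T$.
\end{itemize}

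Next I would remove the auxiliary variable. By the Markov chain $U\to X\to Z$, the data-processing inequality gives $\bbI(U;Z)\le\bbI(X;Z)$. Setting $U=X$ achieves this bound and leaves the constraints $p_Z=q_0$ and $\bbE[X^2]\le T$ unchanged, so the supremum can be restricted to $U=X$.

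Then I would compute $\bbI(X;Z)=\bbH(Z)-\bbH(Z|X)$ as follows:
\begin{itemize}
\item Under the covertness constraint, $p_Z=q_0=(1-\Psi(0),\Psi(0))$ with $\Psi(0)=\Phi(-\tau)$. Hence $\bbH(Z)=\bbH_b(\Phi(-\tau))$ is the same for every feasible $p_X$.
\item Since $Z$ given $X=x$ is Bernoulli with parameter $\Psi(x)=\Phi(x-\tau)$, as in \eqref{eq:psi_x}, we get $\bbH(Z|X)=\bbE[\bbH_b(\Phi(X-\tau))]$.
\item Because $Z$ is binary, the constraint $p_Z=q_0$ is equivalent to $\bbE[\Phi(X-\tau)]=\Phi(-\tau)$.
\end{itemize}
Maximizing $\bbI(X;Z)$ is therefore the same as minimizing $\bbE[\bbH_b(\Phi(X-\tau))]$ over this feasible set, which gives \eqref{eq:CC_capacity_1}. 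The preceding theorem shows the feasible set is nonempty (for example the two-point distribution \eqref{eq:two_point_distribution}), so the infimum is finite and the expression is meaningful.

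The main obstacle is the rigorous passage from the discrete theorem to a continuous input with a cost constraint, in particular showing that the supremum over general $p_X$ equals the limit over finitely supported ones. For this I would use a Carathéodory-type argument. The relevant quantities are linear in $p_X$: the constraint $\bbE[\Phi(X-\tau)]$, the cost $\bbE[X^2]$, and the objective term $\bbE[\bbH_b(\cdot)]$. So any feasible $p_X$ can be replaced by a distribution supported on at most four points with the same values of all three, and the discrete theorem then applies directly. The output being binary makes this reduction clean.
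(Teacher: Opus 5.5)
Your proposal is correct and follows the paper's route: specialize Theorem~\ref{thm:Capacity_Classical_Sto} to $Y=Z$, so the key constraint is vacuous; note that covertness pins $\bbH(Y)=\bbH_b(\Phi(-\tau))$; and write $\bbH(Y|X)=\bbE[\bbH_b(\Phi(X-\tau))]$. The paper simply takes $U=X$ and invokes the discrete theorem under the power constraint without comment, so your data-processing reduction to $U=X$ and your Carath\'eodory passage to finitely supported inputs are extra rigor it does not supply (two small corrections: the cost constraint passes to the time-shared input by linearity rather than concavity, and for continuous $X$ the omitted continuity-at-zero step of the converse relies on the tightness that $\bbE[X^2]\le T$ provides).
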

\begin{proof}
Recall that $p_Z=\bigl(1-\bbE\sbr{\Psi(X)},\bbE\sbr{\Psi(X)}\bigr)$, hence, the covertness constraint becomes $\mathbb{E}[\Phi(X-\tau)]=\Phi(-\tau)$. Also, since $Y=Z$, the covertness constraint fixes the output distribution to $p_Y=W_{Z|X=0}=\bigl(1-\Phi(-\tau),\Phi(-\tau)\bigr)$. Under the average-power constraint \eqref{eq:power_constraint}, Theorem~\ref{thm:Capacity_Classical_Sto} gives
\begin{align}
\calC_{\mathrm{C\text{-}C}}
&=
\sup_{\substack{
p_X:\,\mathbb{E}[\Phi(X-\tau)]=\Phi(-\tau)\\
\mathbb{E}[X^2]\leq T
}}
\left[
\bbH(Y)-\bbH(Y|X)
\right] \nonumber\\
&=
\bbH_b\bigl(\Phi(-\tau)\bigr)
-
\inf_{\substack{
p_X:\,\mathbb{E}[\Phi(X-\tau)]=\Phi(-\tau)\\
\mathbb{E}[X^2]\leq T
}}
\mathbb{E}\left[
\bbH_b\bigl(\Phi(X-\tau)\bigr)
\right],
\label{eq:CC_capacity}
\end{align}
where the first equality follows since $p_Y(1)=\Phi(-\tau)$, and the second equality follows from $W_{Y|X=x}(1)=\Phi(x-\tau)$.
\end{proof}
For example, when $x_0=0$, $T=5$, and $\tau=1$, consider the two-point input
distribution
\begin{equation}
X=
\begin{cases}
-1.87447, & \text{with probability }0.842308,\\
+3.59714, & \text{with probability }0.157692.
\end{cases}\nonumber
\end{equation}For this distribution, $\bbE\sbr{X^2}\approx5$ and $\bbE\sbr{\Phi(X-1)}=\Phi(-1)$ therefore both the power and the covertness constraints are satisfied. For this case we have $\Phi(-1)=0.1587\Rightarrow\bbH_b\pr{0.1587}\approx0.6311$ and $\bbH_b(Y|X)=0.842308h2\Phi(-2.87447)+0.157692h2\Phi(2.59714)\approx0.024508$ and therefore $\calC_{\mathrm{C\text{-}C}}\ge0.606575$ and since $Y=Z$ no secret key is needed.
\begin{remark}[Comparison with \cite{Lee15}]
\label{rem:Lee_Work}
The example presented in this subsection shows that when the warden employs a threshold detector whose threshold is known to the transmitter, a positive covert rate can be achieved for every positive average-power constraint. This result is consistent with the results of \cite{Lee15}, where a positive communication rate is shown to be achievable when the warden employs an energy detector and has uncertainty about the noise variance of its channel. However, the underlying mechanisms are different. In \cite{Lee15}, positive-rate communication is enabled by the warden's uncertainty about its noise power, whereas in our setting, positive-rate covert communication is enabled by the ability to choose a non-degenerate input distribution that induces the same output distribution at the warden as the innocent symbol.
\end{remark}
\subsection{Extension to Warden \texorpdfstring{\acp{DMC}}{DMCs} with Arbitrary Output Alphabet}
In this section, we generalize the analysis to \acp{DMC} with an arbitrary warden output alphabet $\calZ$ of cardinality $\card{\calZ}$. 
\begin{lemma}
\label{lemma:Enlarging_Channel_Input_Alphabet}
Let $(\calX_1, W_{Z|X}^{(1)}, \calZ)$ and $(\calX_2, W_{Z|X}^{(2)}, \calZ)$ be two \acp{DMC} with the same output alphabet $\calZ$, where 
$\calX_1 \subseteq \calX_2$. Assume that for each $x \in \calX_1$ we have $W_{Z|X=x}^{(2)} = W_{Z|X=x}^{(1)}$. 
Let $\Delta_1$ and $\Delta_2$ denote the sets of all output distributions induced by varying the input distributions over $\calX_1$ and $\calX_2$, respectively. Then $\Delta_1 \subseteq \Delta_2$.
\end{lemma}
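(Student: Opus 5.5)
The plan is to prove the inclusion $\Delta_1\subseteq\Delta_2$ directly, by extending input distributions by zero. The first step is to write both sets explicitly:
\begin{align}
\Delta_k=\br{\sum_{x\in\calX_k}p_X(x)W^{(k)}_{Z|X=x}: p_X\ \text{a probability distribution on } \calX_k},\quad k\in\{1,2\}.\nonumber
\end{align}
Equivalently, $\Delta_k=\mathrm{conv}\br{W^{(k)}_{Z|X=x}:x\in\calX_k}$, in the same way as \eqref{eq:Conv_GIBO} was derived for the binary-output case.

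Next, take an arbitrary element $p_Z\in\Delta_1$ and fix a distribution $p^{(1)}_X$ on $\calX_1$ that induces it. Define $p^{(2)}_X$ on $\calX_2$ by
\begin{align}
p^{(2)}_X(x)=\begin{cases}p^{(1)}_X(x), & x\in\calX_1,\\ 0, & x\in\calX_2\backslash\calX_1.\end{cases}\nonumber
\end{align}
This is a valid distribution because $\calX_1\subseteq\calX_2$: its entries are nonnegative and sum to one. The output distribution it induces through $W^{(2)}_{Z|X}$ is
\begin{align}
\sum_{x\in\calX_2}p^{(2)}_X(x)W^{(2)}_{Z|X=x}=\sum_{x\in\calX_1}p^{(1)}_X(x)W^{(2)}_{Z|X=x}=\sum_{x\in\calX_1}p^{(1)}_X(x)W^{(1)}_{Z|X=x}=p_Z.\nonumber
\end{align}
The first equality drops the terms that carry zero mass. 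The second uses the hypothesis that $W^{(2)}_{Z|X=x}=W^{(1)}_{Z|X=x}$ for every $x\in\calX_1$. Hence $p_Z\in\Delta_2$, and since $p_Z$ was arbitrary, $\Delta_1\subseteq\Delta_2$.

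There is no real obstacle. The only points needing care are that the extended distribution is well-defined, which follows from the inclusion $\calX_1\subseteq\calX_2$, and that the two channels agree on $\calX_1$. The same argument also works in the convex-hull formulation: the generating set for $\Delta_1$ is contained in the generating set for $\Delta_2$, and the convex hull is monotone under set inclusion.
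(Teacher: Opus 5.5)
Your proof is correct and follows the paper's argument: extend the input distribution on $\calX_1$ by zero to $\calX_2$ and use that the two channels agree on $\calX_1$. You invoke that agreement hypothesis explicitly, whereas the paper uses it only implicitly, and your closing convex-hull remark matches the paper's observation that $\Delta_1$ is the convex hull of a subset of the points generating $\Delta_2$.
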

\begin{proof}
Any output distribution in $\Delta_1$ can be written as $p_Z(z) = \sum_{x \in \calX_1} p_X(x) W_{Z|X}^{(1)}(z|x)$,  
for some input distribution $p_X$ supported on $\calX_1$. Since $\calX_1 \subseteq \calX_2$, the same $p_X$ can be viewed as a valid input distribution over $\calX_2$ by assigning zero probability to symbols in $\calX_2 \setminus \calX_1$. Hence, the same output distribution belongs to $\Delta_2$, which implies $\Delta_1 \subseteq \Delta_2$. Equivalently, $\Delta_1$ is the convex hull of a subset of the points generating~$\Delta_2$. 
\end{proof}

Lemma~\ref{lemma:Enlarging_Channel_Input_Alphabet} shows that enlarging the channel input alphabet expands the set of distributions that can be induced at the warden's output. Consequently, as additional non-innocent symbols are introduced, the convex hull of the corresponding output distributions grows and may eventually include the distribution induced on the warden's channel output observations in the no-communication mode, thereby allowing the covertness constraint to be satisfied. Intuitively, when a probability simplex is mapped through a \ac{DMC} to another probability simplex of lower dimension, the images of the input extreme points (i.e., the conditional output distributions induced by input symbols) may no longer all correspond to extreme points of the resulting output simplex. Consequently, some of these output distributions may lie in the interior of the convex hull generated by the others. In particular, the output distribution corresponding to the innocent symbol may lie in the interior of this convex set, in which case the covertness constraint $p_Z = q_0$ can be satisfied.

In particular, this expansion process may reach a terminal regime in which the convex hull of the output distributions induced by non-innocent symbols coincides
with the entire probability simplex $\Delta^{\card{\calZ}-1}$.
The following lemma characterizes this regime and provides sufficient conditions under which the covertness constraint $p_Z = q_0$ can be satisfied for any choice of innocent symbol.

\begin{lemma}[Channel-Level Sufficient Condition to Satisfy Covertness Constraint]
\label{lemma:General_Sufficient_Covertness}
Consider a \ac{DMC} $(\calX, W_{Z|X}, \calZ)$ with finite output alphabet $\calZ$ and channel input alphabet $\calX$. Let $x_0\in\calX$ denote the innocent symbol, and define $v_x\triangleq W_{Z|X=x}\in\bbR^{\card{\calZ}}$. 
If the set of output distributions induced by non-innocent symbols satisfies
\begin{align}
\mathrm{conv}\bigl\{v_x : x\in\calX\setminus\{x_0\}\bigr\}
= \Delta^{|\calZ|-1},\nonumber
\end{align}
then the covertness constraint $p_Z = q_0$ can be satisfied for \emph{any} choice of innocent symbol $x_0$. 
\end{lemma}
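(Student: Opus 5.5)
The plan is to show directly that $q_0=v_{x_0}$ is a convex combination of the non-innocent output distributions, and then to turn that combination into an explicit non-degenerate input distribution. Since $v_{x_0}=W_{Z|X=x_0}$ is a probability distribution on $\calZ$, it lies in $\Delta^{|\calZ|-1}$. By hypothesis $\Delta^{|\calZ|-1}=\mathrm{conv}\{v_x:x\in\calX\setminus\{x_0\}\}$, so there exist weights $\lambda_x\ge0$, $x\ne x_0$, with $\sum_{x\ne x_0}\lambda_x=1$ and $\sum_{x\ne x_0}\lambda_x v_x=v_{x_0}$. By Carath\'eodory we may take finitely many nonzero weights, at most $|\calZ|$ of them, which matters if $\calX$ is infinite.

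Next, define $p_X(x)\triangleq(1-p_0)\lambda_x$ for $x\ne x_0$ and $p_X(x_0)\triangleq p_0$ for any $p_0\in[0,1)$; for instance $p_0=0$. This reverses steps $(a)$--$(b)$ in the proof of Theorem~\ref{thm:General_DMC}. Then
\begin{align}
p_Z=\sum_x p_X(x)v_x=p_0v_{x_0}+(1-p_0)\sum_{x\ne x_0}\lambda_xv_x=v_{x_0}=q_0.\nonumber
\end{align}
This input distribution is non-degenerate, since $p_X(x_0)<1$ and its remaining mass sits on non-innocent symbols. It therefore satisfies the covertness constraint in $\calB$ of Theorem~\ref{thm:Capacity_Classical_Sto}.

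Finally, the ``any choice of $x_0$'' clause needs care. The hypothesis is stated relative to a particular $x_0$. The argument above uses only that the non-innocent outputs cover the whole simplex, and the conclusion is then immediate because every $v_{x_0}$ is in the simplex. So I would phrase this part as: whichever symbol is designated innocent, provided the remaining symbols' outputs have convex hull equal to $\Delta^{|\calZ|-1}$, the construction applies. Equivalently, if the hull of $\{v_x\}$ over $\calX\setminus\{x\}$ is the full simplex for every $x$ (for example, because each vertex of the simplex is realized by at least two input symbols), the conclusion holds uniformly. The main obstacle is only this interpretive point, together with the Carath\'eodory/finiteness remark for infinite $\calX$; the rest is routine.
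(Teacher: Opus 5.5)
Your proposal is correct and follows the paper's argument: the paper's proof is the one-liner $v_{x_0}\in\Delta^{|\calZ|-1}=\mathrm{conv}\{v_x:x\in\calX\setminus\{x_0\}\}$, and you make the convex weights explicit as an input distribution; your reading of the ``any choice of $x_0$'' clause is also the sensible one. One small correction: take $p_0\in(0,1)$ rather than $p_0=0$. If $v_{x_0}$ is a vertex of the simplex attained by only one non-innocent symbol $x$, then the only convex representation is $\lambda_x=1$, and $p_0=0$ gives the point mass at $x$, which is degenerate in the paper's sense $\max_i p_i<1$.
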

\begin{proof}
   The proof follows immediately from the fact that $v_{x_0}\in\Delta^{|\calZ|-1}$ and $\mathrm{conv}\bigl\{v_x : x\in\calX\setminus\{x_0\}\bigr\}= \Delta^{|\calZ|-1}$.
\end{proof}

\begin{remark}[Conditions to Achieve a Positive Covert Communication Rate]
Lemma~\ref{lemma:General_Sufficient_Covertness} provides a sufficient condition for satisfying the covertness
constraint $p_Z=q_0$. However, satisfying the covertness constraint alone
does not guarantee a positive covert communication rate. A necessary
condition for a positive rate is that the legitimate receiver can
distinguish at least two channel inputs; in particular, there must exist
some $x_i\in\calX$ such that $W_{Y|X=x_i}\neq W_{Y|X=x_0}$. 
Indeed, if $W_{Y|X=x}=W_{Y|X=x_0}$, for all $x\in\calX$, then $Y$ is independent of $X$ for every input distribution, and hence $I(X;Y)=0$. Thus, although the geometry of the warden's channel determines
whether the covertness constraint can be satisfied, a positive covert
communication rate additionally requires that the legitimate receiver's
channel provide a nonzero information rate.
\end{remark}

\begin{figure}[t!]
    \centering
    \includegraphics[height=2.0in]{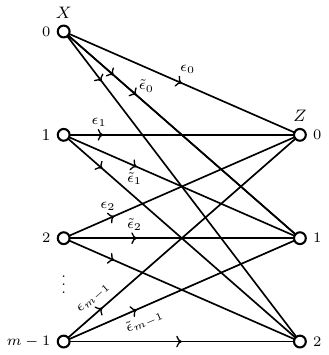}
    \caption{A general ternary-output \ac{DMC}}
    \label{fig:mO_Example}
    \vspace{-0.7cm}
\end{figure}
The preceding results show that, in general, the feasibility of the
covertness constraint is governed by the geometry of the convex hull
of the output distributions induced by the non-innocent input symbols.
In particular, enlarging the input alphabet can enlarge this convex
hull and may eventually make it possible to reproduce the output
distribution induced by the innocent symbol. Lemma~\ref{lemma:General_Sufficient_Covertness}
identifies an extreme case in which the convex hull of the
non-innocent output distributions coincides with the entire output
probability simplex, in which case covertness can be achieved for any
choice of innocent symbol.

We next illustrate these geometric phenomena for the simplest
non-binary-output alphabet, namely $\card{\calZ}=3$. In this case,
the output distributions belong to the two-dimensional simplex
$\Delta^2$, allowing us to visualize how the convex hull of the
non-innocent output distributions evolves as the channel input
alphabet is enlarged. In particular, we consider examples in which
adding a single input symbol changes the covertness constraint from
infeasible to feasible, as well as a continuous-input channel for
which the set of induced output distributions forms a curve in
$\Delta^2$.
\subsubsection{Discrete-Input Ternary-Output Warden Channels}
The geometric characterization developed above becomes particularly
transparent when the warden's output alphabet is ternary. In this
case, each input symbol $x\in\calX$ induces a probability vector $v_x=
\bigl(
W_{Z|X}(0|x),
W_{Z|X}(1|x),
W_{Z|X}(2|x)
\bigr)
\in\Delta^2$, and the set of output distributions induced by varying the input
distribution is the convex hull of these points in the two-dimensional simplex $\Delta^2$. 
This two-dimensional geometry allows us to illustrate explicitly the
effect of enlarging the channel input alphabet on the feasibility of
the covertness constraint.

Consider a \ac{DMC} with input alphabet $\calX$, where $\abs{\calX}=m$, and output alphabet $\calZ=\br{0,1,2}$, where $W_{Z|X}(0|i)=\epsilon_i$, $W_{Z|X}(1|i)=\tilde{\epsilon}_i$, and $W_{Z|X}(2|i)=1-\epsilon_i-\tilde{\epsilon}_i$, for $i\in[0\!:\!m-1]$ and $\epsilon_i,\tilde{\epsilon}_i\in[0,\!1]$. See Fig.~\ref{fig:mO_Example}. Assume that the innocent symbol is $x_0=j$ with $j\in[0\!:\!m-1]$. Define
% \begin{subequations}\label{eq:QITO_Vs}
\begin{align}
v_i &\triangleq W_{Z|X=i}
      = \bigl(\epsilon_i,\tilde{\epsilon}_i,1-\epsilon_i-\tilde{\epsilon}_i\bigr), \label{eq:QITO_Vi}
\end{align}
Each $v_i$ is a probability vector in the simplex $\Delta^2 \triangleq \left\{ (p_0,p_1,p_2)\in\mathbb{R}_+^3 : p_0+p_1+p_2=1 \right\}$. For this channel, the covertness constraint $p_Z=q_0$ can be satisfied by a non-degenerate input distribution if and only~if
\begin{align}
    v_j \in \operatorname{conv}\big\{v_i : i\in[0\!:\!m-1]\setminus\!\br{j}\!\big\},
    \label{eq:Conv_mITO}
\end{align}that is, if $v_j$ lies in the convex hull of the output distributions induced by non-innocent symbols. 
\begin{remark}[Designing the Innocent Symbol]
\label{rem:innocent_symbol_gen}
If the innocent symbol can be chosen freely and $\abs{\calZ}\geq 3$,
then, unlike the binary-output case stated in
Theorem~\ref{thm:General_DMC}, having
$\abs{\calX}>\abs{\calZ}$ alone does not guarantee that there exists an
innocent symbol for which the covertness constraint in
\eqref{eq:Conv_mITO} can be satisfied. Nevertheless, appropriately
choosing the innocent symbol may facilitate satisfying the covertness
constraint for some channels.

In particular, if there exists an input symbol $x_j\in\calX$ such that \eqref{eq:Conv_mITO} holds, then choosing $x_0=x_j$ allows the output distribution induced by the innocent symbol to be reproduced by a non-trivial input distribution.
Thus, choosing such an input symbol as the innocent symbol can satisfy the covertness constraint.
\end{remark}
When the innocent symbol $x_0$ is fixed, the condition in \eqref{eq:Conv_mITO} can become less restrictive as the cardinality $m$ of the channel input alphabet increases, i.e., as additional input symbols are introduced. To illustrate this, we consider quaternary-input, ternary-output and quinary-input, ternary-output \acp{DMC} in the following.
\subsubsection{Quaternary- and Quinary-Input, Ternary-Output Warden \texorpdfstring{\acp{DMC}}{DMCs}}
When the channel input alphabet of the \ac{DMC} illustrated in Fig.~\ref{fig:mO_Example} is $\calX=\br{0,1,2,3}$ and the innocent symbol is $x_0=0$. 
The covertness constraint $p_Z = q_0$ in \eqref{eq:Conv_mITO} reduces to
\begin{align}
    v_0&\in\mathrm{conv}\br{v_1,v_2,v_3}.
    \label{eq:Conv_QITO}
\end{align}

Next, we enlarge the input alphabet to $\calX=\br{0,1,2,3,4}$ while keeping the output alphabet $\calZ=\br{0,1,2}$ and the innocent symbol fixed at $x_0=0$. 
For this expanded channel, the covertness constraint becomes
\begin{align}
    v_0&\in\mathrm{conv}\br{v_1,v_2,v_3,v_4}.
    \label{eq:Conv_5ITO}
\end{align}
For a fixed $\epsilon_0,\tilde{\epsilon}_0,\epsilon_1,\tilde{\epsilon}_1,\epsilon_2,\tilde{\epsilon}_2,\epsilon_3$, and $\tilde{\epsilon}_3$, since $\mathrm{conv}\br{v_1,v_2,v_3}\subseteq\mathrm{conv}\br{v_1,v_2,v_3,v_4}$, the covertness constraint \eqref{eq:Conv_5ITO} is more relaxed compared with the constraint in \eqref{eq:Conv_QITO}, and enlarging the input alphabet can only relax the covertness constraint. Moreover, if $v_4\notin\mathrm{conv}\br{v_1,v_2,v_3}$, the inclusion is strict, and the feasible set strictly enlarges. To illustrate this phenomenon, consider the numerical choice $\epsilon_0=0.4,\tilde{\epsilon}_0=0.15,\epsilon_1=0.1,\tilde{\epsilon}_1=0.8,\epsilon_2=0.6,\tilde{\epsilon}_2=0.2,\epsilon_3=0.2,\tilde{\epsilon}_3=0.6,\epsilon_4=\tfrac{13}{30}\approx0.4333$, and $\tilde{\epsilon}_4=\tfrac{7}{90}\approx0.0778$. For this example, it can be verified that, $v_0\notin\mathrm{conv}\br{v_1,v_2,v_3}$ while $v_0\in\mathrm{conv}\br{v_1,v_2,v_3,v_4}$. This geometric relationship is depicted in Fig.~\ref{fig:Prob_Simplex_45I}.

This example illustrates that enlarging the channel input alphabet can strictly enlarge the set of output distributions that can be induced at the warden. In particular, the addition of a single non-innocent input
symbol can cause the innocent output distribution to enter the convex hull
of the output distributions induced by non-innocent symbols, thereby
changing the covertness constraint from infeasible to feasible. This
observation motivates the continuous-input setting considered next, where
the set of available input symbols is no longer finite and the resulting
convex-hull geometry can exhibit qualitatively different behavior.
\begin{figure}
\centering
\vspace{-0.2cm}
\includegraphics[height=2.5in]{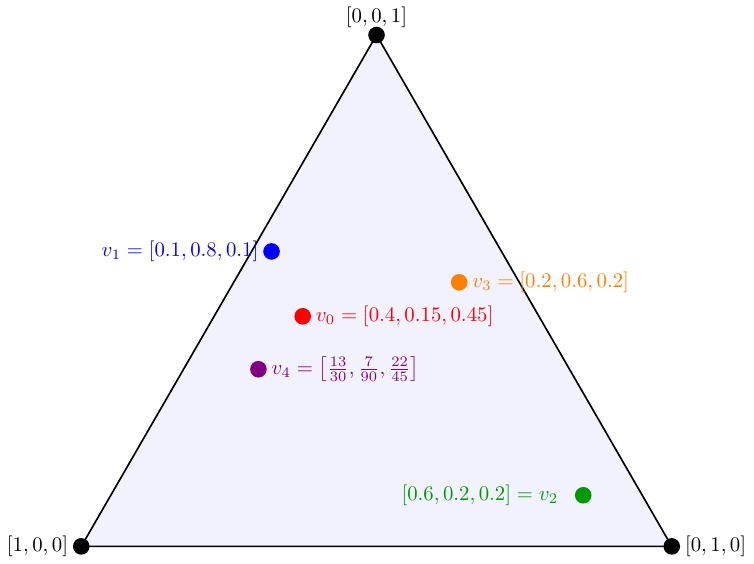}
\caption{The vectors $v_0,v_1,v_2,v_3,v_4$ represented in the ternary probability simplex. In particular, $v_0=0.1v_1+0.9v_4$, and hence $v_0\in\operatorname{conv}\{v_1,v_2,v_3,v_4\}$, but $v_0\notin\operatorname{conv}\{v_1,v_2,v_3\}$}
\label{fig:Prob_Simplex_45I}
\end{figure}
\subsubsection{Continuous-Input Ternary-Output Warden Channels}
We consider a channel with a continuous input alphabet and a discrete warden output alphabet of size $\lvert\calZ\rvert=3$. The warden observes a quantized version of $X+N$ according to
\begin{align}
Z=
\begin{cases}
0, & X+N<\tau_1,\\
1, & \tau_1\le X+N<\tau_2,\\
2, & X+N\ge\tau_2,
\end{cases}
\label{eq:Channel_Cont_Input_Disc_Output}
\end{align}
where $\tau_1<\tau_2$, and $N$ follows a uniform distribution over $[-1,\!1]$ and is independent of $X$. The \ac{PDF} and the \ac{CDF} of the random variable $N$ are, respectively, given by
\begin{align}
f_N(n)&=
\begin{cases}
\frac{1}{2}, & -1<n<1,\\
0, & \text{otherwise},
\end{cases}\nonumber\\
    F_N(n)&=\begin{cases}
0, & n\le-1,\\
\frac{n+1}{2}, & -1< n<1,\\
1, & n\ge1.
\end{cases}\nonumber
\end{align}
Therefore, the warden's channel transition probabilities are given by
\begin{align}
W_{Z|X}(0|x) &= \bbP(x+N<\tau_1)=\begin{cases}
0, & x\ge\tau_1+1,\\
\frac{\tau_1-x+1}{2}, & \tau_1-1< x<\tau_1+1,\\
1, & x\le\tau_1-1,
\end{cases},\nonumber\\
W_{Z|X}(2|x) &= \bbP(x+N\ge\tau_2)
              =\begin{cases}
0, & x\le\tau_2-1,\\
\frac{x-\tau_2+1}{2}, & \tau_2-1< x<\tau_2+1,\\
1, & x\ge\tau_2+1,
\end{cases},\nonumber\\
W_{Z|X}(1|x) &= \bbP(\tau_1\le x+N<\tau_2)=1-W_{Z|X}(0|x)-W_{Z|X}(2|x).\nonumber
\end{align}Equivalently, $W_{Z|X}(1|x)$ is the length of the intersection $[\tau_1-x,\!\tau_2-x]\cap[-1,\!1]$ divided by 2. 
Therefore, as seen in Fig.~\ref{fig:Continuous_Input_Ternary}, if we choose the channel input such that $x\le\tau_1-1$, we can induce the corner point $v_x=[1,0,0]$, while choosing $x\ge\tau_2+1$ induces $v_x=[0,0,1]$.  Moreover, if there exists an input satisfying $\tau_1+1\le x\le\tau_2-1$, then we can induce the corner point $v_x=[0,1,0]$. Such an input exists if and only if $\tau_2-\tau_1\ge2$. Therefore, if
$\tau_2-\tau_1\ge2$, all three corner points
$[1,0,0]$, $[0,1,0]$, and $[0,0,1]$ can be induced, and hence $\operatorname{conv}\{v_x:x\in\mathbb{R}\}=\Delta^2$. Furthermore, if $\tau_2-\tau_1>2$, each of the three corner points can be induced by infinitely many input symbols. Hence, for any choice of innocent symbol $x_0$, removing $x_0$ does not eliminate any of the three corner points, and therefore $\operatorname{conv}\{v_x:x\ne x_0\}=\Delta^2$. Thus, by Lemma~\ref{lemma:General_Sufficient_Covertness}, the covertness constraint $p_Z=q_0$ can be satisfied using a non-innocent input distribution.
\begin{figure}
\centering
\vspace{-0.2cm}
\includegraphics[height=0.8in]{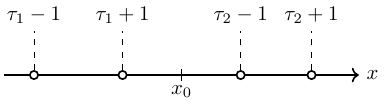}
\caption{Locations of the boundary points induced by the two quantization thresholds and the innocent symbol $x_0$}
\label{fig:Continuous_Input_Ternary}
\end{figure}

It is important to note, however, that the condition $\tau_2-\tau_1\ge2$ is sufficient but not necessary for satisfying the covertness constraint. For example, let
$x_0=0$, $\tau_1=-0.5$, and $\tau_2=0.5$. In this case, $\tau_2-\tau_1=1<2$, and hence the three corner points of $\Delta^2$ cannot all be induced. In this case,
\begin{align}
    v_0&=\sbr{W_{Z|X}(0|0),W_{Z|X}(1|0),W_{Z|X}(2|0)}\nonumber\\
    &=\sbr{F_N(\tau_1),F_N(\tau_2)-F_N(\tau_1),1-F_N(\tau_2)}.\label{eq:v0_ternary_Unif}
\end{align}
Hence, from \eqref{eq:v0_ternary_Unif} we have $v_0=\left[\frac{1}{4},\frac{1}{2},\frac{1}{4}\right]$. For the two non-innocent input symbols $x_1=-0.5$ and $x_2=0.5$, we have $v_{-0.5}=\left[\frac{1}{2},\frac{1}{2},0\right]$ and $v_{0.5}=\left[0,\frac{1}{2},\frac{1}{2}\right]$. Therefore, $v_0
=\frac{1}{2}v_{-0.5}+\frac{1}{2}v_{0.5}$. 
Thus, the covertness constraint can be satisfied even though the three corner points cannot all be induced.

\section{Extension to \texorpdfstring{\acp{MAC}}{MACs}}
\label{sec:MAC}

The geometric perspective developed for point-to-point channels can be extended naturally to \acp{MAC}. In particular, a \ac{MAC} provides a natural mechanism for enlarging the effective channel input alphabet. Although each transmitter has an input alphabet $\calX_i$, $i\in\{1,2\}$, the joint channel input belongs to $\calX_1\times\calX_2$. Thus, a \ac{MAC} can provide additional joint input symbols whose induced output distributions may be used to reproduce the warden's no-communication output distribution.
\begin{figure}[t!]
    \centering
        \includegraphics[height=2.0in]{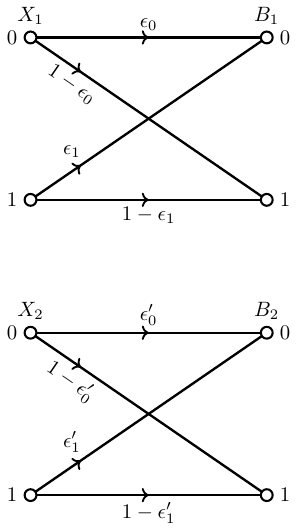}
    \caption{A binary-input, binary-output discrete memoryless \ac{MAC} with the channel output $Z=B_1\oplus B_2$}
    \label{fig:bibo_MAC_Example}
    \vspace{-0.7cm}
\end{figure}
\begin{subequations}\label{eq:BIBO_MAC}
To illustrate this idea, consider the following binary-input \ac{MAC}, as illustrated in Fig.~\ref{fig:bibo_MAC_Example}. Let
\begin{equation}
    B_1=X_1\oplus N_1,
    \qquad
    B_2=X_2\oplus N_2,
\end{equation}
where $X_1,X_2\in\{0,1\}$, and $\oplus$ denotes addition modulo two, and suppose that the warden observes
\begin{equation}
    Z=B_1\oplus B_2.
\end{equation}
\end{subequations}
For the first component, define $\Pr(B_1=0|X_1=i)=\epsilon_i$, for $i\in\{0,1\}$ and, similarly, for the second component, $\Pr(B_2=0|X_2=j)=\epsilon'_j$, for $j\in\{0,1\}$. Consequently, the warden's channel satisfies
\begin{align}
    a_{ij}
    &\triangleq W_{Z|X_1X_2}(0|i,j)\nonumber\\
    &=
    \epsilon_i\epsilon'_j
    +(1-\epsilon_i)(1-\epsilon'_j).\label{eq:aij_defi}
\end{align}
Let $\ell\in\{0,1\}$ and $\ell'\in\{0,1\}$ denote the innocent symbols of the two transmitters, respectively. The no-communication distribution at the warden is therefore $q_0(z)
    =
    W_{Z|X_1X_2}(z|\ell,\ell')$, and, since $Z$ is binary, $q_0(0)=a_{\ell,\ell'}$. 

\subsection{\texorpdfstring{\acp{MAC}}{MACs} with Independent Channel Inputs}
\label{subsec:MAC_independent}
We first consider a \ac{MAC} in which each transmitter covertly communicates its own private message, and hence the channel inputs are independent. The following capacity result from \cite[Theorem~3]{ExtendedPaperMAC} specializes to this setting.
\begin{theorem}[\!\!{\cite[Theorem~3]{ExtendedPaperMAC}}]
    \label{thm:Capacity_Classical_MAC}
    % Let
\begin{subequations}\label{eq:Capacity_Classical_All_MAC}
The covert capacity of the classical \ac{DMC} \ac{MAC} $W_{YZ\lvert X_1X_2}$, under the deterministic encoding is
\begin{align}
\calC_{\mathrm{C\text{-}MAC}} =\bigcup_{p_{X_1X_2YZ}\in\calB}\left.\begin{cases}(R_1,R_2):\\
  R_1\le\bbI(X_1;Y|X_2),\\
  R_2\le\bbI(X_2;Y|X_1),\\
  R_1+R_2\le\bbI(X_1,X_2;Y),
\end{cases}\hspace{-3mm}\right\},
\label{eq:Capacity_Classical_MAC}
\end{align}
where
\begin{align}
  \calB \triangleq \left.\begin{cases}p_{X_1X_2YZ}:\\
p_{X_1X_2YZ}=p_{X_1}p_{X_2}W_{YZ\lvert X_1X_2},\\
\bbI(X_1;Y|X_2)\ge\bbI(X_1;Z),\\
\bbI(X_2;Y|X_1)\ge\bbI(X_2;Z),\\
\bbI(X_1,X_2;Y)\ge\bbI(X_1,X_2;Z),\\
p_Z=q_0\triangleq W_{Z|X_1=x_0^{(1)},X_2=x_0^{(2)}},\\
\end{cases}\right\}.\label{eq:Capacity_Classical_S_MAC}
\end{align}
\end{subequations}
\end{theorem}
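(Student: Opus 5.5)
Since Theorem~\ref{thm:Capacity_Classical_MAC} is cited from \cite[Theorem~3]{ExtendedPaperMAC}, the plan is to reproduce its two halves, achievability and converse, specialized to deterministic encoding and independent inputs.

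\textbf{Achievability.} I would extend the single-user achievability of Theorem~\ref{thm:Capacity_Classical_Sto} to two users. Fix $p_{X_1}p_{X_2}$ in $\calB$. Each transmitter draws a codebook of $2^{n(R_k+R_{K,k})}$ i.i.d.\ codewords from $p_{X_k}^{\otimes n}$, indexed by its message and a shared key. Reliability would follow from standard MAC joint-typicality decoding, which requires the rates to lie in the pentagon $R_1\le\bbI(X_1;Y|X_2)$, $R_2\le\bbI(X_2;Y|X_1)$, $R_1+R_2\le\bbI(X_1,X_2;Y)$.

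For covertness, I would split the warden's distance by the triangle inequality:
\begin{align}
\lVert p_{Z^n}-q_0^{\otimes n}\rVert_1\le \lVert p_{Z^n}-p_Z^{\otimes n}\rVert_1+\lVert p_Z^{\otimes n}-q_0^{\otimes n}\rVert_1 .\nonumber
\end{align}
The second term vanishes identically because $p_Z=q_0$. The first term is a multi-user channel resolvability (soft-covering) problem. It is controlled when the codebook rates satisfy
\begin{align}
R_1+R_{K,1}>\bbI(X_1;Z),\quad R_2+R_{K,2}>\bbI(X_2;Z),\quad \textstyle\sum_k(R_k+R_{K,k})>\bbI(X_1,X_2;Z).\nonumber
\end{align}
The region in \eqref{eq:Capacity_Classical_MAC} contains no key-rate constraint, while $\calB$ contains the inequalities $\bbI(X_1;Y|X_2)\ge\bbI(X_1;Z)$, and so on. I read this as saying that, under these conditions, the messages themselves supply enough randomness, up to rate splitting or time sharing. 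In any case the key rate can be made arbitrarily small.

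\textbf{Converse.} For a reliable and covert code, Fano's inequality gives the three MAC bounds in multi-letter form. Single-letterizing with a uniform time-sharing variable $Q$ yields $p_{X_1|Q}p_{X_2|Q}$, so the inputs are independent given $Q$. Covertness with $\lVert p_{Z^n}-q_0^{\otimes n}\rVert_1\to0$ implies that the time-averaged marginal $\frac1n\sum_i p_{Z_i}$ converges to $q_0$. A continuity argument, using compactness of the input simplices and closure of the region, then places the limiting distribution in the set with $p_Z=q_0$.

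\textbf{Main obstacle.} The hard part is reconciling time sharing with independence. Averaging over $Q$ only preserves conditional independence, and the covertness constraint is imposed on the $Q$-averaged $p_Z$, not on each $p_{Z|Q}$. I would first try to exploit convexity: the set of achievable pairs for each fixed $p_{X_1}p_{X_2}$ is a pentagon, and the union over $\calB$ is then closed under the convex hull. Failing that, I would accept the region as stated in \cite{ExtendedPaperMAC} with its implicit time-sharing closure.

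Justifying the mutual-information inequalities in $\calB$ as genuine constraints in the converse (rather than only conditions sufficient for key-free soft covering) is the other delicate step. I would defer to \cite{ExtendedPaperMAC} for it.
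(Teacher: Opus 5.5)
The paper does not prove this theorem; it imports it from \cite{ExtendedPaperMAC}. Your proposal therefore has to stand on its own, and it breaks exactly at ``in any case the key rate can be made arbitrarily small.''

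The statement is keyless: the paper remarks right after it that the inequalities in $\calB$ ``can be circumvented by using a secret key.'' So each codebook is indexed by the message alone, and soft covering puts lower bounds on the message rates themselves: $R_1>\bbI(X_1;Z)$, $R_2>\bbI(X_2;Z)$, $R_1+R_2>\bbI(X_1,X_2;Z)$. The inequalities in $\calB$ only guarantee that this set meets the reliability pentagon, not that it covers it. For example, the corner $(\bbI(X_1;Y|X_2),\bbI(X_2;Y))$ requires $\bbI(X_2;Y)\ge\bbI(X_2;Z)$. This does not follow from $\bbI(X_2;Y|X_1)\ge\bbI(X_2;Z)$, since $\bbI(X_2;Y)\le\bbI(X_2;Y|X_1)$ for independent inputs. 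Rate splitting creates no randomness. Time sharing with the innocent input (covert because $p_Z=q_0$) only moves points toward the origin.

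Nor is this an artifact of your scheme. For any keyless deterministic code, $nR_2=\bbH(M_2)\ge\bbI(X_2^n;Z^n)\ge\sum_t\bbI(X_{2,t};Z_t)-n\delta'_n$. This is the same step as in \eqref{eq:Res_Converse_1}, with \cite[Lemma~VI.3]{Cuff13} bounding $\sum_t\bbH(Z_t)-\bbH(Z^n)$. For fixed $p_{X_1}p_{X_2}$, the keyless scheme delivers only this intersection, or its downward closure under a ``rates at least'' convention. That set carries the extra bounds $R_1\le\bbI(X_1,X_2;Y)-\bbI(X_2;Z)$ and $R_2\le\bbI(X_1,X_2;Y)-\bbI(X_1;Z)$. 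Compare the term $\bbI(X_1,X_2;Y)-\bbI(X_1;Z)$ that does appear in Theorem~\ref{thm:Capacity_Classical_MAC_DMS}. You should check the exact form of \cite[Theorem~3]{ExtendedPaperMAC} before trying to prove the bare pentagon as written.

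On the converse, the step you defer is precisely this one, and it is the heart of the argument. You lower-bound $\bbH(M_1)$, $\bbH(M_2)$ and $\bbH(M_1,M_2)$ by $\sum_t\bbI(X_{1,t};Z_t)$, $\sum_t\bbI(X_{2,t};Z_t)$ and $\sum_t\bbI(X_{1,t},X_{2,t};Z_t)$, each minus $n\delta'_n$, and upper-bound them via Fano. This yields the three inequalities in $\calB$, but only because there is no key. With keys, $\bbH(M_k)$ becomes $\bbH(M_k,K_k)$ and the inequalities turn into key-rate constraints. That is why your keyed achievability cannot be glued to a keyless converse.

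Your worry that covertness only constrains the $Q$-averaged $p_Z$ is unfounded. By monotonicity of total variation, as in step (b) of \eqref{eq:Covertness_Converse}, $\lVert p_{Z_t}-q_0\rVert_1\le\lVert p_{Z^n}-q_0^{\otimes n}\rVert_1\le\delta_n$ for every $t$. So every $p_{Z|Q=q}$ is close to $q_0$.

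The real obstacle is the one you dismiss. A union of pentagons is not convex just because each pentagon is; this is exactly why the ordinary MAC capacity region needs a convex hull or a time-sharing variable. Here the constraints also come out averaged over $Q$, e.g.\ $\bbI(X_1;Z|Q)\le\bbI(X_1;Y|X_2,Q)$, which is weaker than imposing them for each $q$. The converse therefore naturally produces a region with $Q$. Eliminating $Q$ needs an argument, and ``accept the region as stated'' is not one.
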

Similar to the point-to-point setting, we focus on conditions under which the covertness constraint $p_Z=q_0$ in Theorem~\ref{thm:Capacity_Classical_MAC} can be satisfied. The mutual information constraints in \eqref{eq:Capacity_Classical_S_MAC} can be circumvented by using a secret key shared among the legitimate terminals. Define $\alpha_1
    \triangleq
    \Pr(X_1\neq\ell)$, and $
    \alpha_2
    \triangleq
    \Pr(X_2\neq\ell')$. 
The resulting joint input distribution is
\begin{align*}
    \Pr(X_1=\ell,X_2=\ell')
    &=(1-\alpha_1)(1-\alpha_2),\\
    \Pr(X_1=1-\ell,X_2=\ell')
    &=\alpha_1(1-\alpha_2),\\
    \Pr(X_1=\ell,X_2=1-\ell')
    &=(1-\alpha_1)\alpha_2,\\
    \Pr(X_1=1-\ell,X_2=1-\ell')
    &=\alpha_1\alpha_2.
\end{align*}
The set of warden output distributions that can be induced by independent inputs is
\begin{subequations}
\begin{equation}
    \Delta_{\mathrm{ind-MAC}}
    =
    \left\{
        \left(
        q_Z^{(\alpha_1,\alpha_2)}(0),1-q_Z^{(\alpha_1,\alpha_2)}(0)
        \right):
        (\alpha_1,\alpha_2)\in[0,1]^2
    \right\},
\end{equation}where
\begin{align}
    q_Z^{(\alpha_1,\alpha_2)}(0)
    &=
    (1-\alpha_1)(1-\alpha_2)a_{\ell,\ell'}
    +\alpha_1(1-\alpha_2)a_{1-\ell,\ell'}
    \nonumber\\
    &\quad
    +(1-\alpha_1)\alpha_2a_{\ell,1-\ell'}
    +\alpha_1\alpha_2a_{1-\ell,1-\ell'}.
\label{eq:MAC_induced_output}
\end{align}
\end{subequations}
Thus, the covertness constraint under independent channel inputs is equivalent to asking whether the no-communication distribution $q_0$ belongs to $\Delta_{\mathrm{ind-MAC}}$. More specifically, we seek
\begin{equation}
    q_Z^{(\alpha_1,\alpha_2)}=q_0\label{eq:Covertness_Ind_MAC}
\end{equation}
for a non-trivial pair $(\alpha_1,\alpha_2)\neq(0,0)$,  with at least one of $\alpha_1,\alpha_2$ strictly between zero and one.

Define
\begin{align*}
    d_{10}
    &\triangleq
    a_{1-\ell,\ell'}-a_{\ell,\ell'},\\
    d_{01}
    &\triangleq
    a_{\ell,1-\ell'}-a_{\ell,\ell'},\\
    d_{11}
    &\triangleq
    a_{1-\ell,1-\ell'}-a_{\ell,\ell'}.
\end{align*}
Then \eqref{eq:MAC_induced_output} gives
\begin{equation}
    q_Z^{(\alpha_1,\alpha_2)}(0)-q_0(0)
    =
    F(\alpha_1,\alpha_2),
\end{equation}
where
\begin{equation}
    F(\alpha_1,\alpha_2)
    =
    \alpha_1(1-\alpha_2)d_{10}
    +(1-\alpha_1)\alpha_2d_{01}
    +\alpha_1\alpha_2d_{11}.
\label{eq:MAC_covertness_function}
\end{equation}
Therefore, the covertness constraint in \eqref{eq:Covertness_Ind_MAC} is equivalent to
\begin{equation}
    F(\alpha_1,\alpha_2)=0.
\label{eq:MAC_covertness_condition}
\end{equation}
We say that an input distribution is \emph{non-trivial} if at least one transmitter has a non-degenerate input distribution, so that at least one transmitter can potentially convey information. In particular, for the independent-input model considered above, this means $\alpha_1\in(0,1)$ or $\alpha_2\in(0,1)$. This definition does not require both transmitters to be active. The following theorem characterizes the existence of a non-trivial independent input distribution satisfying the covertness constraint.

\begin{theorem}[Independent-input \ac{MAC}]
\label{prop:MAC_independent_non-trivial}
There exists a non-trivial independent input distribution satisfying the covertness constraint if and only if at least one of the following conditions holds:
\begin{subequations}\label{eq:MAC_indi_condition}
\begin{align}
    d_{10}&=0,
\label{eq:MAC_d10_zero}\\
    d_{01}&=0,
\label{eq:MAC_d01_zero}\\
    \min\{d_{10},d_{01},d_{11}\}
    &<0<
    \max\{d_{10},d_{01},d_{11}\}.
\label{eq:MAC_strict_sign_condition}
\end{align}
\end{subequations}
Equivalently, there exists a non-trivial independent input distribution satisfying the covertness constraint if and only if
\begin{subequations}\label{eq:MAC_independent_non-trivial_equivalent}
\begin{align}
    a_{1-\ell,\ell'}&=a_{\ell,\ell'},\label{eq:MAC_independent_non-trivial_equivalent_1}\\
    a_{\ell,1-\ell'}&=a_{\ell,\ell'},\label{eq:MAC_independent_non-trivial_equivalent_2}\\
    \min_{\substack{i,j\in\{0,1\}\\
    (i,j)\neq(\ell,\ell')}}
    a_{ij}
    &<
    a_{\ell,\ell'}
    <
    \max_{\substack{i,j\in\{0,1\}\\
    (i,j)\neq(\ell,\ell')}}
    a_{ij}.\label{eq:MAC_independent_non-trivial_equivalent_3}
\end{align}
\end{subequations}
Furthermore, when the innocent symbols $x_0^{(1)}$ and $x_0^{(2)}$ can be freely chosen, the covertness constraint $p_Z=q_0$ can always be satisfied.
\end{theorem}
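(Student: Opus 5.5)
The plan is to work directly with the bilinear function $F$ in \eqref{eq:MAC_covertness_function} on the unit square $[0,1]^2$. We need a zero of $F$ other than $(0,0)$ at which at least one coordinate lies strictly in $(0,1)$. Note that $F(0,0)=0$, $F(1,0)=d_{10}$, $F(0,1)=d_{01}$ and $F(1,1)=d_{11}$. Since $F$ is affine in each variable separately, it is a continuous map whose extreme values on the square are attained at the corners.

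For sufficiency, we argue case by case. If $d_{10}=0$, then $F(\alpha_1,0)=\alpha_1 d_{10}=0$ for every $\alpha_1$, so $(\alpha_1,0)$ with $\alpha_1\in(0,1)$ works; the case $d_{01}=0$ is symmetric. Otherwise, suppose \eqref{eq:MAC_strict_sign_condition} holds, so there are corners $c^-$ and $c^+$ among $(1,0),(0,1),(1,1)$ with $F(c^-)<0<F(c^+)$. Consider a path between $c^-$ and $c^+$ that stays away from $(0,0)$, for example along the edges $\alpha_1=1$ or $\alpha_2=1$. Such a path exists because the set $\{(1,0),(0,1),(1,1)\}$ is connected through the edges $\{\alpha_1=1\}\cup\{\alpha_2=1\}$, and none of those edges meets the origin. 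By the intermediate value theorem, $F$ vanishes at some point on this path. We then need this zero to be non-trivial, i.e.\ not a corner. The signs of $F$ at $c^\pm$ are strict and the only corners on the path are $(1,0),(0,1),(1,1)$. Any zero at one of these corners would give $d_{10}=0$, $d_{01}=0$ or $d_{11}=0$. The first two are already covered by the earlier cases. If $d_{11}=0$, we can instead pick the segment from $(1,0)$ to $(0,1)$, or the edge joining $c^-$ and $c^+$ directly when both endpoints differ from $(1,1)$. Since $F$ changes sign strictly between the two endpoints, the zero lies in the open segment, so at least one coordinate is in $(0,1)$.

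For necessity, assume none of the three conditions holds. Then $d_{10}\neq0$ and $d_{01}\neq0$, and all of $d_{10},d_{01},d_{11}$ are $\ge0$ (or all $\le0$), with $d_{10},d_{01}$ strictly nonzero. Take the case where they are nonnegative. For $(\alpha_1,\alpha_2)\neq(0,0)$, every term of $F$ is nonnegative. If $\alpha_1>0$ and $\alpha_2<1$, the first term is strictly positive. If $\alpha_1=0$, then $\alpha_2>0$ and the second term is strictly positive. The remaining case is $\alpha_2=1$ with $\alpha_1>0$, where $F=\alpha_1 d_{11}$. This is the main obstacle: it can vanish when $d_{11}=0$, but only at $\alpha_1\in(0,1]$, $\alpha_2=1$. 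Such a point would be non-trivial if $\alpha_1\in(0,1)$, so I must check the statement carefully here. It seems the set $\min<0<\max$ must be interpreted to also exclude this situation. I would try to resolve it by noting that $\alpha_2=1$ means transmitter 2 deterministically sends $1-\ell'$ and the effective innocent output changes, or else adjust the argument to show that $d_{11}=0$ with $d_{10},d_{01}>0$ gives $F(\alpha_1,1)=0$. If the latter holds, it must be reconciled with the theorem's conditions, possibly by treating the $d_{11}=0$ edge case separately.

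The equivalent form \eqref{eq:MAC_independent_non-trivial_equivalent} follows by substituting the definitions of $d_{ij}$ as differences $a_{ij}-a_{\ell,\ell'}$.

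For the final claim, where the innocent symbols can be chosen freely, choose $(\ell,\ell')$ so that $a_{\ell,\ell'}$ is a median of the four values $a_{ij}$. One can then check that either strict inequality \eqref{eq:MAC_independent_non-trivial_equivalent_3} holds, or $a_{\ell,\ell'}$ ties with some other value. In the tie case, use the product structure \eqref{eq:aij_defi}: $a_{ij}-\tfrac12=2(\epsilon_i-\tfrac12)(\epsilon'_j-\tfrac12)$ has a rank-one sign pattern. This lets us arrange the tie to be with an adjacent entry, giving \eqref{eq:MAC_independent_non-trivial_equivalent_1} or \eqref{eq:MAC_independent_non-trivial_equivalent_2}, or else pick a different median where the inequality is strict.
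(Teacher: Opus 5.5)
\textbf{Necessity does not close.} Your sufficiency argument is sound. Using the intermediate value theorem along the edges $\alpha_1=1$ and $\alpha_2=1$, and along the anti-diagonal when $d_{11}=0$, is a non-explicit version of the paper's explicit choices, e.g.\ $\alpha_1=1$, $\alpha_2=d_{10}/(d_{10}-d_{11})$, or $\alpha_2=\tfrac12$, $\alpha_1=d_{01}/(d_{01}-d_{10})$. The translation to the $a_{ij}$ form is also fine.

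The necessity part fails because of an algebra slip. On the edge $\alpha_2=1$ you have $F(\alpha_1,1)=(1-\alpha_1)d_{01}+\alpha_1 d_{11}$, not $\alpha_1 d_{11}$. Take $d_{10},d_{01}>0$ and $d_{11}=0$. Then $(1-\alpha_1)d_{01}$ is strictly positive for every $\alpha_1<1$. So the only zeros of $F$ on $[0,1]^2$ are $(0,0)$ and $(1,1)$, and both are trivial. This is exactly the paper's argument: for $d_{11}=0$, $F=\alpha_1(1-\alpha_2)d_{10}+(1-\alpha_1)\alpha_2 d_{01}$, and its two coefficients vanish together only at those corners. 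The ``obstacle'' you flag does not exist, and the conditions need no reinterpretation. You must still replace that step.

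\textbf{The final claim fails.} Your rank-one step (``arrange the tie to be with an adjacent entry, or else pick a different median'') is false. Take two non-degenerate BSCs: $\epsilon_1=1-\epsilon_0$, $\epsilon'_1=1-\epsilon'_0$ with $\epsilon_0,\epsilon'_0\neq\tfrac12$. Then $a_{00}=a_{11}\neq a_{10}=a_{01}$. For every choice of $(\ell,\ell')$ this gives $d_{11}=0$ and $d_{10}=d_{01}\neq0$, so none of the three conditions holds. This is the paper's own ``both components are BSCs'' example. Hence the last sentence of the theorem cannot be proved for non-trivial independent inputs.

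The paper's proof has the same hole. It asserts that a tie at the second-smallest value yields \eqref{eq:MAC_independent_non-trivial_equivalent_1} or \eqref{eq:MAC_independent_non-trivial_equivalent_2}. It overlooks the case where the tie is only with the diagonal partner $(1-\ell,1-\ell')$.

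What your median idea does establish is a corrected statement:
\begin{itemize}
\item If the $a_{ij}$ take at least three distinct values, an entry with a middle value satisfies \eqref{eq:MAC_independent_non-trivial_equivalent_3}.
\item Otherwise, unless the values form the checkerboard $a_{00}=a_{11}\neq a_{01}=a_{10}$, two equal entries differ in exactly one index. Choosing one of them as the innocent pair gives \eqref{eq:MAC_independent_non-trivial_equivalent_1} or \eqref{eq:MAC_independent_non-trivial_equivalent_2}.
\end{itemize}
Via $a_{ij}-\tfrac12=2(\epsilon_i-\tfrac12)(\epsilon'_j-\tfrac12)$, the checkerboard arises exactly for two non-degenerate BSCs, so that is the only exception. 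The unconditional claim does hold for joint inputs (Theorem~\ref{prop:MAC_joint_non-trivial}), where any tie suffices.
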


\begin{proof}
We first prove necessity. Suppose that there exists a non-trivial $(\alpha_1,\alpha_2)\in[0,1]^2$ satisfying $F(\alpha_1,\alpha_2)=0$, where $\alpha_1\in(0,1)$ or $\alpha_2\in(0,1)$. If $d_{10}>0$, $d_{01}>0$, and $ d_{11}>0$ then all three terms in $F(\alpha_1,\alpha_2)$ are nonnegative. Moreover, for any non-trivial $(\alpha_1,\alpha_2)\neq(0,0)$, at least one of the three coefficients $\alpha_1(1-\alpha_2)$, $(1-\alpha_1)\alpha_2$, and $\alpha_1\alpha_2$ is strictly positive. Hence $F(\alpha_1,\alpha_2)>0$, which is a contradiction. The same argument applies if $d_{10}<0$, $d_{01}<0$, and $d_{11}<0$. Now suppose that there is no strict sign change among $d_{10},d_{01},d_{11}$. If either $d_{10}=0$ or $d_{01}=0$, then the first or second condition in Theorem~\ref{prop:MAC_independent_non-trivial} holds. Otherwise, $d_{10}$ and $d_{01}$ are both nonzero and have the same strict sign. Since there is no strict sign change, the only possibility for $F(\alpha_1,\alpha_2)$ to vanish is $d_{11}=0$. In this case,
\begin{equation*}
    F(\alpha_1,\alpha_2)
=
\alpha_1(1-\alpha_2)d_{10}
+(1-\alpha_1)\alpha_2d_{01}.
\end{equation*}For any non-trivial $(\alpha_1,\alpha_2)\neq(1,1)$, at least one of $\alpha_1(1-\alpha_2)$ and $(1-\alpha_1)\alpha_2$ is strictly positive. Since $d_{10}$ and $d_{01}$ have the same strict sign, $F(\alpha_1,\alpha_2)$ has that same strict sign and therefore cannot equal zero. The remaining point $(1,1)$ is not non-trivial, since both input distributions are deterministic. Thus, a non-trivial
solution can exist only if
\begin{equation*}
d_{10}=0,
\quad\text{or}\quad
d_{01}=0,
\quad\text{or}\quad
\min\{d_{10},d_{01},d_{11}\}<0<
\max\{d_{10},d_{01},d_{11}\}.
\end{equation*}

We next prove sufficiency. First, suppose that $d_{10}=0$. Choose any $\alpha_1\in(0,1)$ and $\alpha_2=0$.  Then $F(\alpha_1,0)=\alpha_1d_{10}=0$, and the resulting input distribution is non-trivial. Similarly, if $d_{01}=0$, choose any $\alpha_2\in(0,1)$, and $\alpha_1=0$, for which $F(0,\alpha_2)=\alpha_2d_{01}=0$. 
It remains to consider
\begin{align*}
    \min\{d_{10},d_{01},d_{11}\}<0<
\max\{d_{10},d_{01},d_{11}\}.
\end{align*}
If $d_{10}$ and $d_{11}$ have opposite signs, set $\alpha_1=1$ and $\alpha_2=\frac{d_{10}}{d_{10}-d_{11}}$. Then $0<\alpha_2<1$ and
\begin{align*}
F(1,\alpha_2)
=
(1-\alpha_2)d_{10}
+\alpha_2d_{11}
=0.
\end{align*}
Hence, the resulting input distribution is non-trivial.  Likewise, if $d_{01}$ and $d_{11}$ have opposite signs, set $\alpha_2=1$ and $\alpha_1=\frac{d_{01}}{d_{01}-d_{11}}$. Since $d_{01}$ and $d_{11}$ have opposite signs, $0<\alpha_1<1$, and
\begin{align*}
F(\alpha_1,1)
=
(1-\alpha_1)d_{01}
+\alpha_1d_{11}
=0.
\end{align*}
Finally, suppose that $d_{10}$ and $d_{01}$ have opposite signs while $d_{11}=0$. Choose $\alpha_2=\frac{1}{2}$, and $\alpha_1=\frac{d_{01}}{d_{01}-d_{10}}$. Since $d_{10}$ and $d_{01}$ have opposite signs, $0<\alpha_1<1$. Furthermore,
\begin{align*}
F\left(\alpha_1,\frac{1}{2}\right)
&=
\frac{1}{2}\alpha_1d_{10}
+\frac{1}{2}(1-\alpha_1)d_{01} \notag\\
&=
\frac{1}{2}
\left[
\alpha_1d_{10}
+(1-\alpha_1)d_{01}
\right]
=0.
\end{align*}
Thus, in all cases satisfying
\begin{align*}
d_{10}=0,
\quad\text{or}\quad
d_{01}=0,
\quad\text{or}\quad
\min\{d_{10},d_{01},d_{11}\}<0<
\max\{d_{10},d_{01},d_{11}\},
\end{align*}
there exists a non-trivial product input distribution satisfying the covertness constraint. This completes the proof.

Now, suppose that the innocent symbols $x_0^{(1)}$ and $x_0^{(2)}$ can be freely chosen. Consider the four values $a_{00},a_{10},a_{01},a_{11}$, and let $a_{\ell,\ell'}$ be the second-smallest value among them. If it is strictly between the smallest and largest of the remaining three values, \eqref{eq:MAC_independent_non-trivial_equivalent_3} holds. Otherwise, the second-smallest value is attained by at least two distinct input pairs. Choosing one of these pairs as the innocent pair, another distinct pair induces the same warden output distribution, and either \eqref{eq:MAC_independent_non-trivial_equivalent_1} or \eqref{eq:MAC_independent_non-trivial_equivalent_2} holds. Hence, there always exists a choice of innocent symbols for which the covertness constraint can be satisfied with a non-trivial joint input distribution.
\end{proof}
\begin{remark}[Geometric interpretation]
Theorem~\ref{prop:MAC_independent_non-trivial} has a useful geometric interpretation. The four joint input symbols induce the four warden output distributions $W_{Z|X_1X_2}(\cdot|i,j)$, for $ (i,j)\in\{0,1\}^2$. If arbitrary joint input distributions were allowed, the resulting warden output distributions would fill the convex hull of these four points. Under independent inputs, however, the probabilities assigned to the four joint input symbols must satisfy the product constraint.

For the binary-output \ac{MAC} considered here, the resulting set of warden output distributions is nevertheless the same interval as the convex hull, since $q_Z^{(\alpha_1,\alpha_2)}(0)$ is a continuous bilinear function of $(\alpha_1,\alpha_2)$ and its extrema over $[0,1]^2$ occur at the four corner points. The distinction between independent and joint inputs therefore lies not in the set of reachable warden output distributions, but in the input distributions that realize a given warden output distribution. In particular, the covertness condition depends not only on whether the innocent output distribution lies in the convex hull of the non-innocent output distributions, but also on whether that point is reachable under the product-distribution restriction in a way that allows at least one transmitter to use a non-degenerate input distribution.
\end{remark}

\begin{remark}[Comparison with the point-to-point channel]
The conditions under which the covertness constraint can be satisfied
with a non-trivial input distribution in the binary-input, binary-output
\ac{MAC} can be less restrictive than those for the corresponding
binary-input, binary-output point-to-point channels studied in
Section~\ref{sec:DIS_Inp_BO}. In particular, condition
\eqref{eq:MAC_independent_non-trivial_equivalent} can be satisfied even
when the corresponding point-to-point condition \eqref{eq:Cov_BIBO_Final}
cannot be satisfied. This difference arises from the additional joint
input symbols available in the \ac{MAC}, which can produce warden output
distributions on opposite sides of the innocent output distribution and
thereby enable cancellation.

For example, let $\epsilon_0=0.9$, $\epsilon_1=0.6$, $\epsilon'_0=0.55$, and $\epsilon'_1=0.95$. Then, from \eqref{eq:aij_defi}, $a_{00}=0.54$, $a_{10}=0.51$, $a_{01}=0.86$, $a_{11}=0.59$, and, for $\ell=\ell'=0$, $d_{10}=-0.03$, $d_{01}=0.32$, and $d_{11}=0.05$. Thus, $\min\{d_{10},d_{01},d_{11}\}
=-0.03<0<
0.32
=\max\{d_{10},d_{01},d_{11}\}$, and hence the condition in
\eqref{eq:MAC_strict_sign_condition} is satisfied. In fact, choosing $\alpha_1=0.5$, $\alpha_2=0.075$
gives $F(0.5,0.075)=0$, and therefore satisfies the covertness constraint in \eqref{eq:MAC_covertness_condition} with a non-trivial product input
distribution.

In contrast, when either transmitter is absent, the resulting channel reduces to a point-to-point channel for which the condition in Section~\ref{sec:DIS_Inp_BO} is not satisfied, and hence a positive covert rate cannot be achieved under the corresponding point-to-point model. This example illustrates that the interaction between the two transmitters can create a covert communication opportunity that is unavailable when either transmitter operates alone.
\end{remark}

\subsubsection{Examples} In this section, we consider two cases. In the first, the channel associated with one of the transmitters is a \ac{BSC}; in the second, the channels from both transmitters to the warden are \acp{BSC}.

\paragraph{One \texorpdfstring{\ac{BSC}}{BSC} Component}

Now fix $\ell=\ell'=0$ and suppose that the second component is a BSC,
so that $\epsilon'_1=1-\epsilon'_0$, while the first component is arbitrary. In this case,
\begin{align*}
    d_{10}
    &=
    (\epsilon_1-\epsilon_0)(2\epsilon'_0-1),\\
    d_{01}
    &=
    -(2\epsilon_0-1)(2\epsilon'_0-1),\\
    d_{11}
    &=
    -(\epsilon_1-\epsilon_0)(2\epsilon'_0-1)
    =-d_{10}.
\end{align*}
Therefore, whenever $d_{10}\neq0$, $d_{10}$ and $d_{11}$ have opposite
signs. Theorem~\ref{prop:MAC_independent_non-trivial} then guarantees the existence of a non-trivial independent input distribution satisfying the covertness constraint.

For example, setting $\alpha_1=1$ and $\alpha_2=\frac{d_{10}}{d_{10}-d_{11}}=\frac{1}{2}$ gives $F(1,1/2)=0$. Thus, Transmitter~2 can use a non-degenerate input distribution while Transmitter~1 is fixed to its non-innocent symbol. The degenerate cases can be handled directly. In particular, if $d_{10}=0$ and $d_{01}\neq0$, choosing $\alpha_1\in(0,1)$ and $\alpha_2=0$ satisfies the covertness constraint.

\paragraph{Both Components are \texorpdfstring{\acp{BSC}}{BSCs}}

Finally, suppose that both component channels are \acp{BSC}: $\epsilon_1=1-\epsilon_0$ and $\epsilon'_1=1-\epsilon'_0$. Then $a_{00}=a_{11}=\epsilon_0\epsilon'_0+(1-\epsilon_0)(1-\epsilon'_0)$ and $a_{10}=a_{01}=\epsilon_0(1-\epsilon'_0)+(1-\epsilon_0)\epsilon'_0$. For $\ell=\ell'=0$, $d_{11}=0$ and $d_{10}=d_{01}=-(2\epsilon_0-1)(2\epsilon'_0-1)$. For non-degenerate BSCs, $d_{10}=d_{01}\neq0$. Hence
\begin{equation*}
    F(\alpha_1,\alpha_2)
    =
    d_{10}
    \left[
    \alpha_1(1-\alpha_2)
    +(1-\alpha_1)\alpha_2
    \right].
\end{equation*}
The two terms in brackets are nonnegative and can simultaneously vanish only when $(\alpha_1,\alpha_2)=(0,0)$ or $(\alpha_1,\alpha_2)=(1,1)$. The first corresponds to the all-innocent input, while the second is a deterministic non-innocent input. Neither provides a non-degenerate input distribution for communication. Consequently, when both component channels are non-degenerate BSCs, no transmitter can communicate using a non-trivial independent input distribution while satisfying the  covertness constraint.
\subsection{\texorpdfstring{\acp{MAC}}{MACs} with Degraded Message Sets}
\label{subsec:MAC_degraded_message_sets}
We next consider covert communication over a \ac{MAC} with degraded message sets, in which the second transmitter may have access to the message transmitted by the first transmitter \cite{ExtendedPaperMAC}. In this setting, the channel inputs may be correlated.
\begin{theorem}[\!\!{\cite[Theorem~3]{ExtendedPaperMAC}}]
    \label{thm:Capacity_Classical_MAC_DMS}
\begin{subequations}\label{eq:Capacity_Classical_All_MAC_DMS}
The covert capacity of the classical discrete memoryless \ac{MAC} $W_{YZ\lvert X_1X_2}$ with degraded message sets, under the deterministic encoding is
\begin{align}
\calC_{\mathrm{C\text{-}MAC}} =\bigcup_{p_{X_1X_2YZ}\in\calB}\left.\begin{cases}(R_1,R_2):\\
  R_2\le\min\{\bbI(X_2;Y|X_1),\bbI(X_1,X_2;Y)-\bbI(X_1;Z)\},\\
  R_1+R_2\le\bbI(X_1,X_2;Y),\\
\end{cases}\hspace{-3mm}\right\},
\label{eq:Capacity_Classical_MAC_DMS}
\end{align}
where
\begin{align}
  \calB \triangleq \left.\begin{cases}p_{X_1X_2YZ}:\\
p_{X_1X_2YZ}=p_{X_1X_2}W_{YZ\lvert X_1X_2},\\
\bbI(X_1,X_2;Y)\ge\bbI(X_1,X_2;Z),\\
p_Z=q_0\triangleq W_{Z|X_1=x_0^{(1)},X_2=x_0^{(2)}},\\
\end{cases}\right\}.\label{eq:Capacity_Classical_S_MAC_DMS}
\end{align}
\end{subequations}
\end{theorem}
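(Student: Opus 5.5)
The statement is a capacity theorem for the \ac{MAC} with degraded message sets under a covertness constraint, quoted from \cite[Theorem~3]{ExtendedPaperMAC}. I would prove it by pairing a superposition-coding achievability scheme with a standard converse. In both halves the covertness constraint is handled exactly as in Theorem~\ref{thm:Capacity_Classical_Sto}.

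\textbf{Achievability.} Fix $p_{X_1X_2}$ with $p_Z=q_0$ and $\bbI(X_1,X_2;Y)\ge\bbI(X_1,X_2;Z)$. The codebook and decoder are built as follows.
\begin{itemize}
\item Generate a cloud codebook $x_1^n(m_1,k_1)$ i.i.d.\ according to $p_{X_1}$.
\item On each cloud center, superimpose satellite codewords $x_2^n(m_1,m_2,k_1,k_2)$ according to $p_{X_2|X_1}$. Transmitter~2 knows $m_1$, which is what allows the correlated inputs.
\item The legitimate receiver decodes $(m_1,m_2)$ by joint typicality.
\end{itemize}
The standard superposition analysis gives reliability whenever
\[
R_2+R_{K_2}\le\bbI(X_2;Y|X_1),\qquad R_1+R_2+R_{K}\le\bbI(X_1,X_2;Y).
\]
Covertness is split by the triangle inequality. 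The first term is $\lVert P_{Z^n}-q_0^{\otimes n}\rVert_1$, which equals $\lVert P_{Z^n}-p_Z^{\otimes n}\rVert_1$ because $p_Z=q_0$ exactly. The second is a soft-covering (channel resolvability) bound for the superposition codebook. That bound requires the layer-wise rate conditions
\[
R_1+R_{K_1}+R_2+R_{K_2}\ge \bbI(X_1,X_2;Z),\qquad R_1+R_{K_1}\ge\bbI(X_1;Z).
\]
Here the keys $(k_1,k_2)$ are local randomness, i.e.\ stochastic rate used for resolvability, not shared secret key. I would take this from the resolvability part of the proof of Theorem~\ref{thm:Achievable_Quantum} specialized to classical channels, or from the superposition soft-covering lemma.

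The last step eliminates the auxiliary randomness rates by Fourier--Motzkin. This should produce $R_2\le\bbI(X_1,X_2;Y)-\bbI(X_1;Z)$, which comes from the requirement that the first layer be resolvable while the total rate stays decodable. It should also leave the condition $\bbI(X_1,X_2;Y)\ge\bbI(X_1,X_2;Z)$ in $\calB$ as the feasibility condition. Time sharing is not needed because the region is already a union over input distributions.

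\textbf{Converse.} Let a code satisfy $p_e^{(n)}\to0$ and the covertness constraint \eqref{eq:Covertness_Const}. The steps are:
\begin{itemize}
\item Fano's inequality and the chain rule with $X_{1,i},X_{2,i}$ as functions of the messages give
\[
n(R_1+R_2)\le\sum_i\bbI(X_{1,i},X_{2,i};Y_i)+n\epsilon_n,\qquad nR_2\le\sum_i\bbI(X_{2,i};Y_i|X_{1,i})+n\epsilon_n.
\]
\item For the bound involving $\bbI(X_1;Z)$, combine covertness with a lower bound on $nR_1+\mathrm{(randomness)}$. Vanishing total variation to $q_0^{\otimes n}$ together with $\bbI(M_1;Z^n)$ arguments forces the first-layer codebook to carry at least $n\,\bbI(X_1;Z)$ bits. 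This step, as in the converse of Theorem~\ref{thm:Capacity_Classical_Sto} in Appendix~\ref{proof:thm:capacity}, uses continuity of entropy under small total variation so that $\bbH(Z^n)\approx n\bbH(q_0)$.
\item Introduce a uniform time-sharing variable $Q$. The average warden output $\frac1n\sum_iP_{Z_i}$ converges to $q_0$, so in the limit (by compactness of the input simplex and continuity of mutual information) the single-letter distribution satisfies $p_Z=q_0$ exactly.
\end{itemize}

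The main obstacle is this converse step giving $R_2\le\bbI(X_1,X_2;Y)-\bbI(X_1;Z)$ and the inequality $\bbI(X_1,X_2;Y)\ge\bbI(X_1,X_2;Z)$. Both must be extracted from a total variation covertness constraint rather than a relative-entropy one, and done carefully with single-letterization through $Q$ and the concavity needed to absorb $Q$. I would first follow the converse technique of Appendix~\ref{proof:thm:capacity}, bounding $n\,\bbI(X_1;Z)-o(n)\le\bbI(M_1K;Z^n)\le\bbH(M_1)+\bbH(K)$, and combine it with Fano for the sum rate.
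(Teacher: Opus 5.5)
The paper does not prove this theorem; it is imported from \cite{ExtendedPaperMAC}, so there is no in-paper argument to compare against. On its own merits your architecture is right, but there is a genuine gap: you mishandle encoder randomness, and you do so exactly where the term $-\bbI(X_1;Z)$ in the bound on $R_2$ comes from. In the achievability, $(k_1,k_2)$ cannot be local randomness. The theorem assumes deterministic encoding. More basically, $x_2^n(m_1,m_2,k_1,k_2)$ requires Transmitter~2 to know $k_1$, whereas with degraded message sets Transmitter~2 knows only $(m_1,m_2)$, not Transmitter~1's private randomness. The only randomness available for resolvability is the messages themselves. Take $x_1^n(m_1)$ i.i.d.\ $p_{X_1}$ and $x_2^n(m_1,m_2)$ i.i.d.\ $p_{X_2|X_1}$. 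Reliability then needs $R_2<\bbI(X_2;Y|X_1)$ and $R_1+R_2<\bbI(X_1,X_2;Y)$. Superposition soft covering needs $R_1>\bbI(X_1;Z)$ and $R_1+R_2>\bbI(X_1,X_2;Z)$; the single-layer Lemma~\ref{lemma:Res_Quantum} does not give the layer-wise condition. The stated region contains pairs with $R_1<\bbI(X_1;Z)$, for which $2^{nR_1}$ common messages alone cannot resolve the cloud layer. Your Fourier--Motzkin elimination of $R_{K_1},R_{K_2}$ is exactly the passage to pairs dominated by a rate pair meeting the four conditions above. It is therefore valid only if a code with rates $(R_1',R_2')$ is credited with every pair it dominates, and you should state that convention rather than call the extra bits local randomness.

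The same confusion breaks the converse step you flag as the main obstacle. With a randomness term, $n\bbI(X_1;Z)-o(n)\le\bbI(M_1,K;Z^n)\le\bbH(M_1)+\bbH(K)$ combined with Fano only yields $R_2\le\bbI(X_1,X_2;Y)-\bbI(X_1;Z)+R_K$, because Fano controls only $\bbH(M_1,M_2)$. For the same reason you only get $\bbI(X_1,X_2;Y)+R_K\ge\bbI(X_1,X_2;Z)$. The missing ingredient is the deterministic-encoding hypothesis: $X_1^n$ is a function of $M_1$ and $(X_1^n,X_2^n)$ is a function of $(M_1,M_2)$, so
\begin{align*}
nR_1=\bbH(M_1)\ge\bbI(X_1^n;Z^n),\qquad n(R_1+R_2)=\bbH(M_1,M_2)\ge\bbI(X_1^n,X_2^n;Z^n).
\end{align*}
Take $Q$ uniform on $[1\!:\!n]$ and $(X_1,X_2,Y,Z)=(X_{1,Q},X_{2,Q},Y_Q,Z_Q)$. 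Combine your continuity bound $\bbH(Z^n)\ge n\bbH(q_0)-o(n)$ with $\bbH(Z^n|X_1^n)\le\sum_i\bbH(Z_i|X_{1,i})=n\bbH(Z|X_1,Q)\le n\bbH(Z|X_1)$ and $\bbH(Z)\le\bbH(q_0)+o(1)$. This gives $R_1\ge\bbI(X_1;Z)-o(1)$, and likewise $R_1+R_2\ge\bbI(X_1,X_2;Z)-o(1)$. On the legitimate side, the Markov chain $Q-(X_1,X_2)-Y$ gives $\bbI(X_1,X_2;Y|Q)\le\bbI(X_1,X_2;Y)$ and $\bbI(X_2;Y|X_1,Q)\le\bbI(X_2;Y|X_1)$. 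Subtracting the $R_1$ bound from the Fano sum-rate bound gives $R_2\le\bbI(X_1,X_2;Y)-\bbI(X_1;Z)+o(1)$. Sandwiching $R_1+R_2$ gives $\bbI(X_1,X_2;Y)\ge\bbI(X_1,X_2;Z)-o(1)$, and your compactness argument then yields $p_Z=q_0$ exactly. With $K$ removed from both halves, your outline goes through.
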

Similar to the previous section, we study the conditions under which the covertness constraint $p_Z=q_0$ in Theorem~\ref{thm:Capacity_Classical_MAC_DMS} can be satisfied. The essential distinction from the preceding setting is that the joint input distribution need not factor as $p_{X_1}p_{X_2}$. Let $\Pr(X_1=1,X_2=0)=\alpha$, $\Pr(X_1=0,X_2=1)=\beta$, $\Pr(X_1=1,X_2=1)=\eta$, and $\Pr(X_1=0,X_2=0)=1-\alpha-\beta-\eta$, where $0\le\alpha,\beta,\eta\le1$. Therefore, the induced warden output distribution is
\begin{align*}
    q_Z(0)&=\alpha W_{Z|X_1X_2}(0|1,0)+\beta W_{Z|X_1X_2}(0|0,1)+\eta W_{Z|X_1X_2}(0|1,1)+(1-\alpha-\beta-\eta) W_{Z|X_1X_2}(0|0,0)\nonumber\\
    &=\alpha a_{10}+\beta a_{01}+\eta a_{11}+(1-\alpha-\beta-\eta) a_{00}.
\end{align*}
Therefore, the set of all warden output distributions that can be induced by arbitrary joint input distributions is
\begin{equation}
    \Delta_{\mathrm{DMS-MAC}}
    =
    \operatorname{conv}
    \left\{
        W_{Z|X_1X_2}(\cdot|i,j):
        (i,j)\in\{0,1\}^2
    \right\}.
\label{eq:MAC_joint_convex_hull}
\end{equation}
Thus, unlike the independent-input case, every point in the convex hull of the four conditional warden output distributions can be realized by a suitable joint input distribution.

The covertness constraint requires $q_Z=q_0$ or, equivalently,
\begin{equation}
    \alpha a_{1-\ell,\ell'}+\beta a_{\ell,1-\ell'}+\eta a_{1-\ell,1-\ell'}+(1-\alpha-\beta-\eta) a_{\ell,\ell'}=a_{\ell,\ell'}.
\label{eq:MAC_joint_covertness}
\end{equation}

We seek a non-trivial joint input distribution, meaning one for which at least one transmitter has a non-degenerate marginal input distribution and can therefore potentially convey information.
\begin{theorem}[Binary-input, binary-output \ac{MAC} with degraded message set]
\label{prop:MAC_joint_non-trivial}
There exists a non-trivial joint input distribution satisfying the covertness constraint if and only if
\begin{align}
\min\{d_{10},d_{01},d_{11}\}
    &\leq 0 \leq
    \max\{d_{10},d_{01},d_{11}\}.
\label{eq:MAC_range_condition_DMS}
\end{align}
Equivalently, there exists a non-trivial joint input distribution satisfying the  covertness constraint if and only if
\begin{equation}
    \min_{\substack{i,j\in\{0,1\}\\
    (i,j)\neq(\ell,\ell')}}
    a_{ij}
    \leq
    a_{\ell,\ell'}
    \leq
    \max_{\substack{i,j\in\{0,1\}\\
    (i,j)\neq(\ell,\ell')}}
    a_{ij}.
\label{eq:MAC_joint_non-trivial_condition}
\end{equation}
Furthermore, when the innocent symbols $x_0^{(1)}$ and $x_0^{(2)}$ can be freely chosen, the covertness constraint $p_Z=q_0$ can always be satisfied with a non-trivial joint input distribution.
\end{theorem}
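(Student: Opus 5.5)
The plan is to reduce the covertness constraint \eqref{eq:MAC_joint_covertness} to a single linear equation in the joint input probabilities, and then argue directly on the signs of $d_{10},d_{01},d_{11}$, as in the proof of Theorem~\ref{prop:MAC_independent_non-trivial}. The linear structure should make this simpler than the bilinear independent-input case.

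\textbf{Reduction.} I relabel the inputs so that $\alpha,\beta,\eta$ are the probabilities of the three non-innocent joint symbols $(1-\ell,\ell')$, $(\ell,1-\ell')$ and $(1-\ell,1-\ell')$. Subtracting $a_{\ell,\ell'}$ from both sides of \eqref{eq:MAC_joint_covertness} then gives
\begin{equation*}
G(\alpha,\beta,\eta)\triangleq\alpha d_{10}+\beta d_{01}+\eta d_{11}=0,
\qquad \alpha,\beta,\eta\ge0,\quad \alpha+\beta+\eta\le1 .
\end{equation*}
The marginals satisfy $\Pr(X_1\neq\ell)=\alpha+\eta$ and $\Pr(X_2\neq\ell')=\beta+\eta$. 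So non-triviality means that $\alpha+\eta\in(0,1)$ or $\beta+\eta\in(0,1)$. The equivalence of \eqref{eq:MAC_range_condition_DMS} and \eqref{eq:MAC_joint_non-trivial_condition} is immediate, because each $d_{ij}$ equals the corresponding $a$ minus $a_{\ell,\ell'}$.

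\textbf{Necessity.} Condition \eqref{eq:MAC_range_condition_DMS} fails exactly when all three of $d_{10},d_{01},d_{11}$ are strictly positive, or all three are strictly negative. In either case $G=0$ with nonnegative weights forces $\alpha=\beta=\eta=0$. That is the all-innocent input, whose marginals are both degenerate, so it is trivial.

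\textbf{Sufficiency.} I split into two cases.
\begin{itemize}
\item Some $d$ vanishes. If $d_{10}=0$, take $\alpha\in(0,1)$ and $\beta=\eta=0$; then $X_1$ is non-degenerate. The case $d_{01}=0$ is symmetric. If $d_{11}=0$, take $\eta\in(0,1)$ and $\alpha=\beta=0$; then both marginals are non-degenerate.
\item Two of the $d$'s have strictly opposite signs, say $d_a>0>d_b$. Put mass $t\,(-d_b)/(d_a-d_b)$ on symbol $a$ and mass $t\,d_a/(d_a-d_b)$ on symbol $b$, with $t=\tfrac12$, and set the third probability to zero. Then $G=0$ and the total non-innocent mass equals $t<1$. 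Hence each marginal's non-innocent probability lies in $[0,t]$, and at least one is strictly positive (possibly via $\eta$). So that marginal lies in $(0,1)$.
\end{itemize}
Checking that the marginal stays strictly below $1$ is the only delicate point. This is why I keep positive mass on the innocent pair, which the total mass $t<1$ guarantees.

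\textbf{Free choice of innocent symbols.} I sort the four values $a_{00},a_{10},a_{01},a_{11}$ and choose $(\ell,\ell')$ attaining the second-smallest value. At least one remaining value lies weakly below it and at least two lie weakly above it, so \eqref{eq:MAC_joint_non-trivial_condition} holds. The previous part then gives a non-trivial joint input distribution satisfying the covertness constraint. Ties cause no trouble here, since the inequalities in \eqref{eq:MAC_joint_non-trivial_condition} are non-strict; this is the main contrast with the independent-input case.
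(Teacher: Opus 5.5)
Your proof is correct and follows essentially the same route as the paper. Necessity holds because every non-innocent output probability lies strictly on one side of $a_{\ell,\ell'}$. Sufficiency uses either a tie (a vanishing $d$) or two non-innocent pairs on opposite sides, and the free-choice claim comes from taking the second-smallest value. The only cosmetic difference is in the opposite-sign case: you keep mass $\tfrac12$ on the innocent pair, whereas the paper mixes the two non-innocent pairs directly and obtains a non-degenerate marginal from their distinctness.
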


\begin{proof}
Suppose first that $a_{\ell,\ell'}=a_{i^\star j^\star}$ for some $(i^\star,j^\star)\neq(\ell,\ell')$. For any $\theta\in(0,1)$, consider the joint input distribution $\Pr(X_1=\ell,X_2=\ell')=1-\theta$ and $\Pr(X_1=i^\star,X_2=j^\star)=\theta$. Then $p_Z(0)=(1-\theta)a_{\ell,\ell'}+\theta a_{i^\star j^\star}=a_{\ell,\ell'}$, and therefore the covertness constraint is satisfied. Since the two joint input symbols are distinct, at least one of the two transmitters has a non-degenerate marginal input distribution. Hence, the input distribution is non-trivial. Now suppose that
\begin{equation*}
    \min_{\substack{i,j\\(i,j)\neq(\ell,\ell')}}
    a_{ij}
    <
    a_{\ell,\ell'}
    <
    \max_{\substack{i,j\\(i,j)\neq(\ell,\ell')}}
    a_{ij}.
\end{equation*}
Then there exist two non-innocent joint input pairs
$(i_1,j_1)$ and $(i_2,j_2)$ such that $a_{i_1j_1}<a_{\ell,\ell'}<a_{i_2j_2}$. Consequently, there exists $\theta\in(0,1)$ satisfying $(1-\theta)a_{i_1j_1} +\theta a_{i_2j_2}=a_{\ell,\ell'}$. Choosing $\Pr(X_1=i_1,X_2=j_1)=1-\theta$ and $\Pr(X_1=i_2,X_2=j_2)=\theta$ therefore yields the covertness constraint. Since the two joint input pairs are distinct, at least one transmitter has a non-degenerate marginal input distribution, and hence the input distribution is non-trivial.

Conversely, suppose that $a_{\ell,\ell'}
<
\min_{\substack{i,j\\(i,j)\neq(\ell,\ell')}}a_{ij}$. Then every non-innocent joint input pair produces a strictly larger probability of $Z=0$ than the innocent pair. Hence, any joint input distribution assigning positive probability to a non-innocent pair satisfies $p_Z(0)>a_{\ell,\ell'}$, and the covertness constraint cannot be satisfied. The case $a_{\ell,\ell'}>\max_{\substack{i,j\\(i,j)\neq(\ell,\ell')}}a_{ij}$ is analogous. Therefore, the stated range condition is necessary.

Finally, suppose that the innocent symbols $x_0^{(1)}$ and $x_0^{(2)}$ can be freely chosen. Consider the four values $a_{00}$, $a_{10}$, $a_{01}$, and $a_{11}$, and let $a_{\ell,\ell'}$ be the second-smallest value among these four values. If the selected value is strictly between the smallest and largest of the remaining three values, then \eqref{eq:MAC_joint_non-trivial_condition} is satisfied. Otherwise, the second-smallest value is attained by at least two distinct joint input pairs. Choosing one of these pairs as the innocent pair, another distinct pair induces the same warden output distribution, and the first case of the proof applies. Hence, there always exists a choice of innocent symbols for which the covertness constraint can be satisfied with a non-trivial joint input distribution.
\end{proof}

\begin{remark}[Effect of increasing the channel input alphabets]
    Similar to the previous section, increasing the channel input alphabet of either or both transmitters introduces additional channel input pairs and, consequently, additional warden output distributions that can be used to satisfy the covertness constraint. Thus, enlarging the input alphabets can further relax the conditions in Theorems~\ref{prop:MAC_independent_non-trivial} and \ref{prop:MAC_joint_non-trivial}.
\end{remark}

\paragraph{Comparison of Independent and Joint Inputs}

Theorems~\ref{prop:MAC_independent_non-trivial} and \ref{prop:MAC_joint_non-trivial} reveal a fundamental distinction between independent and statistically dependent channel inputs. For the joint-input model, the covertness constraint with a non-trivial input distribution is possible whenever the innocent warden output probability $a_{\ell,\ell'}$ belongs to the closed interval generated by the three non-innocent output probabilities:
\begin{equation}
    \min_{\substack{i,j\\(i,j)\neq(\ell,\ell')}}
    a_{ij}
    \leq
    a_{\ell,\ell'}
    \leq
    \max_{\substack{i,j\\(i,j)\neq(\ell,\ell')}}
    a_{ij}.
\end{equation}
This is precisely the usual convex-hull condition.

For independent inputs, the same closed range condition is not sufficient. If the innocent output probability coincides only with $a_{1-\ell,\ell'}$, or only with $a_{\ell,1-\ell'}$, a non-trivial independent input distribution can still be constructed by allowing the corresponding transmitter to vary while the other remains innocent. However, if the innocent output probability coincides only with $a_{1-\ell,1-\ell'}$, independence prevents the transmitters from mixing only the two joint input symbols $(\ell,\ell')$ and $(1-\ell,1-\ell')$. In that case, the remaining two joint input symbols necessarily appear whenever both marginal inputs are randomized, and their contributions cannot be canceled unless the condition in Theorem~\ref{prop:MAC_independent_non-trivial} is satisfied.

Thus, the joint-input model has a less restrictive condition for the existence of a non-trivial covert input distribution because it permits the transmitters to exploit correlation between their channel inputs. In particular, consider again the case in which both component channels are non-degenerate BSCs and $\ell=\ell'=0$. Then $a_{00}=a_{11}$, and $a_{10}=a_{01}\neq a_{00}$. Theorem~\ref{prop:MAC_independent_non-trivial} shows that the covertness condition cannot be satisfied by a non-trivial independent input distribution: neither $d_{10}$ nor $d_{01}$ is zero, while $d_{11}=0$ and the three differences do not contain both a strictly positive and a strictly negative value. Hence no transmitter can communicate using a non-trivial independent input distribution while satisfying the covertness constraint. In contrast, for the joint-input model, consider $\Pr(X_1=0,X_2=0)=1-\theta$ and $\Pr(X_1=1,X_2=1)=\theta$, for any $\theta\in(0,1)$. Since $a_{00}=a_{11}$,
\begin{equation*}
    p_Z(0)=(1-\theta)a_{00}+\theta a_{11}=a_{00}=q_0(0).
\end{equation*}Thus, the joint input distribution is covert and non-trivial. This example demonstrates that statistical dependence between the transmitters can create a covert communication opportunity that is unavailable under independent channel inputs.

\section{Classical-Quantum and Quantum Channels}
\label{sec:Quantum}
In this section, we study classical covert communication over point-to-point and \ac{MAC} classical-quantum and quantum channels. We begin by characterizing an achievable rate region for covert capacity for transmitting classical information over point-to-point classical-quantum channels and then extend the results developed for classical channels in Section~\ref{sec:Classical} to the classical-quantum setting. We next extend the results developed for classical \acp{MAC} in Section~\ref{sec:MAC} to the classical-quantum settings. Subsequently, we derive an achievable rate region for covert communication over quantum channels. Finally, we identify conditions under which positive covert rates can be achieved over quantum channels.
\subsection{Classical-Quantum Channels}
\label{sec:Classical_Quantum_Channels}
In this section, we consider classical-quantum channels and extend the results developed for classical channels in Section~\ref{sec:Classical} to the classical-quantum setting.
\begin{theorem}[Achievable Rate Region for  Classical-Quantum Channels]
\label{thm:capacity_cq}
An inner bound on the covert capacity of the classical-quantum channel $\pr{\calX,\br{\rho_{YZ}^{(x)}}_{x\in\calX}}$ with stochastic encoding~is
\begin{subequations}\label{eq:Cap_cq}
\begin{align}%
&\calC_{\mathrm{C\text{-}CQ}}\supseteq\bigcup_{\rho_{UXYZ}\in\calG} \left.\begin{cases}(R,R_K):\\
  R<\bbI(U;Y),\\
  R_K>\sbr{\bbI(U;Z)-\bbI(U;Y)}^+,
\end{cases}\hspace{-3mm}\right\},
\label{eq:inRnone_cq}
\end{align}where
\begin{align}
  \calG \triangleq \left.\begin{cases}\rho_{UXYZ}:\\
  \rho_{UXYZ}=p_{UX}(u,x)\rho_{YZ}^{(x)},\\
\sigma_0=\tra_Y\sbr{\rho_{YZ}^{(x_0)}},\\
\rho_Z=\sigma_0,\\
\end{cases}\hspace{-3mm}\right\}.\label{eq:thm_S_cq}
\end{align} 
\end{subequations}
\end{theorem}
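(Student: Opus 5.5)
We will prove the inner bound by a random-coding argument with channel prefixing, combining a classical-quantum packing lemma for reliability with a quantum channel-resolvability (soft-covering) lemma for covertness. Fix $\rho_{UXYZ}\in\calG$ and rates with $R<\bbI(U;Y)$ and $R_K>[\bbI(U;Z)-\bbI(U;Y)]^+$. Generate a codebook $\{u^n(m,k)\}$ of $2^{n(R+R_K)}$ codewords i.i.d.\ according to $p_U^{\otimes n}$. To send $(m,k)$, the encoder passes $u^n(m,k)$ through the memoryless prefix channel $p_{X|U}$ to produce $X^n$. The effective classical-quantum channel is then $u\mapsto \rho_{YZ}^{(u)}\triangleq\sum_x p_{X|U}(x|u)\rho_{YZ}^{(x)}$.

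\textbf{Reliability.} For each key $k$, the receiver decodes $m$ from the subcodebook $\{u^n(m,k)\}_m$, which contains $2^{nR}$ codewords. We construct the decoding POVM using the pinching method of Hayashi, or equivalently a Hayashi--Nagaoka square-root measurement built from projectors onto $\{\rho_{Y^n}^{u^n}\ge 2^{na}\rho_Y^{\otimes n}\}$. The expected error is bounded by $\tra[\rho_{UY}^{\otimes n}\{\cdot\le\cdot\}]+2^{nR}2^{-na}$. For any $a<\bbI(U;Y)$, the quantum AEP, or the pinching-based bound on the information spectrum, makes the first term vanish, so the error vanishes whenever $R<\bbI(U;Y)$.

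\textbf{Covertness.} The warden's averaged state is $\rho_{Z^n}=2^{-n(R+R_K)}\sum_{m,k}\rho_Z^{u^n(m,k)\otimes}$. We apply the quantum soft-covering lemma, which relies on the pinching inequality $\rho\le v_n\mathcal{E}_{\sigma}(\rho)$ with polynomial $v_n$. It gives $\bbE\|\rho_{Z^n}-\rho_Z^{\otimes n}\|_1\to0$ whenever $R+R_K>\bbI(U;Z)$. Since $\rho_Z=\sigma_0$ by the constraint in $\calG$, we have $\rho_Z^{\otimes n}=\sigma_0^{\otimes n}$, and \eqref{eq:Covertness_Const} follows. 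When $\bbI(U;Y)\ge\bbI(U;Z)$, the message rate alone can cover, and the condition becomes $R_K>0$ (or $R_K\ge0$ in the closure). Together with $R<\bbI(U;Y)$, this yields $R_K>[\bbI(U;Z)-\bbI(U;Y)]^+$, since $R$ can be chosen arbitrarily close to $\bbI(U;Y)$.

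\textbf{Combining.} Markov's inequality on the two expectations shows that a single deterministic codebook achieves both small error and small trace distance. This also covers the classical case, Theorem~\ref{thm:Capacity_Classical_Sto}, through commuting states.

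The main obstacle is the soft-covering step with an exact, non-asymptotic handle on $\|\rho_{Z^n}-\sigma_0^{\otimes n}\|_1$. Non-commutativity between the codeword-conditioned states and $\sigma_0^{\otimes n}$ prevents a direct typicality argument. I would first bound the trace distance by the square root of a Rényi-2/collision-type quantity after pinching $\sigma_0^{\otimes n}$, paying only the polynomial factor $v_n$, and then show that this quantity decays exponentially when $R+R_K>\bbI(U;Z)$.
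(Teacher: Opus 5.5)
Your overall scheme matches the paper's. Theorem~\ref{thm:capacity_cq} is inherited from Appendix~\ref{proof:thm:Achievable_Quantum}, which uses an i.i.d.\ $p_U$ codebook of rate $R+R_K$ with memoryless prefixing, a key-indexed packing decoder for $R<\bbI(U;Y)$, and a pinching-based resolvability lemma (Appendix~\ref{proof:lemma:Res_Quantum}) for $R+R_K>\bbI(U;Z)$. The constraint $\rho_Z=\sigma_0$ then converts resolvability into \eqref{eq:Covertness_Const}. Your Hayashi--Nagaoka decoder and your direct trace-distance target are harmless variants.

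The gap is in the one step you propose to carry out yourself: a collision ($\chi^2$-type) bound cannot reach the threshold $\bbI(U;Z)$. Let $S\triangleq\sigma_0^{\otimes n}$ and $\Gamma\triangleq\rho_{Z^n}-S$. By H\"older, $\tra\bigl[\Gamma S^{-1/2}\Gamma S^{-1/2}\bigr]$ upper bounds $\|\Gamma\|_1^2$. However, the codewords are pairwise independent with mean state $S$, so the codebook average of this quantity is exactly $2^{-n(R+R_K)}\bigl(2^{nD_2(\rho_{UZ}\|\rho_U\otimes\rho_Z)}-1\bigr)$, where $D_2$ is the sandwiched R\'enyi divergence of order $2$. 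Pinching changes this only by polynomial factors $v_n$, not in the exponent. Your route therefore certifies covertness only for $R+R_K>D_2(\rho_{UZ}\|\rho_U\otimes\rho_Z)$. This order-$2$ mutual information is $\ge\bbI(U;Z)$ and typically strictly larger: already classically, for a BSC$(0.1)$ with uniform input, it is $1+\log 0.82\approx0.71$, versus $\bbI(U;Z)\approx0.53$. The claimed exponential decay for every $R+R_K>\bbI(U;Z)$ is thus false as stated, and you would prove a strictly smaller region than the theorem asserts.

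The missing idea is to work at R\'enyi order $1+\alpha$ and let $\alpha\to0$, or equivalently to smooth. The paper bounds $\bbE\,\bbD(\hat\rho_{Z^n}\|\rho_Z^{\otimes n})$ instead. Jensen's inequality moves the average over the other codewords inside the operator-concave logarithm. The pinching inequality, operator monotonicity of $\log$, and $\log(1+x)\le x^\alpha/\alpha$ then give $\frac{v_Z^\alpha}{\alpha}2^{\alpha n(D_{1+\alpha}-R-R_K)}$, with $D_{1+\alpha}$ a sandwiched R\'enyi mutual information of order $1+\alpha$. Since $D_{1+\alpha}\to\bbI(U;Z)$ as $\alpha\to0$, any $R+R_K>\bbI(U;Z)$ suffices, and quantum Pinsker yields the trace-distance statement.

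Alternatively, you could keep your second-moment computation but first truncate each pinched conditional state to the part where its information density is at most $n(\bbI(U;Z)+\delta)$. The discarded part is controlled by gentle measurement, since its weight vanishes by the quantum information-spectrum bound. Or you could simply cite the standard quantum covering lemma as a black box.

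Separately, $R<\bbI(U;Y)$ and $R+R_K>\bbI(U;Z)$ do not by themselves yield every pair with $R_K>[\bbI(U;Z)-\bbI(U;Y)]^+$ when the target $R$ is small. You cannot ``choose'' $R$; instead, the stochastic encoder should append a uniform dummy index of rate just below $\bbI(U;Y)-R$, decoded along with $m$. The paper also leaves this step implicit.
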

The proof of Theorem~\ref{thm:capacity_cq} follows from that of Theorem~\ref{thm:Achievable_Quantum} in Appendix~\ref{proof:thm:Achievable_Quantum}.

Similar to the classical case discussed in Remark~\ref{remark:Sto_Deter}, Theorem~\ref{thm:capacity_cq} characterizes an achievable rate region on the covert capacity under stochastic encoding, which is implemented via channel prefixing. One can show that this achievable rate region is no smaller than, and can be strictly larger than, the covert capacity under deterministic encoding, which is recovered by setting $U=X$ in Theorem~\ref{thm:capacity_cq}. 
In the sequel, for simplicity, we restrict our attention to the deterministic encoding scheme. Nevertheless, the results can be extended to the stochastic encoding scheme presented in Theorem~\ref{thm:capacity_cq}.

It is well known that the covert capacity of classical-quantum point-to-point channels obeys the square-root law \cite{WangCQ,Bullock25}. This behavior arises because, for many channels, the covertness constraint $\rho_Z = \sigma_0$ in \eqref{eq:thm_S_cq} of Theorem~\ref{thm:capacity_cq} cannot be satisfied. Consider a classical-quantum channel $x\in\calX\mapsto\rho_{YZ}^{(x)}$, where $\calX$ is the channel input alphabet, that maps a classical channel input $x\in\calX$ to output states $\rho_Y^{(x)}=\tra_Z\sbr{\rho_{YZ}^{(x)}}\in \calD(\calH_Y)$ at the legitimate receiver and $\rho_Z^{(x)}=\tra_Y\sbr{\rho_{YZ}^{(x)}}\in \calD(\calH_Z)$ at the warden. The covertness constraint $\rho_Z = \sigma_0$ can be satisfied if and only if the output state observed by the warden under the innocent input symbol $x_0\in\calX$, denoted by $\sigma_0$, can be expressed as a convex combination of the output states corresponding to the channel input states $x \in \calX\backslash\br{x_0}$, i.e.,
\begin{align}
    \sigma_0\in\mathrm{conv}\br{\rho_Z^{(x)}:x \in \calX\backslash\br{x_0}}.\label{eq:Covertness_Positivity_CQ}
\end{align} 
In the following, we generalize the results obtained in Section~\ref{sec:Classical} for classical channels to classical-quantum channels. Intuitively, when a probability simplex is passed through a classical \ac{DMC} and mapped into another probability simplex of smaller dimension, the images of the input extreme points (that is, the conditional output distributions induced by the input symbols) may no longer all correspond to extreme points of the resulting output simplex. As a result, some of these output distributions can end up in the interior of the convex hull formed by the others. In particular, the output distribution associated with the innocent symbol may lie in the interior of this convex set, in which case the covertness constraint $p_Z = q_0$ can be met. 
For classical-quantum \acp{DMC}, the mapping from input distributions (a probability simplex) to output states produces a convex set of density operators. If the resulting set of output states has affine dimension strictly smaller than that of the input probability simplex, i.e., $\card{\calX}-1$, then the states $\br{\rho^{(x)}}_{x\in\calX}$ may not all be affinely independent. Consequently, some of these states may lie in the convex hull of the others. In particular, the output state corresponding to the innocent symbol may lie in the convex hull of the states induced by the non-innocent symbols, in which case the covertness constraint $\rho_Z = \sigma_0$ can be satisfied.

For example, consider the following classical-quantum channel, which is the classical-quantum counterpart of the classical channel depicted in Fig.~\ref{fig:GIBO},
\begin{align}
    \rho^{(i)}_Z&=(1-\epsilon_i)\Psi_0+\epsilon_i\Psi_1,\quad\text{for}\quad i\in\sbr{0\!:\!m-1},\label{eq:BIBO_CQ}
\end{align}where $\Psi_0$ and $\Psi_1$ are two fixed quantum states that may not commute in general. Throughout this subsection, we assume that $\Psi_0\ne\Psi_1$, since otherwise the covertness constraint is automatically satisfied. Note that, for this channel, $\card{\calX}-1=m-1$, while $\mathrm{affdim}\!\br{\rho^{(i)}_Z\!:i\in[0\!:\!m-1]}=1$. Indeed, assuming that the $\epsilon_i$'s are not all equal, all of the output states lie on the line segment connecting $\Psi_0$ and $\Psi_1$. In this case, assuming that $x_0=j$ for some $j\in[0\!:\!m-1]$, the covertness constraint $\rho_Z=\sigma_0$ can be satisfied if $\sigma_0\in\mathrm{conv}\br{\rho^{(i)}_Z,i\in[0\!:\!m-1],i\ne j}$, where $\sigma_0=\rho^{(j)}_Z$. 
\begin{theorem}
    For the classical-quantum channel \eqref{eq:BIBO_CQ}, assuming that $x_0=j$, for some $j\in\sbr{0\!:\!m-1}$, there exists a non-degenerate input distribution satisfying the covertness constraint $\rho_Z=\sigma_0$, if and only~if
\begin{align}
    \min\limits_{\substack{i\in[0:m-1]\\i\ne j}}\epsilon_i\le\epsilon_j\le\max\limits_{\substack{i\in[0:m-1]\\i\ne j}}\epsilon_i.\nonumber
\end{align}
Furthermore, when $m\geq 3$ and the innocent symbol $x_0$ can be freely chosen, the covertness constraint $\rho_Z=\sigma_0$ can always be satisfied.
\end{theorem}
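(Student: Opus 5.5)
The plan is to reduce the statement to the classical binary-output case, Theorem~\ref{thm:General_DMC}, by exploiting the affine structure of \eqref{eq:BIBO_CQ}.

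For an input distribution $p_X(i)=p_i$, linearity of the partial trace gives
\begin{align}
\rho_Z=\sum_{i=0}^{m-1}p_i\rho_Z^{(i)}=\Bigl(1-\sum_{i}p_i\epsilon_i\Bigr)\Psi_0+\Bigl(\sum_i p_i\epsilon_i\Bigr)\Psi_1.\nonumber
\end{align}
The covertness constraint $\rho_Z=\sigma_0=\rho_Z^{(j)}$ then reads $\bigl(\sum_i p_i\epsilon_i-\epsilon_j\bigr)(\Psi_1-\Psi_0)=0$.

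The key step is to show that this operator identity is equivalent to the scalar identity $\sum_i p_i\epsilon_i=\epsilon_j$. Since we assume $\Psi_0\neq\Psi_1$, the operator $\Psi_1-\Psi_0$ is nonzero. A scalar multiple $c(\Psi_1-\Psi_0)$ of a nonzero operator vanishes only if $c=0$. To make this concrete, one can take the trace against any Hermitian operator $A$ with $\tra[A(\Psi_1-\Psi_0)]\neq 0$; such an $A$ exists, for instance $A=\Psi_1-\Psi_0$ itself, whose Hilbert--Schmidt norm is positive. This step is the only place where quantum structure enters. Non-commutativity of $\Psi_0$ and $\Psi_1$ plays no role, because the channel output set is an affine line in operator space. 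I expect this injectivity of the map $t\mapsto(1-t)\Psi_0+t\Psi_1$ to be the only point needing care.

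Once the constraint is reduced to $\sum_i p_i\epsilon_i=\epsilon_j$, it coincides with \eqref{eq:Possible_Set_Dist_Output_GIBO} together with $q_0(0)=\epsilon_j$ in the proof of Theorem~\ref{thm:General_DMC}. We can therefore repeat that argument verbatim. For a non-degenerate distribution, $p_j<1$, so we may normalize to $\lambda_i=p_i/(1-p_j)$ for $i\neq j$. The set of attainable values $\sum_{i\ne j}\lambda_i\epsilon_i$ is exactly the interval $[\min_{i\neq j}\epsilon_i,\max_{i\ne j}\epsilon_i]$, which yields the necessary and sufficient condition. Conversely, given a point of this interval, any convex weights $\lambda$ realizing it, for example weights on the argmin and argmax, combined with any $p_j\in[0,1)$ give a non-degenerate distribution satisfying the constraint.

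The final claim about a freely chosen innocent symbol follows by the same median argument used for Theorem~\ref{thm:General_DMC}. Choose $j$ with $\epsilon_j$ a median of $\{\epsilon_i\}$. When $m\ge3$, some other $\epsilon_i$ lies weakly below $\epsilon_j$ and some other lies weakly above it, so the interval condition holds.
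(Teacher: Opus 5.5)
Your proposal is correct and follows the paper's proof essentially verbatim. Both arguments write $\rho_Z$ as an affine combination of $\Psi_0$ and $\Psi_1$, use $\Psi_0\neq\Psi_1$ to reduce $\rho_Z=\sigma_0$ to the scalar identity $\sum_i p_i\epsilon_i=\epsilon_j$, and then reuse the classical binary-output argument, including the median choice of the innocent symbol.

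One small fix is needed in the sufficiency step. Take $p_j\in(0,1)$ rather than $p_j\in[0,1)$, because $p_j=0$ together with a point-mass $\lambda$ (for example when $\epsilon_j$ equals the minimum and all weight sits on the argmin) yields a degenerate distribution.
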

\begin{proof}
Since the warden's output state induced by an input distribution
$p_X$ is given by
\begin{align}
\rho_Z
&=\sum_{i=0}^{m-1}p_i\rho_Z^{(i)}\nonumber\\
&=\sum_{i=0}^{m-1}p_i
\left[(1-\epsilon_i)\Psi_0+\epsilon_i\Psi_1\right]\nonumber\\
&=
\left(1-\sum_{i=0}^{m-1}p_i\epsilon_i\right)\Psi_0
+
\left(\sum_{i=0}^{m-1}p_i\epsilon_i\right)\Psi_1,
\label{eq:CQ_GIBO_Output}
\end{align}
the output state $\rho_Z$ depends on $p_X$ only through the
quantity $\sum_{i=0}^{m-1}p_i\epsilon_i$. In particular,
\begin{equation*}
\sigma_0
=
\rho_Z^{(j)}
=
(1-\epsilon_j)\Psi_0+\epsilon_j\Psi_1.
\end{equation*}
Since $\Psi_0\neq\Psi_1$, the representation of a state as an
affine combination of $\Psi_0$ and $\Psi_1$ is unique. Hence,
$\rho_Z=\sigma_0$ if and only if
\begin{equation}
\sum_{i=0}^{m-1}p_i\epsilon_i=\epsilon_j.
\label{eq:CQ_GIBO_Covertness_Scalar}
\end{equation}
Note that the constraint in
\eqref{eq:CQ_GIBO_Covertness_Scalar} is identical to the corresponding
constraint for the classical channel in \eqref{eq:Coverness_GIBO}. 
Therefore, the conditions under which the covertness constraint
$\sigma_0=\rho_Z$ is satisfied coincide with those for its classical
counterpart in \eqref{eq:Cov_BIBO_Final}. Importantly, this equivalence
holds regardless of whether $\Psi_0$ and $\Psi_1$ commute.
\end{proof}
\subsection{Classical-Quantum \texorpdfstring{\acp{MAC}}{MACs} with Independent Inputs}
We now show that the analysis in Section~\ref{sec:MAC} for \acp{MAC} can also extend to classical-quantum \acp{MAC}. The following capacity result from \cite[Theorem~3]{ExtendedPaperMAC} specializes to a classical-quantum \ac{MAC} in which each transmitter covertly communicates its own private message, and
hence the channel inputs are independent.
\begin{theorem}[Covert Capacity of Classical-Quantum \acp{MAC}]
\label{thm:capacity_cq_MAC}
The covert capacity of the classical-quantum \ac{MAC} $\pr{\calX_1,\calX_2,\br{\rho_{YZ}^{(x_1,x_2)}}_{(x_1,x_2)\in\calX_1\times\calX_2}}$ with deterministic encoding~is
\begin{align*}
&\calC_{\mathrm{C\text{-}CQ}}=\bigcup_{\rho_{X_1X_2YZ}\in\calG} \left.\begin{cases}(R_1,R_2)\!:\\
   R_1\le\bbI(X_1;Y|X_2),\\
  R_2\le\bbI(X_2;Y|X_1),\\
  R_1+R_2\le\bbI(X_1,X_2;Y),
\end{cases}\hspace{-3mm}\right\},
\end{align*}where
\begin{align*}
  \calG \triangleq \left.\begin{cases}\rho_{X_1X_2YZ}\!:\\
\rho_{X_1X_2YZ}=p_{X_1}(x_1)p_{X_2}(x_2)\rho_{YZ}^{(x_1,x_2)},\\
\bbI(X_1;Y|X_2)\ge\bbI(X_1;Z),\\
\bbI(X_2;Y|X_1)\ge\bbI(X_2;Z),\\
\bbI(X_1,X_2;Y)\ge\bbI(X_1,X_2;Z),\\
\sigma_0=\tra_Y\sbr{\rho_{YZ}^{\pr{x_0^{(1)},\,x_0^{(2)}}}},\\
\rho_Z=\sigma_0,\\
\end{cases}\hspace{-3mm}\right\}.
\end{align*}
\end{theorem}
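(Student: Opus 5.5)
The plan is to follow the classical MAC covert-capacity proof in \cite[Theorem~3]{ExtendedPaperMAC} step by step, replacing each classical ingredient by its classical-quantum counterpart. The paper already supplies these counterparts for the point-to-point case in the proof of Theorem~\ref{thm:Achievable_Quantum}.

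\textbf{Achievability.} Fix a product distribution $p_{X_1}p_{X_2}$ in $\calG$. Each transmitter $i$ draws a random codebook with $2^{n(R_i+R_{K,i})}$ codewords i.i.d.\ according to $p_{X_i}$, indexed by its message and a local randomization index. In the statement, however, no key rate appears and the mutual-information constraints are listed instead. So I would take the key rates to be zero and let the messages themselves play the role of the randomness used for resolvability.

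Reliability follows from the standard successive or simultaneous decoding argument for classical-quantum MACs. One uses pinching \cite{QIT_Hayashi}, or the Hayashi--Nagaoka operator inequality, to obtain a POVM whose error vanishes whenever $R_1<\bbI(X_1;Y|X_2)$, $R_2<\bbI(X_2;Y|X_1)$, and $R_1+R_2<\bbI(X_1,X_2;Y)$.

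Covertness is split in two steps using the triangle inequality,
\[
\lVert\rho_{Z^n}-\sigma_0^{\otimes n}\rVert_1\le\lVert\rho_{Z^n}-\rho_Z^{\otimes n}\rVert_1+\lVert\rho_Z^{\otimes n}-\sigma_0^{\otimes n}\rVert_1.
\]
The second term vanishes identically because $\rho_Z=\sigma_0$. The first term is handled by a multi-user quantum soft-covering (resolvability) lemma for product codebooks. This lemma makes the term vanish whenever $R_1>\bbI(X_1;Z)$, $R_2>\bbI(X_2;Z)$, and $R_1+R_2>\bbI(X_1,X_2;Z)$.

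Decoding and resolvability can hold at the same time exactly when rates exist between the $Z$-bounds and the $Y$-bounds. This is what the three inequalities in $\calG$ guarantee. A corner-point and time-sharing argument then shows that every point of the stated pentagon is achievable, for instance by adding a local randomization index when $R_i$ falls below its $Z$-bound.

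\textbf{Converse.} The cleanest route is to note that the bounds coincide with the classical ones once each $\bbI$ is understood as Holevo information. I would use Fano's inequality together with the Holevo bound and chain rules to get $nR_1\le\sum_t\bbI(X_{1,t};Y_t|X_{2,t})+n\epsilon_n$, and similarly for the other two rate bounds. For covertness, I would use continuity of the von Neumann entropy (Fannes--Audenaert) and the fact that $\lVert\rho_{Z^n}-\sigma_0^{\otimes n}\rVert_1\to0$ to show that the single-letterized averaged state satisfies $\rho_Z=\sigma_0$ in the limit. The constraints $\bbI(\cdot;Y)\ge\bbI(\cdot;Z)$ should come out as in \cite{ExtendedPaperMAC}, from the fact that with no shared key the warden's output must look i.i.d. Concretely, $nR\ge\bbI(M;Z^n)+n\,\bbD$-type terms show that the messages must carry enough randomness. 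I would follow that argument verbatim, using quantum relative entropy in place of classical relative entropy.

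\textbf{Main obstacle.} I expect the main obstacle to be the converse step that produces the inequalities $\bbI(\cdot;Y)\ge\bbI(\cdot;Z)$. Its classical proof relies on relations of the form $\bbD(P_{Z^n}\|Q_0^{\otimes n})\approx 0 \Rightarrow \bbI(X^n;Z^n)\approx n\,\bbI$. These must be transferred to quantum relative entropy with continuity bounds that are uniform in $n$. I would first try the $\bbH(Z^n)\ge n\bbH(\sigma_0)-o(n)$ argument via Fannes continuity, which gives the needed single-letter comparison. Finally, a continuity argument in $p_{X_1}p_{X_2}$ lets me pass from the limit to the exact constraint $\rho_Z=\sigma_0$.
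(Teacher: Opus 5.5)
The paper gives no argument of its own here. It only asserts that the classical result \cite[Theorem~3]{ExtendedPaperMAC} specializes to the classical-quantum MAC, so porting that proof is the intended route. As sketched, however, your port does not yield the stated region, for two concrete reasons.

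\textbf{Achievability.} With deterministic encoding and no key, the messages are the only randomness available for soft covering. A code built on $p_{X_1}p_{X_2}$ therefore needs $R_1>\bbI(X_1;Z)$, $R_2>\bbI(X_2;Z)$ and $R_1+R_2>\bbI(X_1,X_2;Z)$. These lower bounds are also necessary. Since $X_i^n$ is a function of $M_i$, the Fannes--Audenaert argument you propose gives $nR_i=\bbH(M_i)\ge\bbI(X_i^n;Z^n)\ge\sum_t\bbI(X_{i,t};Z_t)-o(n)$, and the same holds for the sum rate.

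Hence, for a fixed $p$, your scheme only gives rate pairs lying between the $Z$-bounds and the $Y$-bounds. The three inequalities in $\calG$ are exactly the conditions for this set to be nonempty, not for it to fill the pentagon. For instance, the corner $(\bbI(X_1;Y|X_2),\bbI(X_2;Y))$ has $R_2=\bbI(X_2;Y)$, and $\calG$ allows $\bbI(X_2;Y)<\bbI(X_2;Z)\le\bbI(X_2;Y|X_1)$.

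Neither of your fixes reaches such points:
\begin{itemize}
\item A local randomization index is stochastic encoding, which the statement excludes. Even if the index is decoded, total rates must stay in the $Y$-pentagon, which forces it to zero at that corner.
\item Time-sharing between corner points does not preserve covertness. The warden's output is a product over blocks, and monotonicity of the trace distance under partial trace forces each block's code to be covert on its own. Each corner would then have to satisfy the soft-covering conditions by itself.
\end{itemize}
For the same reason, reliability requires a simultaneous decoder for the two-sender classical-quantum MAC, not successive decoding plus time-sharing. Unless you show that the missing points are covered by some other $p\in\calG$, which does not follow from the inequalities defining $\calG$, this direction is not established.

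\textbf{Converse.} After Fano and single-letterization with a uniform time index $T$, the inputs $X_{1,T}$ and $X_{2,T}$ are independent only conditionally on $T$. So $T$ cannot be absorbed into $(X_1,X_2)$ without destroying the product form required by $\calG$. Covertness is not the obstacle, since $\lVert\rho_{Z_t}-\sigma_0\rVert_1\le\lVert\rho_{Z^n}-\sigma_0^{\otimes n}\rVert_1$ for every $t$.

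What you actually obtain is a rate pair in the convex hull of a union of pentagons, with the $Y$-versus-$Z$ inequalities holding only on average over $t$. Passing to the stated $T$-free union would require that union to be convex. You do not prove this, and it is not true in general even for MACs without covertness; that is why the classical MAC region carries a time-sharing variable.

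By contrast, the step you flag as the main obstacle is routine. Under deterministic encoding, the averaged inequalities $\bbI(\cdot;Y)\ge\bbI(\cdot;Z)$ follow directly from combining the Fano upper bounds with the lower bounds $nR_i\ge\bbI(X_i^n;Z^n)$ above.
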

Similar to the previous sections, we characterize conditions under which the covertness constraint $\rho_Z=\sigma_0$ in Theorem~\ref{thm:capacity_cq_MAC} can be satisfied. Consider a classical-quantum \ac{MAC} with binary inputs $X_1,X_2\in\br{0,1}$, and suppose that the channel states observed by the warden are
\begin{equation}
\rho_Z^{(i,j)}
=(1-a_{ij})\Psi_0+a_{ij}\Psi_1,
\qquad i,j\in\br{0,1},
\label{eq:CQ_MAC_warden_states}
\end{equation}
where $\Psi_0,\Psi_1\in\calD(\calH_Z)$ are two distinct quantum states and $a_{ij}\in[0,1]$. This structure is the classical-quantum counterpart of the classical \ac{MAC} considered in Section~\ref{subsec:MAC_independent}. Let $(\ell,\ell')$ denote the innocent input pair. Let $\alpha_1=P(X_1\neq\ell)$ and $\alpha_2=P(X_2\neq\ell')$. The warden's output state is then given by
\begin{align}
\rho_Z
={}&(1-\alpha_1)(1-\alpha_2)\rho_Z^{(\ell,\ell')}+\alpha_1(1-\alpha_2)\rho_Z^{(1-\ell,\ell')}+(1-\alpha_1)\alpha_2\rho_Z^{(\ell,1-\ell')}+\alpha_1\alpha_2\rho_Z^{(1-\ell,1-\ell')}.
\label{eq:CQ_MAC_independent_output}
\end{align}
The innocent warden state is $\sigma_0=\rho_Z^{(\ell,\ell')}$. Using \eqref{eq:CQ_MAC_warden_states}, the state in \eqref{eq:CQ_MAC_independent_output} can be written as $\rho_Z=(1-\bar a)\Psi_0+\bar a\Psi_1$, where
\begin{align*}
\bar{a}
={}&(1-\alpha_1)(1-\alpha_2)a_{\ell,\ell'}+\alpha_1(1-\alpha_2)a_{1-\ell,\ell'}+(1-\alpha_1)\alpha_2a_{\ell,1-\ell'}+\alpha_1\alpha_2a_{1-\ell,1-\ell'}.
\end{align*}
Similarly, $\sigma_0=(1-a_{\ell,\ell'})\Psi_0+a_{\ell,\ell'}\Psi_1$. Since $\Psi_0\neq\Psi_1$, we have
\begin{equation*}
(1-a)\Psi_0+a\Psi_1=(1-b)\Psi_0+b\Psi_1
\quad\Longleftrightarrow\quad
a=b.
\end{equation*}
Consequently,
\begin{equation}
\rho_Z=\sigma_0
\quad\Longleftrightarrow\quad
\bar a=a_{\ell,\ell'}.\label{eq:Covertness_MAC_Ind_CQ}
\end{equation}
Define
% \begin{subequations}
\begin{align*}
d_{10}&=a_{1-\ell,\ell'}-a_{\ell,\ell'},\\
d_{01}&=a_{\ell,1-\ell'}-a_{\ell,\ell'},\\
d_{11}&=a_{1-\ell,1-\ell'}-a_{\ell,\ell'}.
\end{align*}
% \end{subequations}
Then from \eqref{eq:Covertness_MAC_Ind_CQ} the covertness condition $\rho_Z=\sigma_0$ is equivalent to
\begin{equation}
\alpha_1(1-\alpha_2)d_{10}
+(1-\alpha_1)\alpha_2d_{01}
+\alpha_1\alpha_2d_{11}=0.\label{eq:CQ_MAC_covertness_condition}
\end{equation}

Equation~\eqref{eq:CQ_MAC_covertness_condition} is identical to the covertness constraint in \eqref{eq:MAC_covertness_condition} obtained for the corresponding classical \ac{MAC} with independent inputs. Therefore, the existence of a non-trivial independent input distribution satisfying the covertness constraint is characterized by exactly the same conditions as in the classical setting. In particular, the conditions in Theorem~\ref{prop:MAC_independent_non-trivial} directly carry over to the classical-quantum \ac{MAC} under the structure in \eqref{eq:CQ_MAC_warden_states}. 

Similarly, the analysis of the other classical channels considered in Sections~\ref{sec:Classical} and~\ref{sec:MAC} can be extended to their classical-quantum counterparts. In each case, the condition under which the covertness constraint $\rho_Z=\sigma_0$ can be satisfied coincides with the corresponding classical condition $p_Z=q_0$. This observation implies that enlarging the channel input alphabet, equivalently, increasing the dimension of the input probability simplex, expands the set of output states that can be induced at the warden's observation. Consequently, as additional non-innocent symbols are introduced, the convex hull of the corresponding output states grows and may eventually contain the state induced at the warden's output under the no-communication mode, thereby enabling the covertness constraint to be satisfied. 
The following lemma generalizes Lemma~\ref{lemma:Enlarging_Channel_Input_Alphabet} from classical channels to classical-quantum channels, formalizing this intuition by showing that enlarging the channel input alphabet increases the set of output states that can be induced at the channel output.
\begin{lemma}
\label{lemma:Enlarging_Channel_Input_Alphabet_CQ}
Let $(\calX_1, \{\rho^{(x)}_1\}_{x\in\calX_1}, \calH)$ and $(\calX_2, \{\rho^{(x)}_2\}_{x\in\calX_2}, \calH)$ be two discrete memoryless classical-quantum channels with the same channel output Hilbert space $\calH$, where 
$\calX_1 \subseteq \calX_2$. Assume that
\begin{equation*}
    \rho^{(x)}_2 = \rho^{(x)}_1, \quad \forall x \in \calX_1.
\end{equation*}
Let $\calS_1$ and $\calS_2$ denote the sets of all output states induced by varying the input distributions over $\calX_1$ and $\calX_2$, respectively. Then $\calS_1 \subseteq \calS_2$, and therefore $\mathrm{affdim}(\calS_1)\le\mathrm{affdim}(\calS_2)$.
\end{lemma}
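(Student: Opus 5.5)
The argument mirrors the proof of Lemma~\ref{lemma:Enlarging_Channel_Input_Alphabet}, with probability vectors replaced by density operators, followed by a monotonicity property of affine dimension.

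First, I will describe the sets explicitly. By definition,
\begin{align*}
\calS_k=\Bigl\{\textstyle\sum_{x\in\calX_k}p_X(x)\rho_k^{(x)} : p_X \text{ a distribution on } \calX_k\Bigr\}=\mathrm{conv}\bigl\{\rho_k^{(x)}:x\in\calX_k\bigr\},\quad k\in\{1,2\}.
\end{align*}
Now take any $\rho\in\calS_1$, written as $\rho=\sum_{x\in\calX_1}p_X(x)\rho_1^{(x)}$. Extend $p_X$ to a distribution $\tilde p_X$ on $\calX_2$ by setting $\tilde p_X(x)=p_X(x)$ for $x\in\calX_1$ and $\tilde p_X(x)=0$ for $x\in\calX_2\setminus\calX_1$. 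This is a valid distribution because $\calX_1\subseteq\calX_2$. Using the hypothesis $\rho_2^{(x)}=\rho_1^{(x)}$ for $x\in\calX_1$, we get
\begin{align*}
\sum_{x\in\calX_2}\tilde p_X(x)\rho_2^{(x)}=\sum_{x\in\calX_1}p_X(x)\rho_1^{(x)}=\rho.
\end{align*}
Hence $\rho\in\calS_2$, and therefore $\calS_1\subseteq\calS_2$.

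Second, for the dimension statement I work in the real vector space of Hermitian operators on $\calH$. There, $\mathrm{affdim}(\calS)$ is the dimension of the affine hull $\mathrm{aff}(\calS)$, the smallest affine subspace containing $\calS$. Since $\calS_1\subseteq\calS_2\subseteq\mathrm{aff}(\calS_2)$, and $\mathrm{aff}(\calS_2)$ is an affine subspace, minimality gives $\mathrm{aff}(\calS_1)\subseteq\mathrm{aff}(\calS_2)$. Two nested affine subspaces of a finite-dimensional space have nested direction subspaces. Those direction subspaces are linear subspaces, so their dimensions are ordered. This yields $\mathrm{affdim}(\calS_1)\le\mathrm{affdim}(\calS_2)$.

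There is no real obstacle; the only point requiring care is to fix the ambient real vector space (Hermitian operators) so that "affine dimension" is well defined. Equivalently, one can note that $\mathrm{affdim}(\calS_k)$ equals the affine dimension of the generating set $\{\rho_k^{(x)}\}$, which by hypothesis is contained in the generating set for $k=2$.
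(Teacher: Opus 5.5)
Your proposal is correct and follows the paper's proof. You get the inclusion $\calS_1\subseteq\calS_2$ exactly as the paper does, by extending an input distribution on $\calX_1$ to $\calX_2$ with zero mass on $\calX_2\setminus\calX_1$ and using $\rho_2^{(x)}=\rho_1^{(x)}$ on $\calX_1$; the paper treats the affine-dimension inequality as immediate, and your minimality-of-the-affine-hull argument in the real space of Hermitian operators correctly fills in that step.
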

\begin{proof}
Any output state in $\calS_1$ can be written as
\begin{equation*}
    \rho_Z = \sum_{x \in \calX_1} p_X(x) \rho^{(x)}_1,
\end{equation*}for some input distribution $p_X$ supported on $\calX_1$. Since $\calX_1 \subseteq \calX_2$, the same $p_X$ can be viewed as a valid input distribution over $\calX_2$ by assigning zero probability to symbols in $\calX_2 \setminus \calX_1$. Hence, the same output state belongs to $\calS_2$, which implies $\calS_1 \subseteq \calS_2$. 
\end{proof}
The expansion of the set of output states induced at the channel output, achieved by enlarging the channel input alphabet, may reach a terminal regime in which the convex hull of the output states corresponding to non-innocent symbols coincides with the entire $\calD(\calH)$, i.e., the set of all quantum states on the Hilbert space $\calH$. The following lemma, which generalizes Lemma~\ref{lemma:General_Sufficient_Covertness} from classical channels to classical-quantum channels, characterizes this regime and provides sufficient conditions under which the covertness constraint $\rho_Z = \sigma_0$ can be satisfied for any choice of innocent state.
\begin{lemma}[Channel-Level Sufficient Condition to Satisfy Covertness Constraint]
\label{lemma:General_Sufficient_Covertness_CQ}
Consider a discrete memoryless classical-quantum channel $(\calX, \{\rho^{(x)}\}_{x\in\calX}, \calH)$ with finite output Hilbert space $\calH$ and channel input alphabet $\calX$. Let $x_0\in\calX$ be any innocent symbol. If 
\begin{align}
\mathrm{conv}\bigl\{\rho^{(x)} : x\in\calX\setminus\{x_0\}\bigr\}
= \calD(\calH),\nonumber
\end{align}
then the covertness constraint $\rho_Z = \sigma_0$ can be satisfied for \emph{any} choice of innocent symbol $x_0$. 
\end{lemma}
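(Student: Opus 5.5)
The plan mirrors the proof of Lemma~\ref{lemma:General_Sufficient_Covertness}. The innocent warden state is $\sigma_0=\rho^{(x_0)}$, which is a density operator on $\calH$, so $\sigma_0\in\calD(\calH)$. By hypothesis, $\calD(\calH)=\mathrm{conv}\{\rho^{(x)}:x\in\calX\setminus\{x_0\}\}$. Hence $\sigma_0$ lies in this convex hull. Concretely, there are finitely many symbols $x_1,\dots,x_k\in\calX\setminus\{x_0\}$ and weights $\lambda_i\ge0$ with $\sum_i\lambda_i=1$ such that
\begin{align}
\sigma_0=\sum_{i=1}^k\lambda_i\rho^{(x_i)}.\nonumber
\end{align}
Carathéodory's theorem even gives $k\le(\dim\calH)^2$. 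We do not need this bound, however, because membership in a convex hull already means a finite convex combination.

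Next, we turn the weights into an input distribution. Define $p_X(x_i)=\lambda_i$, and set $p_X(x)=0$ for all other symbols, in particular $p_X(x_0)=0$. The induced warden state is $\rho_Z=\sum_x p_X(x)\rho^{(x)}=\sigma_0$, so the constraint $\rho_Z=\sigma_0$ in \eqref{eq:thm_S_cq} holds with a distribution that puts no mass on the innocent symbol. This is exactly condition \eqref{eq:Covertness_Positivity_CQ}. If a mixture containing $x_0$ is preferred, we can take $(1-\theta)\delta_{x_0}+\theta p_X$ for any $\theta\in(0,1]$; by linearity it still induces $\sigma_0$.

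The only point needing care is non-degeneracy. If the combination above uses a single symbol $x_1\neq x_0$, then $\rho^{(x_1)}=\sigma_0$. In that case the mixture $(1-\theta)\delta_{x_0}+\theta\delta_{x_1}$ with $\theta\in(0,1)$ is non-degenerate and covert. In either case a non-trivial covert input distribution exists.

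Finally, $x_0$ was arbitrary; the hypothesis is imposed for the given innocent symbol, and the argument uses nothing else about it. So the conclusion holds for any choice of innocent symbol, as stated. I expect no real obstacle here. The step that needs the most care is the non-degeneracy case split just described.
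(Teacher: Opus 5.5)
Your proposal is correct and follows the paper's proof, which is the one-line observation that $\sigma_0\in\calD(\calH)$ and $\calD(\calH)$ equals the convex hull of the non-innocent states. You add detail the paper leaves implicit: the explicit finite convex combination, the resulting input distribution, and the non-degeneracy case $\rho^{(x_1)}=\sigma_0$. That last point is worth keeping, since the trivial input $\delta_{x_0}$ always satisfies $\rho_Z=\sigma_0$.
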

\begin{proof}
   The proof follows immediately from the fact that $\sigma_0\in\calD(\calH)$ and $\mathrm{conv}\bigl\{\rho^{(x)} : x\in\calX\setminus\{x_0\}\bigr\}= \calD(\calH)$.
\end{proof}
\subsection{Quantum Channels}
\label{sec:Quantum_Channels}
In this section, we study quantum-input, quantum-output channels. As in the previous sections, we begin by deriving a general achievable rate region.
\begin{theorem}[Achievable Rate Region for Quantum Channels]
\label{thm:Achievable_Quantum}
An inner bound on the covert capacity of a quantum point-to-point channel $\calN_{X\to YZ}$ is
\begin{align*}
  \calC_{\mathrm{C-Q}} \supseteq\bigcup_{\rho_{UXYZ}\in\calG}\left.\begin{cases}(R,R_K):\\
  R<\bbI(U;Y),\\
  R_K>\sbr{\bbI(U;Z)-\bbI(U;Y)}^+,
\end{cases}\hspace{-3mm}\right\},
\end{align*}where
\begin{align}
  \calG \triangleq \left.\begin{cases}\rho_{UXYZ}:\\
  \rho_{UXYZ}=\sum\limits_{u}p_U(u)\den{u}{u}_U\otimes\theta_X^{(u)}\otimes\calN_{X\to YZ}\pr{\theta_X^{(u)}},\\
\rho_Z=\sigma_0,
\end{cases}\hspace{-3mm}\right\}.\label{eq:S_Quantum}
\end{align}
\end{theorem}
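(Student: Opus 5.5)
The proof combines a random-coding construction for the product ensemble induced by $p_U$ and $\{\theta_X^{(u)}\}$ with a pinching-based decoder for reliability and a quantum channel-resolvability argument for covertness. Fix $\rho_{UXYZ}\in\calG$ and treat $u\mapsto \rho_{YZ}^{(u)}\triangleq\calN_{X\to YZ}(\theta_X^{(u)})$ as an effective classical-quantum channel. Because $\rho_Z=\sum_u p_U(u)\tra_Y\rho_{YZ}^{(u)}=\sigma_0$, the average warden state of an i.i.d.\ codebook is exactly $\sigma_0^{\otimes n}$. This removes the divergence term between the ideal warden output and the innocent output; in the square-root-law setting that term is the source of the difficulty.

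\textbf{Codebook and encoding.} Introduce an auxiliary local randomization index $\ell\in[1:2^{nR_\ell}]$ at the encoder. Generate codewords $u^n(m,k,\ell)$ i.i.d.\ according to $p_U^{\otimes n}$. To send $m$ with key $k$, the encoder draws $\ell$ uniformly and transmits $\theta_X^{(u^n(m,k,\ell))}\triangleq\bigotimes_i\theta_X^{(u_i)}$.

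\textbf{Reliability.} Let the receiver, knowing $k$, decode the pair $(m,\ell)$. For this I will use a Hayashi--Nagaoka or pinching-based packing argument on the cq channel $u\mapsto\rho_Y^{(u)}$. Pinch $\rho_{Y^n}$ with respect to the spectral decomposition of the $n$-fold average output $\rho_Y^{\otimes n}$; the number of distinct eigenvalues is polynomial in $n$. Then apply a classical-style information-spectrum threshold test. This gives vanishing average error, over the codebook, whenever $R+R_\ell<\bbI(U;Y)$.

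\textbf{Covertness.} The warden does not know $k$ or $\ell$, so the warden state $\rho_{Z^n}$ is the average of $\rho_Z^{(u^n)}$ over all $2^{n(R+R_K+R_\ell)}$ codewords. I will apply a quantum soft-covering (resolvability) lemma, again proved by pinching: bound $\mathbb{E}\lVert\rho_{Z^n}-\rho_Z^{\otimes n}\rVert_1$ by the sum of two terms. The first is the probability mass where the information density exceeds $n(\bbI(U;Z)+\delta)$, which vanishes by the law of large numbers after pinching. The second is a second-moment term of order $2^{-n(R+R_K+R_\ell-\bbI(U;Z)-\delta)}$ times a polynomial factor. Hence $R+R_K+R_\ell>\bbI(U;Z)$ suffices. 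Since $\rho_Z^{\otimes n}=\sigma_0^{\otimes n}$ by the constraint in \eqref{eq:S_Quantum}, this gives \eqref{eq:Covertness_Const}.

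\textbf{Assembling the rate region.} Choose $R_\ell$ close to $\bbI(U;Y)-R$. The covertness condition then becomes $R_K>\bbI(U;Z)-\bbI(U;Y)$, and $R_K\ge0$ always. This yields the stated region with the $[\cdot]^+$.

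\textbf{Derandomization.} A standard expurgation/Markov argument on the sum of the expected error and the expected trace distance shows that a single codebook achieves both conditions simultaneously.

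The main obstacle is the quantum resolvability step. The pinching factor (the number of distinct eigenvalues of $\rho_Z^{\otimes n}$) must grow only polynomially so that it does not affect the exponent. The cross terms in the second moment must also be handled for noncommuting $\rho_Z^{(u)}$. I would first try Hayashi's pinching inequality $\rho\le v_n\,\calE(\rho)$ to reduce to the commuting case, then reuse the classical soft-covering estimate.
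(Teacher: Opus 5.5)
Your proposal is correct and follows the same outline as the paper's proof in Appendix~\ref{proof:thm:Achievable_Quantum}: i.i.d.\ codewords from $p_U$ sent as $\bigotimes_i\theta_X^{(u_i)}$, a packing decoder for each key, and a pinching-based resolvability lemma whose target $\rho_Z^{\otimes n}$ equals $\sigma_0^{\otimes n}$ by the constraint in $\calG$. It differs from the paper in three places.

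First, the paper has no randomization index. Its appendix stops at $R<\bbI(U;Y)$ and $R+R_K>\bbI(U;Z)$, which does not cover small $R$ with $R_K$ just above $[\bbI(U;Z)-\bbI(U;Y)]^+$. Your index $\ell$ pads the message with uniform dummy bits, which the stochastic encoder of Definition~\ref{defi:Code} allows. Together with your expurgation step, it makes explicit what the paper leaves implicit.

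Second, for covertness the paper does not split into a tail term and a second-moment term. It bounds $\bbE_{C_n}\bbD(\hat\rho_{Z^n}\|\sigma_0^{\otimes n})$ directly. It pushes the expectation over the other codewords inside the logarithm, applies $\rho\le v_n\calE(\rho)$ and $\log(1+x)\le x^\alpha/\alpha$ to the commuting pinched operators, and then uses data processing of the sandwiched R\'enyi divergence under pinching together with additivity. This gives the bound $\frac{v_n^\alpha}{\alpha}2^{-\alpha n(R+R_K-\widetilde D_{1+\alpha}(\rho_{UZ}\|\rho_U\otimes\rho_Z))}$, followed by Pinsker, with no tail estimate needed. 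In your route, the tail term is not literally a law-of-large-numbers statement, because $\calE(\rho_{UZ}^{\otimes n})$ is not an i.i.d.\ state. With $\rho=\rho_{UZ}^{\otimes n}$, $\sigma=(\rho_U\otimes\rho_Z)^{\otimes n}$ and $\calE$ the pinching with respect to $\sigma$, you would instead use Markov's inequality:
\[
\tra\bigl[\rho\{\calE(\rho)>2^{na}\sigma\}\bigr]\le 2^{-nas}\tra\bigl[\calE(\rho)^{1+s}\sigma^{-s}\bigr]\le 2^{-ns(a-\widetilde D_{1+s}(\rho_{UZ}\|\rho_U\otimes\rho_Z))}.
\]
This is the same R\'enyi estimate the paper relies on. You would also need a gentle-measurement step and a weighted Cauchy--Schwarz step to handle the noncommuting cross terms. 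Your route works with the trace distance directly. The paper's route is shorter and controls the relative entropy itself, which is the stronger covertness metric.

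Third, for reliability the paper simply cites the packing lemma rather than building a pinched information-spectrum decoder; either works.
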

The proof of Theorem~\ref{thm:Achievable_Quantum} is provided in Appendix~\ref{proof:thm:Achievable_Quantum}.

It is well known that the covert capacity of quantum point-to-point channels obeys the square-root law \cite{Bash_15}. This behavior arises because, for many channels, the covertness constraint $\rho_Z = \sigma_0$ in \eqref{eq:S_Quantum} of Theorem~\ref{thm:Achievable_Quantum} cannot be satisfied; a notable example is the bosonic channel \cite{Bash_15,Wang23}. Consider a quantum channel $\calN_{X\to YZ}$ that maps an input state $\rho_X \in \calD(\calH_X)$ to output states $\rho_Y \in \calD(\calH_Y)$ at the legitimate receiver and $\rho_Z \in \calD(\calH_Z)$ at the warden. The covertness constraint $\rho_Z = \sigma_0$ can be satisfied, and hence a positive covert communication rate is achievable, if the output state observed by the warden under the innocent input  state $\rho_0$, denoted by $\sigma_0$, can be expressed as a convex combination of the output states corresponding to the channel input states $\rho_X \in \calD(\calH_X)$ and $\rho_X\ne\rho_0$, i.e.,
\begin{align}
    \sigma_0\in\mathrm{conv}\br{\tra_Y\sbr{\calN_{X\to YZ}(\rho_X)}:\rho_X\in\calD(\calH_X)\backslash\br{\rho_0}}.\label{eq:Covertness_Positivity}
\end{align} 
\begin{lemma}[Designing the Innocent Input State]
\label{lemma:Designing_Innocent_Quantum}
If the innocent input state $\rho_0$ is a mixed state, then the covertness constraint \eqref{eq:Covertness_Positivity} can always be satisfied.
\end{lemma}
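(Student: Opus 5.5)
The plan is to exploit the fact that a mixed state is never an extreme point of $\calD(\calH_X)$. Once $\rho_0$ is written as a non-trivial convex combination of states different from $\rho_0$, linearity of the channel transfers that decomposition to the warden's output. This yields a non-trivial input ensemble reproducing $\sigma_0$ exactly.

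\emph{Step 1: decompose $\rho_0$.} Write the spectral decomposition $\rho_0=\sum_{k=1}^{r}\lambda_k\den{e_k}{e_k}$, where $r\ge 2$ because $\rho_0$ is mixed and $\lambda_k>0$ for all $k$. Each pure state $\den{e_k}{e_k}$ differs from $\rho_0$: its rank is one, while the rank of $\rho_0$ is $r\ge2$. Hence $\rho_0$ is a convex combination of the states $\den{e_k}{e_k}\in\calD(\calH_X)\setminus\{\rho_0\}$ with strictly positive weights $\lambda_k$.

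An alternative, if one wants a two-point ensemble, is to set $\rho_\pm=\rho_0\pm t\Delta$. Here $\Delta$ is Hermitian, traceless and nonzero, and is supported on the support of $\rho_0$; for instance $\Delta=\den{e_1}{e_1}-\den{e_2}{e_2}$. Choosing $t>0$ small enough, below $\lambda_{\min}$, keeps $\rho_\pm\ge0$, and then $\rho_0=\tfrac12\rho_++\tfrac12\rho_-$ with $\rho_\pm\neq\rho_0$.

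\emph{Step 2: push through the channel.} Since $\calN_{X\to YZ}$ and the partial trace $\tra_Y$ are linear, we get
\begin{align*}
\sigma_0=\tra_Y\sbr{\calN_{X\to YZ}(\rho_0)}=\sum_k\lambda_k\,\tra_Y\sbr{\calN_{X\to YZ}\pr{\den{e_k}{e_k}}}.
\end{align*}
This is exactly membership of $\sigma_0$ in the convex hull in \eqref{eq:Covertness_Positivity}. In the language of Theorem~\ref{thm:Achievable_Quantum}, take $U$ with $p_U(k)=\lambda_k$ and $\theta_X^{(k)}=\den{e_k}{e_k}$. Then $\rho_Z=\sigma_0$, so the ensemble belongs to $\calG$, and it is non-trivial because it uses at least two distinct input states.

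\emph{Main obstacle.} The step that needs care is non-triviality: the ensemble members must differ from $\rho_0$. Strictly speaking they should also be usable for communication, which requires distinguishability at $Y$ rather than at $Z$. The convexity argument above settles the first point. For the second I will only remark, as the paper does after Lemma~\ref{lemma:General_Sufficient_Covertness}, that a positive rate additionally requires the ensemble to yield $\bbI(U;Y)>0$.

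There is one further source of flexibility: any ensemble achieving this can be chosen from the infinite family $\rho_0\pm t\Delta$, or from any decomposition, since the decomposition of a mixed state is not unique. This freedom can be used to make $\bbI(U;Y)$ positive whenever some such pair yields distinct outputs at $Y$.
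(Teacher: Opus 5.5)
Your proof is correct and follows the paper's argument: write the mixed $\rho_0$ as a convex combination of states different from $\rho_0$, then use linearity of $\calN_{X\to YZ}$ and $\tra_Y$ to show that $\sigma_0$ lies in the required convex hull. Your spectral decomposition, together with the rank argument that each $\den{e_k}{e_k}\neq\rho_0$, just makes explicit the existence of such a decomposition, which the paper asserts without proof.
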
 
\begin{proof}
    If the innocent input state $\rho_0$ is a mixed state, then it can be expressed as a convex combination $\rho_0=\sum_i p_i \rho_i$, where $p_i>0$ and $\sum_i p_i=1$, and $\rho_i \neq \rho_0$. Therefore, we have
    \begin{align}
        \sigma_0&=\tra_Y\sbr{\calN_{X\to YZ}(\rho_0)}\nonumber\\
        &=\tra_Y\sbr{\calN_{X\to YZ}\pr{\sum_i p_i \rho_i}}\nonumber\\
        &\mathop=\limits^{(a)}\sum_i p_i\tra_Y\sbr{\calN_{X\to YZ}\pr{ \rho_i}},\label{eq:Mixed_State_Covertness}
    \end{align}where $(a)$ follows from the linearity of both the trace operator and the quantum channel. Note that the \ac{RHS} of~\eqref{eq:Mixed_State_Covertness} implies that the channel output in the no-communication mode can be expressed as a convex combination of the outputs corresponding to input states $\rho_i$. Consequently, the covertness constraint~\eqref{eq:Covertness_Positivity} is satisfied.
\end{proof}
\begin{example}
Consider a quantum channel from a qubit input system $X$ to a pair of qubit
systems $Y$ and $Z$, defined by the isometry
\begin{align}
    V_{\gamma}\triangleq\begin{cases}
\ket{0}_X\mapsto
\ket{0}_Y\ket{0}_Z,\\
\ket{1}_X\mapsto
\sqrt{1-\gamma}\ket{1}_Y\ket{0}_Z
+
\sqrt{\gamma}\ket{0}_Y\ket{1}_Z,
\end{cases}
\end{align}
where $\gamma\in(0,1)$. The corresponding quantum channel is $\mathcal{N}_{X\to YZ}(\rho_X)
=
V_{\gamma}\rho_XV_{\gamma}^{\dagger}$. The marginal channels to the legitimate receiver and the warden are
\begin{align*}
\mathcal{N}_{X\to Y}(\rho_X)
&=
\operatorname{Tr}_Z
\left[
V_{\gamma}\rho_XV_{\gamma}^{\dagger}
\right],\\
\mathcal{N}_{X\to Z}(\rho_X)
&=
\operatorname{Tr}_Y
\left[
V_{\gamma}\rho_XV_{\gamma}^{\dagger}
\right].
\end{align*}

Let the innocent state be the maximally mixed state $\rho_0=\frac{\dsI}{2}$. Therefore, 
\begin{equation*}
\sigma_0
=
\mathcal{N}_{X\to Z}(\rho_0)
=
\begin{pmatrix}
1-\frac{\gamma}{2} & 0\\
0 & \frac{\gamma}{2}
\end{pmatrix}.
\end{equation*}
Now, consider Theorem~\ref{thm:Achievable_Quantum} with $\abs{\calU}=2$, $p_U(0)=\alpha, p_U(1)=1-\alpha$, $\theta_X^{(0)}=\ketbra{0}{0}$, and $\theta_X^{(1)}=\ketbra{1}{1}$. 
The resulting average input state is $\rho_X = \alpha\ketbra{0}{0}+
(1-\alpha)\ketbra{1}{1}$. Consequently,
\begin{align*}
\rho_Z
&=
\alpha
\mathcal{N}_{X\to Z}(\ketbra{0}{0})
+
(1-\alpha)
\mathcal{N}_{X\to Z}(\ketbra{1}{1})\\
&=
\begin{pmatrix}
\alpha+(1-\alpha)(1-\gamma) & 0\\
0 & (1-\alpha)\gamma
\end{pmatrix},
% =
% \sigma_0,
\end{align*}
and hence the covertness constraint $\rho_Z=\sigma_0$ is satisfied when $\alpha=\tfrac{1}{2}$. At the legitimate receiver, the two conditional output states are $\rho_Y^{(0)}=
\ketbra{0}{0}$, and $\rho_Y^{(1)}=
\gamma\ketbra{0}{0}+
(1-\gamma)\ketbra{1}{1}$. Therefore,
\begin{equation*}
\bbI(U;Y)=
\bbH_b\left(\frac{1-\gamma}{2}\right)
-
\frac{1}{2}\bbH_b(\gamma).
\end{equation*}
Similarly, the two conditional output states at the warden are $\rho_Z^{(0)}=
\ketbra{0}{0}$, and $\rho_Z^{(1)}=
(1-\gamma)\ketbra{0}{0}+
\gamma\ketbra{1}{1}$, which leads to
\begin{equation*}
\bbI(U;Z)=\bbH_b\left(\frac{\gamma}{2}\right)-\frac{1}{2}\bbH_b(\gamma).
\end{equation*}
Thus, by Theorem~\ref{thm:Achievable_Quantum}, the following rate pair is achievable:
\begin{align*}
R&<\bbH_b\left(\frac{1-\gamma}{2}\right)-
\frac{1}{2}\bbH_b(\gamma),\\
R_K&>\left[\bbH_b\left(\frac{\gamma}{2}\right)-\bbH_b\left(\frac{1-\gamma}{2}\right)
\right]^+.
\end{align*}
For example, let $\gamma=3/4$. In this case,
\begin{align*}
\bbI(U;Y)&=\bbH_b\left(\frac{1}{8}\right)
-\frac{1}{2}\bbH_b\left(\frac{3}{4}\right)
\approx 0.1379,\\
\bbI(U;Z)&=\bbH_b\left(\frac{3}{8}\right)
-\frac{1}{2}\bbH_b\left(\frac{3}{4}\right)
\approx 0.5488.
\end{align*}
Hence, a positive covert communication rate is achievable with $R<0.1379$, while the required secret-key rate can be chosen as $R_K>0.5488-0.1379=0.4109$.

It is worth noting that, for this example, the mixedness of the innocent state is essential. Although $\calN_{X\to Z}$ is not injective as a channel, one can show that the only input state satisfying $N_{X\to Z}(\rho_X)=\ketbra{0}{0}$ is $\rho_X=\ketbra{0}{0}$. Hence no non-trivial input state can satisfy the covertness constraint for this choice of pure innocent state. Hence, no non-trivial input ensemble can satisfy the covertness constraint, and a
positive covert communication rate is impossible.
\end{example}
\begin{lemma}[Covertness Constraint Equivalence]
\label{lemma:Covertness_Constraint_Equivalence}
Suppose that the innocent input state $\rho_0$ is pure, e.g., $\rho_0 = \ketbra{0}{0}$. Then the condition in~\eqref{eq:Covertness_Positivity} is equivalent to
\begin{align}
\exists\, \rho_X \neq \rho_0 \quad \text{such that} \quad
\Tr_Y\!\left[\calN_{X\to YZ}(\rho_X)\right] = \sigma_0.
\label{eq:Covertness_Positivity_2}
\end{align}
\end{lemma}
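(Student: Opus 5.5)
The plan is to prove the two implications separately. The key input is that the set of warden outputs $\calS\triangleq\br{\Tr_Y[\calN_{X\to YZ}(\rho_X)]:\rho_X\in\calD(\calH_X)}$ is convex, because it is the image of the convex set $\calD(\calH_X)$ under a linear map.

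For the direction \eqref{eq:Covertness_Positivity_2}$\Rightarrow$\eqref{eq:Covertness_Positivity}, no work is needed. If some $\rho_X\neq\rho_0$ satisfies $\Tr_Y[\calN_{X\to YZ}(\rho_X)]=\sigma_0$, then $\sigma_0$ is itself one of the generating points of the convex hull in \eqref{eq:Covertness_Positivity}, using the trivial convex combination with weight one.

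For the direction \eqref{eq:Covertness_Positivity}$\Rightarrow$\eqref{eq:Covertness_Positivity_2}, I would proceed as follows.
\begin{enumerate}
\item Write $\sigma_0=\sum_i p_i\,\Tr_Y[\calN_{X\to YZ}(\rho_i)]$ with $p_i>0$, $\sum_i p_i=1$, and every $\rho_i\neq\rho_0$.
\item By linearity of the channel and of the partial trace (as in step $(a)$ of the proof of Lemma~\ref{lemma:Designing_Innocent_Quantum}), this equals $\Tr_Y[\calN_{X\to YZ}(\bar\rho)]$ with $\bar\rho\triangleq\sum_i p_i\rho_i\in\calD(\calH_X)$.
\item It then suffices to show $\bar\rho\neq\rho_0$, since $\bar\rho$ would then witness \eqref{eq:Covertness_Positivity_2}. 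This is where purity of $\rho_0$ enters: a pure state is an extreme point of $\calD(\calH_X)$.
\end{enumerate}

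To verify step 3 explicitly, suppose $\sum_i p_i\rho_i=\ketbra{0}{0}$. Then for each orthogonal vector $\ket{\psi}\perp\ket{0}$ we get $0=\sum_i p_i\bra{\psi}\rho_i\ket{\psi}$. Since every term is nonnegative and every $p_i>0$, this gives $\bra{\psi}\rho_i\ket{\psi}=0$, and hence $\rho_i\ket{\psi}=0$ by positivity. So each $\rho_i$ is supported on $\mathrm{span}\{\ket{0}\}$, and unit trace then forces $\rho_i=\ketbra{0}{0}=\rho_0$. This contradicts $\rho_i\neq\rho_0$, so $\bar\rho\neq\rho_0$.

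The only nontrivial step is this extremality argument, and it is short. I would also remark that the equivalence can fail when $\rho_0$ is mixed: a mixed $\bar\rho$ may coincide with $\rho_0$. That failure is exactly the mechanism exploited in Lemma~\ref{lemma:Designing_Innocent_Quantum}.
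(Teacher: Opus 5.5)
Your proof is correct and follows the same route as the paper. The reverse direction uses a one-term convex combination. The forward direction pulls the convex combination inside $\Tr_Y[\calN_{X\to YZ}(\cdot)]$ by linearity, then uses that a pure $\rho_0$ is an extreme point of $\calD(\calH_X)$ to conclude $\sum_i p_i\rho_i\neq\rho_0$. The only difference is that you verify this extremality directly, by showing $\bra{\psi}\rho_i\ket{\psi}=0$ for every $\ket{\psi}\perp\ket{0}$, whereas the paper simply cites it from Nielsen and Chuang.
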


\begin{proof}
We first show that~\eqref{eq:Covertness_Positivity} implies~\eqref{eq:Covertness_Positivity_2}. By~\eqref{eq:Covertness_Positivity}, there exist states $\rho_1,\dots,\rho_t \in \calD(\calH_X)\setminus\{\rho_0\}$ and probabilities $p_1,\dots,p_t$ with $p_i \ge 0$ and $\sum_{i=1}^t p_i = 1$ such that
\begin{align}
\sigma_0 = \sum_{i=1}^t p_i \Tr_Y\!\left[\calN_{X\to YZ}(\rho_i)\right].
\label{eq:Equivalence_Cov}
\end{align}
Using the linearity of both the quantum channel $\calN_{X\to YZ}$ and the partial trace, we obtain
\begin{align}
\sigma_0 
= \Tr_Y\!\left[\calN_{X\to YZ}\!\left(\sum_{i=1}^t p_i \rho_i \right)\right].
\label{eq:Equivalence_Cov_2}
\end{align}
Define $\rho^\star \triangleq \sum_{i=1}^t p_i \rho_i,$
which is a valid density operator. Then~\eqref{eq:Equivalence_Cov_2} shows that
\begin{align*}
\Tr_Y\!\left[\calN_{X\to YZ}(\rho^\star)\right] = \sigma_0.
\end{align*}

It remains to show that $\rho^\star \neq \rho_0$. Suppose, for contradiction, that $\rho^\star = \rho_0$. Then $\rho_0$ is expressed as a convex combination of the states $\{\rho_i\}_{i=1}^t$, all of which are distinct from $\rho_0$. This contradicts the fact that $\rho_0$ is pure, and hence an extreme point of the convex set $\calD(\calH_X)$ \cite{Nielson_Chaung}. Therefore, $\rho^\star \neq \rho_0$, which establishes~\eqref{eq:Covertness_Positivity_2}.

The reverse implication is immediate: if there exists $\rho_X \neq \rho_0$ such that
\begin{align*}
\Tr_Y\!\left[\calN_{X\to YZ}(\rho_X)\right] = \sigma_0,
\end{align*}
then $\sigma_0$ is trivially a convex combination (with a single term) of elements in the set $\br{\Tr_Y[\calN_{X\to YZ}(\rho)]: \rho \neq \rho_0}$. This implies~\eqref{eq:Covertness_Positivity}.
\end{proof}
\begin{remark}[Injective and Non-Injective Channels]
Lemma~\ref{lemma:Covertness_Constraint_Equivalence} highlights the role of channel injectivity when the innocent input state is pure. If the warden's marginal channel $\mathcal{N}_{X\to Z}$ is injective, no input state other than the innocent state can induce the same output state at the warden. Hence, the covertness constraint $\rho_Z=\sigma_0$ admits no non-trivial solution, and no positive covert communication rate is achievable. In contrast, for a non-injective channel, another input state may induce the same output as the innocent state, allowing the covertness constraint to be satisfied and a positive covert rate to be achieved.
\end{remark}

In classical \acp{DMC}, the input alphabet is finite, and consequently the set of extreme points of the input probability simplex is finite. This finiteness enables dimension-based arguments to enforce affine dependence among the output distributions induced by the input symbols, i.e., among the points $\br{v_x}_{x\in\calX}$, where $v_x\triangleq W_{Z|X=x}$, and hence shapes the geometry of the achievable output set $\mathrm{conv}\br{v_x : x\in\calX}$. In contrast, for quantum channels, the set of admissible input states is the set of density operators, whose extreme points (pure states) form a continuous set (e.g., the Bloch sphere in the qubit case). As a result, affine dependence arguments based on counting no longer apply. Consequently, the geometric intuition and the associated results developed for classical channels in Section~\ref{sec:Classical} and for classical-quantum channels in Section~\ref{sec:Classical_Quantum_Channels} do not extend directly to the fully quantum setting.

In the following, we assume that the innocent state $\rho_0$ is a pure state and show that for point-to-point discrete memoryless quantum channels, the constraint $\rho_Z = \sigma_0$ can be satisfied for some classes of known quantum channels.
\subsubsection{Entanglement-Breaking Channels}
\label{sec:Entanglement_Breaking}
The entanglement-breaking channels are defined as \cite[Section~4.6.7]{Wilde_Book},
\begin{align}
    \calN_{X\to Z}(\rho_X)=\sum_i\tra(E_i\rho_X)\tau_i,\label{eq:Entanglement_Breaking}
\end{align}where $\br{E_i}$ is the set of \acp{POVM} and $\tau_i\in\calD(\calH)$. Let $\rho_0=\den{0}{0}$, $E_0=\tfrac{1}{2}\pr{\dsI+\sigma_x}$, and $E_1=\dsI-E_0=\tfrac{1}{2}\pr{\dsI-\sigma_x}$, where $\sigma_x$ is the Pauli-$X$ matrix. Hence,
\begin{align}
    \sigma_0=\calN_{X\to Z}(\rho_0)&=\tra(E_0\rho_0)\tau_0+\tra(E_1\rho_0)\tau_1,\nonumber\\
    &=\frac{1}{2}\tau_0+\frac{1}{2}\tau_1.\nonumber
\end{align}Now consider the input state $\rho_X=\tfrac{1}{2}\pr{\dsI+r_y\sigma_y+r_z\sigma_z}$, where $r_y,r_z\in\bbR$, such that $r_y^2+r_z^2\le1$, $(r_y,r_z)\ne(0,1)$, and $\sigma_y$ and $\sigma_z$ are the Pauli-$Y$ and Pauli-$Z$ matrices, respectively. Therefore,
\begin{align}
    \calN_{X\to Z}(\rho_X)&=\tra(E_0\rho_X)\tau_0+\tra(E_1\rho_X)\tau_1,\nonumber\\
    &=\frac{1}{2}\tau_0+\frac{1}{2}\tau_1.\nonumber
\end{align}Therefore, from Lemma~\ref{lemma:Covertness_Constraint_Equivalence} the covertness constraint $\rho_Z=\sigma_0$ can be satisfied.
\subsubsection{Partial Trace Channels}
\label{sec:Partial_Trace}
Consider a bipartite Hilbert space $\calH_X \triangleq \calH_A \otimes \calH_B$ and define the channel
\begin{align}
    \calN_{X \to Z}(\rho_{AB}) = \Tr_B(\rho_{AB}),
    \label{eq:Partial_Trace_Channel}
\end{align}
where $\Tr_B(\cdot)$ denotes the partial trace over subsystem $B$. Let the innocent input state be $\rho_0 = \dyad{0}{0}_A \otimes \dyad{0}{0}_B$, hence $\sigma_0 = \calN_{X \to Z}(\rho_0) = \dyad{0}{0}_A$. 
Now, for any state $\sigma_B \neq \dyad{0}{0}_B$, we have $\calN_{X \to Z}\!\left(\dyad{0}{0}_A \otimes \sigma_B\right)
= \dyad{0}{0}_A
= \sigma_0$. Hence, there exists a state $\rho_X \neq \rho_0$ such that $\calN_{X \to Z}(\rho_X)=\sigma_0$. Therefore, by Lemma~\ref{lemma:Covertness_Constraint_Equivalence}, the covertness constraint $\rho_Z = \sigma_0$ can be satisfied.

\subsubsection{Leakage-Reset Channel}
Let the input Hilbert space be $\mathrm{span}\br{\ket{0},\ket{1},\ket{2}}$ and define the channel~as
\begin{align}
    \calN_{X\to Z}(\rho)=\Pi\rho\Pi+\tra\sbr{(\dsI-\Pi)\rho}\den{0}{0},\label{eq:Leakage_Reset_Ch}
\end{align}where $\Pi\triangleq\den{0}{0}+\den{1}{1}$. Assuming that $\rho_0=\den{0}{0}$ we have $\calN_{X\to Z}(\rho_0)=\den{0}{0}$. We also have $\calN_{X\to Z}(\den{2}{2})=\den{0}{0}$ therefore by Lemma~\ref{lemma:Covertness_Constraint_Equivalence}, the covertness constraint $\rho_Z = \sigma_0$ can be satisfied.

\section{Conclusions}
\label{sec:Conclusions}

In this paper, we investigated the conditions under which covert communication at positive rates is achievable over classical, classical-quantum, and fully quantum point-to-point and multiple-access memoryless channels. For classical \acp{DMC}, we identified geometric conditions under which positive covert communication rates are achievable and established the corresponding covert capacity results. We showed that when the input alphabet has sufficiently large cardinality relative to the output alphabet, the innocent output distribution can often be expressed as a convex combination of the output distributions induced by communication symbols, thereby enabling positive covert communication rates. As a natural approach to enlarge the effective channel input alphabet, we study conditions under which the covertness constraint can be satisfied for \acp{MAC}. We show that the presence of an additional transmitter can facilitate satisfying the covertness constraint, thereby enabling positive covert rates in settings where the corresponding point-to-point channel may not permit them. We further extended this geometric perspective to classical-quantum channels by characterizing the role of the affine dimension of the induced quantum states and established positive covert rate results in the quantum setting. Finally, we show that, for fully quantum channels, the covertness constraint can be satisfied with a non-trivial input state whenever the innocent input state is~mixed.

\begin{appendices}

\section{Proof of Theorem~\ref{thm:Achievable_Quantum}}
\label{proof:thm:Achievable_Quantum}
Fix the distributions $p_U$, and a collection of channel input states $\br{\theta_X^{(u)}}_{u\in\calU}$, and let $\rho_0$ be the innocent input state and $\sigma_0\triangleq\tra_Y\sbr{\calN_{X\to YZ}(\rho_0)}$.
\subsection{Codebook Generation}Let $C_n\triangleq\big\{U^n(k,m)\big\}_{(k,m)\in\calM\times\calK}$, where $\calK\triangleq\sbra{1}{2^{nR_K}}$ and $\calM\triangleq\sbra{1}{2^{nR}}$, be a random codebook generated \ac{iid} according~to $p_U$ and $\calC_n\triangleq\big\{u^n(k,m)\big\}_{(k,m)\in\calM\times\calK}$ be a realization of the codebook $C_n$.

\subsection{Encoding and Decoding}Given the secret key $k\in\calK$ and the message $m\in\calM$, the encoder computes the codeword $U^n(k,m)$ and then prepares $\bigotimes_{t=1}^n\theta_X^{(u_t(k,m))}$ and transmits $\theta_X^{(u_t(k,m))}$ over~$\calN_{X\to YZ}$, for $t\in\sbra{1}{n}$. 
Then, for any $\epsilon > 0$, by the Packing~Lemma~{\cite[Lemma~16.3.1]{Wilde_Book}}, for each $k\in\calK$, there exists a set of \ac{POVM} operators $\{\Lambda_m^{(k)}\}_{m \in \calM_n}$ and a constant $c > 0$ such that, averaged over the random codebook ensemble, the probability of error is upper bounded by 
\begin{align}
 1-\bbE_{C_n}\left[\frac{1}{\card{\calM_n}}\sum_{m\in\calM_n}\tra\left(\Lambda_m^{(k)}\rho_{Y^n}^{U^n(k,m)}\right)\right]\le2(\epsilon+2\sqrt{\epsilon})+4\times2^{n(R-I(U;Y)+2c\epsilon)},\label{eq:PError}
\end{align}where the inequality holds by \cite[Lemma~16.3.1]{Wilde_Book} and the \ac{RHS} of \eqref{eq:PError} vanishes as $n$ grows~when
\begin{align}
    R<I(U;Y). \label{eq:Decodability}
\end{align}

\subsection{Covertness Analysis}
Our covertness analysis is based on the following lemma from \cite[Lemma~3]{Quantum_Covert_CSI}.
\begin{lemma}
    \label{lemma:Res_Quantum}
    Let $\rho_{UZ}\triangleq\sum_{u\in\calU}p_U(u)\den{u}{u}\otimes\rho_{Z\lvert u}$ be a classical-quantum state. Also, let $C_n\triangleq\{U^n(m)\}_{m\in\calM_n}$, where $\calM_n\triangleq\left[2^{nR}\right]$, be a set of random variables in which $U^n(m)$ is generated \ac{iid} according to $p_U$. Then, for $R>I(U;Z)$,
    \begin{align}
        \bbE_{C_n}\bbD\left(\frac{1}{2^{nR}}\sum_{m}\rho_{Z^n\lvert U^n(m)}\Big\lVert\rho_Z^{\otimes n}\right)\xrightarrow[n\to\infty]{}0.\nonumber
    \end{align}
\end{lemma}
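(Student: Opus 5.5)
The plan is the standard quantum soft-covering argument: reduce relative entropy to the $n$-fold product, expand over the i.i.d. codebook, and control the resulting average by pinching with respect to $\rho_Z^{\otimes n}$.

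Write $\rho_{Z^n}^{\calC}\triangleq 2^{-nR}\sum_m\rho_{Z^n|u^n(m)}$, where $\rho_{Z^n|u^n}=\bigotimes_t\rho_{Z|u_t}$. Under i.i.d. generation, $\bbE_{C_n}\rho_{Z^n}^{C_n}=\rho_Z^{\otimes n}$. Using the identity $\bbD(\rho^{\calC}\Vert\rho_Z^{\otimes n})=\frac{1}{2^{nR}}\sum_m\tra\big[\rho_{Z^n|u^n(m)}(\log\rho^{\calC}-\log\rho_Z^{\otimes n})\big]$, we isolate the $m$-th term inside $\rho^{\calC}$:
\[
\rho^{\calC}=2^{-nR}\rho_{Z^n|u^n(m)}+2^{-nR}\sum_{m'\ne m}\rho_{Z^n|u^n(m')}.
\]
Because the logarithm is operator concave, we can take the expectation over $\{U^n(m')\}_{m'\ne m}$ inside the log (Jensen for operator concave functions). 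This gives the bound
\[
\bbE\,\bbD\le \bbE_{U^n}\tra\Big[\rho_{Z^n|U^n}\big(\log(2^{-nR}\rho_{Z^n|U^n}+\rho_Z^{\otimes n})-\log\rho_Z^{\otimes n}\big)\Big].
\]

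The main obstacle is that $\rho_{Z^n|U^n}$ and $\rho_Z^{\otimes n}$ need not commute. I would handle this by pinching with respect to the spectral decomposition of $\rho_Z^{\otimes n}$. That operator has at most $(n+1)^{d}$ distinct eigenvalues, with $d=\dim\calH_Z$. Hayashi's pinching inequality $\rho\le v_n\,\calE(\rho)$, together with operator monotonicity of the log, lets us replace $\rho_{Z^n|U^n}$ inside the first log by $v_n\,\calE(\rho_{Z^n|U^n})$ at the cost of an additive $\log v_n=O(\log n)$ term. That term is still multiplied by $2^{-nR}$ inside the log and contributes negligibly. After pinching, every operator inside the log commutes, and the bound reduces to a classical-style expression, $\bbE\log\big(1+2^{-nR}v_n\,\Lambda\big)$ with $\Lambda\triangleq\calE(\rho_{Z^n|U^n})(\rho_Z^{\otimes n})^{-1}$, averaged against $\rho_{Z^n|U^n}$.

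Next I split according to a typical/atypical event. On the subspace where the pinched likelihood ratio satisfies $\Lambda\le 2^{n(I(U;Z)+\delta)}$, we use $\log(1+x)\le x\log e$. This gives a contribution of order $2^{-n(R-I(U;Z)-\delta)}\mathrm{poly}(n)$, which vanishes when $R>I(U;Z)+\delta$. The typicality of this event follows from the quantum Stein / information-spectrum law of large numbers for the pinched state: $\frac1n\log\Lambda$ concentrates near $\bbD(\rho_{UZ}\Vert\rho_U\otimes\rho_Z)=I(U;Z)$, since pinching loses only $O(\log n/n)$.

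On the atypical part, we bound the integrand crudely by $\log(1+\lambda_{\min}^{-n})=O(n)$, using the smallest eigenvalue of $\rho_Z$ restricted to its support; the support of each $\rho_{Z|u}$ is contained in the support of $\rho_Z$. We then multiply by the atypical probability. That probability decays exponentially by a Chernoff/Cramér bound on the i.i.d. log-ratio after pinching, or at least faster than $1/n$, which is enough. Letting $n\to\infty$ and then $\delta\to0$ gives the claim. If pinching concentration proves delicate, the fallback is to use the Rényi-$(1+s)$ route: bound $\log(1+x)\le x^s/(s\ln2)$ and invoke additivity of the sandwiched Rényi mutual information, then send $s\downarrow0$.
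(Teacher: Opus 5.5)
Your reduction (operator Jensen over $m'\neq m$, pinching with respect to $\rho_Z^{\otimes n}$ with $v_n\le(n+1)^d$, and the commuting form $\log(I+2^{-nR}v_n\Lambda)$) is exactly the paper's steps (a)--(c). Your ``fallback'' is the paper's actual proof. The gap is in the step your main route depends on: the claim that the atypical weight $\bbE_{U^n}\tra\bigl[\calE(\rho_{Z^n|U^n})\{\Lambda>2^{n(I(U;Z)+\delta)}\}\bigr]$ is $o(1/n)$. Neither justification you give for it works.

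First, pinching with respect to $\rho_Z^{\otimes n}$ is not $\calE_{\rho_Z}^{\otimes n}$. It keeps all coherences inside each type-class eigenspace of $\rho_Z^{\otimes n}$. So the pinched state is not a product state, the spectrum of $\log\Lambda$ is not that of an i.i.d.\ sum, and a Cram\'er bound does not apply. Second, $D(\calE(\rho^{\otimes n})\Vert\sigma^{\otimes n})\ge nD(\rho\Vert\sigma)-\log v_n$ controls only the mean of the pinched log-ratio, not its upper tail. The upper tail is the strong-converse side of quantum Stein's lemma, not the direct side. It does not follow from the relative-entropy statement alone.

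The missing ingredient is a R\'enyi moment bound. Since $\calE(\rho_{Z^n|u^n})$ commutes with $\rho_Z^{\otimes n}$, Markov's inequality gives, for $s\in(0,1]$,
\[
\bbE_{U^n}\tra\bigl[\calE(\rho_{Z^n|U^n})\{\Lambda>2^{na}\}\bigr]
\le 2^{-nsa}\,\bbE_{U^n}\tra\bigl[\calE(\rho_{Z^n|U^n})^{1+s}(\rho_Z^{\otimes n})^{-s}\bigr]
\le 2^{-ns(a-\tilde{D}_{1+s}(\rho_{UZ}\Vert\rho_U\otimes\rho_Z))}.
\]
The last step uses data processing of the sandwiched R\'enyi divergence under the pinching, which leaves $\rho_{U^n}\otimes\rho_Z^{\otimes n}$ invariant, together with additivity on product states. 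Moreover, $\tilde{D}_{1+s}(\rho_{UZ}\Vert\rho_U\otimes\rho_Z)\to I(U;Z)$ as $s\downarrow 0$. This is precisely the content of the paper's steps (d)--(i).

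Once you have this bound, the typical/atypical split is unnecessary. Applying $\log(1+x)\le x^s/(s\ln 2)$ to the whole commuting expression gives $\frac{v_n^s}{s\ln 2}2^{-ns(R-\tilde{D}_{1+s}(\rho_{UZ}\Vert\rho_U\otimes\rho_Z))}$ directly. This vanishes if you first fix $s$ with $\tilde{D}_{1+s}<R$ and then let $n\to\infty$.

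If you want to keep a genuinely classical law of large numbers, pinch blockwise with respect to $\rho_Z^{\otimes k}$ on blocks of length $k$. The result still commutes with $\rho_Z^{\otimes n}$ and makes the pinched log-ratio an i.i.d.\ sum over the $n/k$ blocks. The price is a factor $v_k^{n/k}$ in the pinching inequality, i.e., a rate penalty $\frac{\log v_k}{k}\to 0$.
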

\begin{proof}
    For completeness, the proof is provided in Appendix~\ref{proof:lemma:Res_Quantum}.
\end{proof}
To prove the covertness of our code design, we bound $\bbE_{C_n}\bbD\left(\hat{\rho}_{Z^n}\lVert\sigma_Z^{\otimes n}\right)$, where $\hat{\rho}_{Z^n}$ is the distribution induced by our code design and is $\hat{\rho}_{Z^n}=\frac{1}{2^{n(R_K+R)}}\sum_{k\in\calK}\sum_{m\in\calM}\rho_{Z^n\lvert U^n(k,m)}$, and then choose $p_U$ and $\br{\theta_X^{(u)}}_{u\in\calU}$ such that $\rho_Z=\sigma_0$. Now from Lemma~\ref{lemma:Res_Quantum} and \cite[Theorem~11.9.1]{Wilde_Book}, we have
\begin{align}
\bbE_{C_n}\left\lVert\hat{\rho}_{Z^n}-\sigma_Z^{\otimes n}\right\rVert_1\xrightarrow[n\to\infty]{}0,\nonumber
\end{align}if $R_K+R>I(U;Z)$, which with  \eqref{eq:Decodability} complete the proof of Theorem~\ref{thm:Achievable_Quantum}.

\section{Proof of Lemma~\ref{lemma:Res_Quantum}}
\label{proof:lemma:Res_Quantum}
We have, 
\begin{align}
    &\bbE_{C_n}\bbD\left(\frac{1}{2^{nR}}\sum_{m}\rho_{Z^n\lvert U^n(m)}\Big\lVert\rho_Z^{\otimes n}\right)\nonumber \displaybreak[0]\\
    &=\bbE_{C_n}\tra\left(\frac{1}{2^{nR}}\sum_m\rho_{Z^n\lvert U^n(m)}\left[\log\left(\frac{1}{2^{nR}}\sum_{m'}\rho_{Z^n\lvert U^n(m')}\right)-\log\left(\rho_Z^{\otimes n}\right)\right]\right)\nonumber\\
    &=\frac{1}{2^{nR}}\tra\left(\sum_m\bbE_{C_n}\rho_{Z^n\lvert U^n(m)}\left[\log\left(\frac{1}{2^{nR}}\rho_{Z^n\lvert U^n(m)}+\frac{1}{2^{nR}}\sum_{m'\ne m}\rho_{Z^n\lvert U^n(m')}\right)-\log\left(\rho_Z^{\otimes n}\right)\right]\right)\nonumber\\
    &\mathop\le\limits^{(a)}\frac{1}{2^{nR}}\tra\left(\sum_m\bbE_{U^n(m)}\rho_{Z^n\lvert U^n(m)}\left[\log\left(\frac{1}{2^{nR}}\rho_{Z^n\lvert U^n(m)}+\frac{1}{2^{nR}}\sum_{m'\ne m}\bbE_{U^n(m')}\rho_{Z^n\lvert U^n(m')}\right)-\log\left(\rho_Z^{\otimes n}\right)\right]\right)\nonumber\\
    &\frac{1}{2^{nR}}\tra\left(\sum_m\bbE_{U^n(m)}\rho_{Z^n\lvert U^n(m)}\left[\log\left(\frac{1}{2^{nR}}\rho_{Z^n\lvert U^n(m)}+\frac{\left(2^{nR}-1\right)}{2^{nR}}\rho_Z^{\otimes n}\right)-\log\left(\rho_Z^{\otimes n}\right)\right]\right)\nonumber\\
    &\mathop\le\limits^{(b)}\frac{1}{2^{nR}}\tra\left(\sum_m\bbE_{U^n(m)}\rho_{Z^n\lvert U^n(m)}\left[\log\left(\frac{v_Z}{2^{nR}}\calE_{\rho_Z^\on}\pr{\rho_{Z^n\lvert U^n(m)}}+\frac{\left(2^{nR}-1\right)}{2^{nR}}\rho_Z^{\otimes n}\right)-\log\left(\rho_Z^{\otimes n}\right)\right]\right)\nonumber\\
    &\mathop\le\limits^{(c)}\frac{1}{2^{nR}}\tra\left(\sum_m\bbE_{U^n(m)}\rho_{Z^n\lvert U^n(m)}\log\left(I+\frac{v_Z}{2^{nR}}\calE_{\rho_Z^{\otimes n}}\pr{\rho_{Z^n\lvert U^n(m)}}\left(\rho_Z^{\otimes n}\right)^{-1}\right)\right)\nonumber\\
    &=\frac{1}{\alpha2^{nR}}\tra\left(\sum_m\bbE_{U^n(m)}\rho_{Z^n\lvert U^n(m)}\log\left(I+\frac{v_Z}{2^{nR}}\calE_{\rho_Z^{\otimes n}}\pr{\rho_{Z^n\lvert U^n(m)}}\left(\rho_Z^{\otimes n}\right)^{-1}\right)^\alpha\right)\nonumber\\
    &\mathop\le\limits^{(d)}\frac{1}{\alpha2^{nR}}\tra\left(\sum_m\bbE_{U^n(m)}\rho_{Z^n\lvert U^n(m)}\frac{v_Z^\alpha}{2^{\alpha nR}}\calE_{\rho_Z^{\otimes n}}\pr{\rho_{Z^n\lvert U^n(m)}}^\alpha\left(\rho_Z^{\otimes n}\right)^{-\alpha}\right)\nonumber\\
    &\mathop=\limits^{(e)}\frac{v_Z^\alpha}{\alpha2^{n\alpha R}}\tra\left(\bbE_{U^n}\rho_{Z^n\lvert U^n}\calE_{\rho_Z^{\otimes n}}\pr{\rho_{Z^n\lvert U^n}}^{\alpha}\left(\rho_Z^{\otimes n}\right)^{-\alpha}\right)\nonumber\\
    &=\frac{v_Z^\alpha}{\alpha2^{n\alpha R}}\tra\left(\bbE_{U^n}\rho_{Z^n\lvert U^n}\calE_{\rho_Z^{\otimes n}}\pr{\calE_{\rho_Z^{\otimes n}}\pr{\rho_{Z^n\lvert U^n}}^{\alpha}\left(\rho_Z^{\otimes n}\right)^{-\alpha}}\right)\nonumber\\
    &\mathop=\limits^{(f)}\frac{v_Z^\alpha}{\alpha2^{n\alpha R}}\tra\left(\bbE_{U^n}\calE_{\rho_Z^{\otimes n}}\pr{\rho_{Z^n\lvert U^n}}^{1+\alpha}\left(\rho_Z^{\otimes n}\right)^{-\alpha}\right)\nonumber\\
    &\mathop=\limits^{(g)}\frac{v_Z^\alpha}{\alpha2^{n\alpha R}}\tra\left(\calE_{\rho_{U^n}\otimes\rho_Z^{\otimes n}}\pr{\rho_{U^nZ^n}}^{1+\alpha}\left(\rho_{U^n}\otimes\rho_Z^{\otimes n}\right)^{-\alpha}\right)\nonumber\\
    &\mathop\le\limits^{(h)}\frac{v_Z^\alpha}{\alpha}2^{\alpha(-nR+\ubar{\D}_{1+\alpha}(\rho_{U^nZ^n};\rho_{U^n}\otimes \rho_Z^{\otimes n}))}\nonumber\\
    &\mathop\le\limits^{(i)}\frac{v_Z^\alpha}{\alpha}2^{\alpha n(-R+\ubar{\D}_{1+\alpha}(\rho_{UZ};\rho_{U}\otimes \rho_{Z}))},\label{eq:Resolv_Analysis}
\end{align}where
\begin{itemize}
    \item[$(a)$] follows from the linearity of the expectation, the law of total expectations, Jensen's inequality, and since the symbols $U(i)$ and $U(i')$, for $i\ne i'$, are independent;
    \item[$(b)$] follows since for two quantum states $\sigma\in\calD(\calH)$ and $\rho\in\calD(\calH)$ we have \cite{Hayashi02},
    \begin{align}
        \rho\preceq v\calE_\sigma(\rho),\label{eq:Pinching_Inequlaity}
    \end{align}where $v$ is the distinct number of eigenvalues of $\sigma$;
    \item[$(c)$] follows since $\log$ is a matrix monotone function;
    \item[$(d)$] follows since $\calE_{\rho_E^{\otimes n}}\pr{\rho_{E^n\lvert U^n(m)}}$ commutes with $\rho_E^{\otimes n}$ and for $a,b\in\bbR^+$, $0<\alpha\le1$, we have $\frac{1}{\alpha}\log(1+x)^\alpha\le\frac{1}{\alpha}\log(1^\alpha+x^\alpha)\le\frac{x^\alpha}{\alpha}$;
    \item[$(e)$] follows from the symmetry of the codebook construction \ac{wrt} the message;
    \item[$(f)$] follows since for three arbitrary states $\sigma\in\calD(\calH)$, with spectral decomposition $\sigma=\sum_i\lambda_i\den{x_i}{x_i}$, and $\rho_1\in\calD(\calH)$, and $\rho_2\in\calD(\calH)$ we have
    \begin{align}
    \tra\left[\calE_\sigma(\rho_1)\rho_2\right]&=\sum_i\bra{x_i}\rho_1\ket{x_i}\tra\left[\den{x_i}{x_i}\rho_2\right]\nonumber\\
    &=\sum_i\bra{x_i}\rho_1\ket{x_i}\tra\left[\bra{x_i}\rho_2\ket{x_i}\right]\nonumber\\
    &=\sum_i\tra\left[\bra{x_i}\rho_1\ket{x_i}\right]\bra{x_i}\rho_2\ket{x_i}\nonumber\\
    &=\sum_i\tra\left[\rho_1\den{x_i}{x_i}\right]\bra{x_i}\rho_2\ket{x_i}\nonumber\\
    &=\sum_i\tra\left[\rho_1\bra{x_i}\rho_2\ket{x_i}\den{x_i}{x_i}\right]\nonumber\\
    &=\tra[\rho_1\calE_\sigma(\rho_2)];\nonumber
    \end{align}
    \item[$(g)$] follows since the states are classical-quantum;
    \item[$(h)$] follows from the definition of the sandwiched R\'{e}nyi relative entropy and since for two  states $\rho,\sigma\in\calD(\calH)$ and $\calE(\cdot):\calL(A)\to\calL(B)$, we have $\ubar{\D}_{1+\alpha}\pr{\calE(\rho);\calE(\sigma)}\le\ubar{\D}_{1+\alpha}\pr{\rho;\sigma}$;
    \item[$(i)$] follows from the fact that by the codebook construction, $U^n$ is an \ac{iid} sequence.
\end{itemize}Therefore, when $n\to\infty$ and $\alpha\to0$ the \ac{RHS} of \eqref{eq:Resolv_Analysis} vanishes if $ R>I(U;Z)$.

\section{Converse Proof of Theorem~\ref{thm:Capacity_Classical_Sto}}
\label{proof:thm:capacity}
Consider any sequence of $\pr{2^{n(R+R_K)},n}$ stochastic codes for a \ac{DMC} $W_{YZ|X}$, that simultaneously satisfies the reliability constraint $\bbP\br{M\ne\hat{M}}\le\gamma_n$, where $\lim\limits_{n\to\infty}\gamma_n=0$, and the covertness constraint $\left\lVert p_{Z^n}-q_0^{\otimes n}\right\lVert_1\le\delta_n$, where $\lim\limits_{n\to\infty}\delta_n=0$.

\textit{$\epsilon$-Rate Region:} We first define a $\calQ_\epsilon$ region, which expands the region in \eqref{eq:Capacity_Classical_All_Sto}, as follows
\begin{align}
  \calQ_\epsilon= \left.\begin{cases}R\geq 0,R_K\ge0: \exists\, p_{UXYZ}\in\calG_\epsilon:\\
  R\le\bbI(U;Y)+\epsilon\\
  R_K\ge\bbI(U;Z)-\bbI(U;Y)-3\epsilon\\
\end{cases}\right\},\nonumber
\end{align}
where,
\begin{align}
  \calG_\epsilon \triangleq \left.\begin{cases}p_{UXYZ}:\\
p_{UXYZ}=p_{UX}\pr{u,x}W_{YZ|X}\\
\left\lVert p_Z-q_0\right\lVert_1\le\epsilon\\
\end{cases}\right\}.\nonumber
\end{align}
Fix $\delta>0$. We show that any achievable rate $R$ must belong to $\calQ_\delta$. For arbitrary $\gamma_n,\delta_n>0$, the rate can be upper-bounded as follows:
\begin{subequations}
\begin{align}
    nR&=\bbH(M)\nonumber\\
    &\mathop=\limits^{(a)}\bbH(M|K)\nonumber\\
    &\mathop\le\limits^{(b)}\bbI(M;Y^n|K)+n\gamma_n\label{eq:Rel_Converse_1}\\
    &=\sum_{t=1}^n\bbI(M;Y_t|K,Y^{t-1})+n\gamma_n\nonumber\\
    &\le\sum_{t=1}^n\bbI(M,K,Y^{t-1};Y_t)+n\gamma_n\label{eq:Rel_Converse_2}\\
    &\le\sum_{t=1}^n\bbI(M,K,Y^{t-1},Z_{t+1}^n;Y_t)+n\gamma_n\nonumber\\
    &\mathop=\limits^{(c)}\sum_{t=1}^n\bbI(U_t;Y_t)+n\gamma_n\nonumber\\
    &\mathop=\limits^{(d)}n\sum_{t=1}^n\bbP(T=t)\bbI(U_t;Y_t|T=t)+n\gamma_n\nonumber\\
    &=n\bbI(U_T;Y_T|T)+n\gamma_n\nonumber\\
    &\le n\bbI(U_T,T;Y_T)+n\gamma_n\nonumber\\
    &\mathop=\limits^{(e)} n\bbI(U;Y)+n\gamma_n\nonumber\\
    &\mathop=\limits^{(f)} n\bbI(U;Y)+n\epsilon,\label{eq:Rel_Converse_3}
\end{align}where
\begin{itemize}
    \item[$(a)$] follows since $M$ and $K$ are independent;
    \item[$(b)$] follows from Fano's inequality;
    \item[$(c)$] follows by defining
    \begin{align}
        U_t\triangleq\pr{M,K,Y^{t-1},Z_{t+1}^n};\label{eq:Defi_Ut}
    \end{align}
     \item[$(d)$] follows by introducing a random variable $T$, uniformly distributed over $[1\!:\!n]$, that is independent of all other random variables;
     \item[$(e)$] follows by defining
    \begin{align}
        U\triangleq\pr{U_T,T},\quad Y\triangleq Y_T,\quad Z\triangleq Z_T;\label{eq:Defi_UYZ}
    \end{align}
    \item[$(f)$] follows by defining 
    \begin{align}
    \epsilon\triangleq\max\br{\gamma_n,\delta'_n,\delta_n},\label{eq:Defi_epsil}
    \end{align}where $\delta'_n\triangleq4\delta_n\pr{\log\pr{\abs{\calZ}}+\log\left(\frac{1}{\delta_n}\right)}$.
\end{itemize}
\end{subequations}
We can also lower bound the rate as follows,
\begin{align}
    n(R+R_K)&=\bbH(M,K)\nonumber\\
    &\ge\bbI(M,K;Z^n)\nonumber\\
    &=\sum_{t=1}^n\bbI\pr{M,K;Z_t|Z_{t+1}^n}\nonumber\\
    &\ge\sum_{t=1}^n\bbI\pr{M,K,Z_{t+1}^n;Z_t}-n\delta'_n,\label{eq:Res_Converse_1}
\end{align}where the last inequality follows from \cite[Lemma~VI.3]{Cuff13} noting that $\delta'_n\triangleq4\delta_n\pr{\log\pr{\abs{\calZ}}+\log\left(\frac{1}{\delta_n}\right)}$. Now combining \eqref{eq:Rel_Converse_2} with \eqref{eq:Res_Converse_1} leads to
\begin{align}
    R_K&\ge\frac{1}{n}\sum_{t=1}^n\big[\bbI\pr{M,K,Z_{t+1}^n;Z_t}-\bbI(M,K,Y^{t-1};Y_t)\big]-\gamma_n-\delta'_n\nonumber\\
    &=\frac{1}{n}\sum_{t=1}^n\big[\bbI\pr{M,K,Y^{t-1},Z_{t+1}^n;Z_t}-\bbI\pr{M,K,Y^{t-1},Z_{t+1}^n;Y_t}\nonumber\\
    &\quad-\bbI\pr{Y^{t-1};Z_t|M,K,Z_{t+1}^n}+\bbI\pr{Z_{t+1}^n;Y_t|M,K,Y^{t-1}}\big]-\gamma_n-\delta'_n\nonumber\\
    &\mathop=\limits^{(a)}\frac{1}{n}\sum_{t=1}^n\big[\bbI\pr{M,K,Y^{t-1},Z_{t+1}^n;Z_t}-\bbI\pr{M,K,Y^{t-1},Z_{t+1}^n;Y_t}\big]-\gamma_n-\delta'_n\nonumber\\
    &\mathop=\limits^{(b)}\frac{1}{n}\sum_{t=1}^n\big[\bbI\pr{U_t;Z_t}-\bbI\pr{U_t;Y_t}\big]-\gamma_n-\delta'_n\nonumber\\
    &\mathop=\limits^{(c)}\sum_{t=1}^n\bbP(T=t)\big[\bbI\pr{U_t;Z_t|T=t}-\bbI\pr{U_t;Y_t|T=t}\big]-\gamma_n-\delta'_n\nonumber\\
    &=\bbI\pr{U_T;Z_T|T}-\pr{U_T;Y_T|T}-\gamma_n-\delta'_n\nonumber\\
    &\ge\bbI\pr{U_T;Z_T|T}-\bbI\pr{U_T,T;Y_T}-\gamma_n-\delta'_n\nonumber\\
    &\mathop\ge\limits^{(d)}\bbI\pr{U_T,T;Z_T}-\bbI\pr{U_T,T;Y_T}-\gamma_n-2\delta'_n\nonumber\\
    &\mathop=\limits^{(e)}\bbI\pr{U;Z}-\bbI\pr{U;Y}-\gamma_n-2\delta'_n\nonumber\\
    &\mathop\ge\limits^{(f)}\bbI\pr{U;Z}-\bbI\pr{U;Y}-3\epsilon,\label{eq:Res_Converse_2}
\end{align}where
\begin{itemize}
    \item[$(a)$] follows from K\"orner-Marton-Csisz\'ar sum identity \cite{BCC:IT78};
    \item[$(b)$] follows from \eqref{eq:Defi_Ut};
    \item[$(c)$] follows by introducing a random variable $T$, uniformly distributed over $[1\!:\!n]$, that is independent of all other random variables;
    \item[$(d)$] follows from \cite[Lemma~VI.3]{Cuff13};
    \item[$(e)$] follows from \eqref{eq:Defi_UYZ};
    \item[$(f)$] follows from \eqref{eq:Defi_epsil}. 
\end{itemize}
Now, we have 
\begin{align}
\left\lVert p_Z-q_0\right\rVert_1&=\left\lVert p_{Z_T}-q_0\right\rVert_1\nonumber\\
&=\left\lVert\frac{1}{n}\sum_{t=1}^np_{Z_t}-q_0\right\rVert_1\nonumber\\
&\mathop\leq\limits^{(a)}\frac{1}{n}\sum_{t=1}^n\left\lVert p_{Z_t}-q_0\right\rVert_1\nonumber\\
&\mathop\leq\limits^{(b)}\left\lVert p_{Z^n}-q_0^\on\right\rVert_1\nonumber\\
&\le\delta_n,\label{eq:Covertness_Converse}
\end{align}where $(a)$ follows from the triangle inequality, and $(b)$ follows from the monotonicity of the total variation distance.

The proof of continuity at zero of $\calQ_\epsilon$ is similar to that of \cite[Appendix~F]{Keyless22} and is omitted.

\end{appendices}
\bibliographystyle{IEEEtran}
\bibliography{IEEEabrv,bibfile}

\end{document}

%% file: CommandsAndMacros.tex
\newcommand{\dsI}{\mathds{I}}

\newcommand{\calB}{\mathcal{B}}

\newcommand{\calC}{\mathcal{C}}

\newcommand{\calD}{\mathcal{D}}
\newcommand{\bbD}{\mathbb{D}}

\newcommand{\calE}{\mathcal{E}}
\newcommand{\bbE}{\mathbb{E}}

\newcommand{\calF}{\mathcal{F}}

\newcommand{\calG}{\mathcal{G}}

\newcommand{\calH}{\mathcal{H}}
\newcommand{\bbH}{\mathbb{H}}

\newcommand{\bbI}{\mathbb{I}}

\newcommand{\calK}{\mathcal{K}}

\newcommand{\calL}{\mathcal{L}}

\newcommand{\calM}{\mathcal{M}}

\newcommand{\calN}{\mathcal{N}}

\newcommand{\calP}{\mathcal{P}}
\newcommand{\bbP}{\mathbb{P}}

\newcommand{\calQ}{\mathcal{Q}}

\newcommand{\bbR}{\mathbb{R}}

\newcommand{\calS}{\mathcal{S}}

\newcommand{\calU}{\mathcal{U}}

\newcommand{\calX}{\mathcal{X}}

\newcommand{\calY}{\mathcal{Y}}

\newcommand{\calZ}{\mathcal{Z}}

\newcommand\ubar[1]{\stackunder[1.1pt]{$#1$}{\rule{1.2ex}{.08ex}}}

\def\on{{\otimes n}}

\DeclareMathOperator{\tra}{Tr}
\DeclareMathOperator{\D}{D}

\newcommand{\den}[2]{\ensuremath{\ket{#1}{\hspace{-1.6mm}}\bra{#2}}}

\newtheorem{theorem}{Theorem}
\newtheorem{corollary}{Corollary}
\newtheorem{definition}{Definition}

\newtheorem{remark}{Remark}

\newtheorem{example}{Example}

\newtheorem{lemma}{Lemma}[]

\newcommand{\indic}[1]{\ensuremath{\mathds{1}}}
\newcommand{\card}[1]{\ensuremath{\left\lvert{#1}\right\rvert}}   % Absolute value \newcommand{\expec}[1]{\ensuremath{\mathds{E}}}

\newcommand{\sbra}[2]{\ensuremath{\left[{#1}{\,:\,}{#2}\right]}}%
\newcommand{\sbr}[1]{\ensuremath{\left[{#1}\right]}}%
\newcommand{\pr}[1]{\ensuremath{\left({#1}\right)}}%
\newcommand{\br}[1]{\ensuremath{\left\{{#1}\right\}}}%

%% file: Acronyms.tex
\acrodef{ACDIS}[ACDIS]{Adaptive Communication Decision and Information Systems}
\acrodef{AEP}{Asymptotic Equipartition Property}
\acrodef{AoA}{Angle of Arrival}
\acrodef{AWGN}{Additive White Gaussian Noise}
\acrodef{AVC}[AVC]{Arbitrarily Varying Channel}
\acrodefplural{AVC}{Arbitrarily Varying Channels}
\acrodef{PIR-PNSI}{Private Information Retrieval with Private Noisy Side Information}
\acrodef{BER}{Bit-Error-Rate}
\acrodef{BEC}{Binary Erasure Channel}
\acrodefplural{BEC}{Binary Erasure Channels}
\acrodef{BSC}{Binary Symmetric Channel}
\acrodefplural{BSC}{Binary Symmetric Channels}
\acrodef{BSCO}{Binary Symmetric Channel with Additional ``off'' Symbol}
\acrodefplural{BSCO}{Binary Symmetric Channels with Additional ``off'' Symbols}
\acrodef{FDG}{Functional Dependence Graph}
\acrodef{BPSK}{Binary Phase-Shift Keying}
\acrodef{BICM}[BICM]{Bit-Interleaved Coded-Modulation}
\acrodef{CDF}[CDF]{cumulative distribution function}
\acrodef{CGF}[CGF]{Cumulant Generating Function}
\acrodef{CLT}[CLT]{Central Limit Theorem}
\acrodef{CSI}[CSI]{Channel State Information}
\acrodef{DMC}[DMC]{Discrete Memoryless Channel}
\acrodefplural{DMC}{Discrete Memoryless Channels}
\acrodef{DMQC}[DMQC]{Discrete Memoryless Quantum Channel}
\acrodefplural{DMQC}{Discrete Memoryless Quantum Channels}
\acrodef{DMCQC}[DMCQC]{Discrete Memoryless Classical-Quantum Channel}
\acrodefplural{DMCQC}{Discrete Memoryless Classical-Quantum Channels}
\acrodef{DMS}[DMS]{Discrete Memoryless Source}
\acrodef{ERM}[ERM]{Empirical Risk Minimization}
\acrodef{FER}[FER]{Frame Error Rate}
\acrodef{ICA}[ICA]{Independent Component Analysis}
\acrodef{iid}[i.i.d.]{independent and identically distributed}
\acrodef{IoT}[IoT]{Internet of Things}
\acrodef{KKT}[KKT]{Karush-Kuhn Tucker}
\acrodef{LASSO}[LASSO]{Least Absolute Shrinkage and Selection Operator}
\acrodef{LPD}[LPD]{Low Probability of Detection}
\acrodef{LDPC}[LDPC]{Low-Density Parity-Check}
\acrodef{LLMS}[LLMS]{Linear Least Mean Square}
\acrodef{LMS}[LMS]{Least Mean Square}
\acrodef{MAC}[MAC]{Multiple-Access Channel}
\acrodef{ADSI}[ADSI]{Action-Dependent State Information}
\acrodef{MGF}[MGF]{Moment Generating Function}
\acrodef{MLC}[MLC]{Multi-Level Coding}
\acrodef{MLE}[MLE]{Maximum Likelihood Estimate}
\acrodef{MIMO}[MIMO]{Multiple-Input Multiple-Output}
\acrodef{MISO}{Multiple-Input Single-Output}
\acrodef{MSD}[MSD]{Multi-Stage Decoding}
\acrodef{MMSE}[MMSE]{Minimum Mean-Square Error}
\acrodef{PAC}[PAC]{Probably Approximately Correct}
\acrodef{PCA}[PCA]{Principal Component Analysis}
\acrodef{PDF}[PDF]{probability density function}
\acrodefplural{PDF}{probability density functions}
\acrodef{PMF}[PMF]{Probability Mass Function}
\acrodefplural{PMF}{Probability Mass Functions}
\acrodef{PPM}[PPM]{Pulse Position Modulation}
\acrodef{PSD}{Power Spectral Density}
\acrodef{PSK}{Phase Shift Keying}
\acrodef{QKD}{Quantum Key Distribution}
\acrodef{ROC}{Receiver Operating Characteristic}
\acrodef{CVQKD}{Continuous-Variable \ac{QKD}}
\acrodef{QPSK}{Quadrature Phase-Shift Keying}
\acrodef{RV}{random variable}
\acrodefplural{RV}{random variables}
\acrodef{SIMO}{Single-Input Multiple-Output}
\acrodef{SNR}{Signal-to-Noise Ratio}
\acrodef{SVM}[SVM]{Support Vector Machine}
\acrodef{TPCP}{Trace-Preserving Completely-Positive}
\acrodef{wrt}[w.r.t.]{with respect to}
\acrodef{WSS}{Wide Sense Stationary}
\acrodef{RHS}{right hand side}
\acrodef{LHS}{left hand side}
\acrodef{PIR}{Private Information Retrieval}
\acrodef{MDS}{Maximum Distance Separable}
\acrodef{LLN}{Law of Large Numbers}
\acrodef{DFRC}{Dual-Function Radar Communication}
\acrodef{ISAC}{Integrated Sensing and Communication}
\acrodef{RadCom}{Joint Radar and Communicatins}
\acrodef{POVM}{Positive Operator-Valued Measure}
\acrodefplural{POVM}{Positive Operator-Valued Measures}